% This is samplepaper.tex, a sample chapter demonstrating the
% LLNCS macro package for Springer Computer Science proceedings;
% Version 2.21 of 2022/01/12
%
\documentclass[runningheads]{llncs}
\usepackage[T1]{fontenc}
% T1 fonts will be used to generate the final print and online PDFs,
% so please use T1 fonts in your manuscript whenever possible.
% Other font encondings may result in incorrect characters.
%

\usepackage{amsmath,amssymb}
\usepackage{hyperref}
\usepackage{booktabs}
\usepackage{graphicx}
\usepackage{subcaption}
\usepackage{cite}
\usepackage{colortbl} % For cell background colors
\usepackage[table]{xcolor} % For color gradient
\usepackage{pgf} % For math operations to compute color intensity
\usepackage{graphicx}
\usepackage{multirow}
\newcommand{\qedsymbol}{\hfill \ensuremath{\square}}

% \renewcommand{\qedsymbol}{$\square$}
% Used for displaying a sample figure. If possible, figure files should
% be included in EPS format.
%
% If you use the hyperref package, please uncomment the following two lines
% to display URLs in blue roman font according to Springer's eBook style:
%\usepackage{color}
%\renewcommand\UrlFont{\color{blue}\rmfamily}
%\urlstyle{rm}
%
\begin{document}
\title{Mining Power Destruction Attacks in the Presence of Petty-Compliant Mining Pools}
\titlerunning{Mining Power Destruction Attacks}
\author{Roozbeh Sarenche \and
Svetla Nikova\and
Bart Preneel}
\institute{COSIC, KU Leuven, Belgium
\email{\{roozbeh.sarenche,svetla.nikova,bart.preneel\}@esat.kuleuven.be}}
% If the paper title is too long for the running head, you can set
% an abbreviated paper title here
%
% \author{First Author\inst{1}\orcidID{0000-1111-2222-3333} \and
% Second Author\inst{2,3}\orcidID{1111-2222-3333-4444} \and
% Third Author\inst{3}\orcidID{2222--3333-4444-5555}}
% %
% \authorrunning{F. Author et al.}
% % First names are abbreviated in the running head.
% % If there are more than two authors, 'et al.' is used.
% %
% \institute{Princeton University, Princeton NJ 08544, USA \and
% Springer Heidelberg, Tiergartenstr. 17, 69121 Heidelberg, Germany
% \email{lncs@springer.com}\\
% \url{http://www.springer.com/gp/computer-science/lncs} \and
% ABC Institute, Rupert-Karls-University Heidelberg, Heidelberg, Germany\\
% \email{\{abc,lncs\}@uni-heidelberg.de}}
%
\maketitle              % typeset the header of the contribution
\begin{abstract}
Bitcoin's security relies on its Proof-of-Work consensus, where miners solve puzzles to propose blocks. The puzzle's difficulty is set by the difficulty adjustment mechanism (DAM), based on the network's available mining power. Attacks that destroy some portion of mining power can exploit the DAM to lower difficulty, making such attacks profitable. In this paper, we analyze three types of mining power destruction attacks in the presence of petty-compliant mining pools: selfish mining, bribery, and mining power distraction attacks.
We analyze selfish mining while accounting for the distribution of mining power among pools, a factor often overlooked in the literature. Our findings indicate that selfish mining can be more destructive when the non-adversarial mining share is well distributed among pools. We also introduce a novel bribery attack, where the adversarial pool bribes petty-compliant pools to orphan others’ blocks. For small pools, we demonstrate that the bribery attack can dominate strategies like selfish mining or undercutting.
Lastly, we present the mining distraction attack, where the adversarial pool incentivizes petty-compliant pools to abandon Bitcoin’s puzzle and mine for a simpler puzzle, thus wasting some part of their mining power. Similar to the previous attacks, this attack can lower the mining difficulty, but with the difference that it does not generate any evidence of mining power destruction, such as orphan blocks. 

\keywords{Selfish mining  \and Bribery \and Distraction attack.}
\end{abstract}
\section{Introduction}
Bitcoin~\cite{nakamoto2008bitcoin} was the starting point of the exciting journey into the blockchain and cryptocurrency era. Despite the introduction of many new cryptocurrencies, Bitcoin has maintained the highest market capitalization~\cite{Bitcoin_market} and remains the most famous cryptocurrency. Any serious attack that threatens Bitcoin's progress not only harms Bitcoin users but also impacts users of other cryptocurrencies and blockchain platforms. Although Bitcoin has operated smoothly since its emergence, the absence of a serious attack thus far does not guarantee that its security will not be compromised in the future. Since the introduction of Bitcoin, numerous research papers~\cite{garay2015bitcoin, garay2017bitcoin, badertscher2018but} have analyzed its security and explored potential attacks~\cite{eyal2018majority, sapirshtein2017optimal, bag2016bitcoin, liao2017incentivizing} that could target it. Identifying potential flaws in Bitcoin's underlying mechanism and studying possible solutions can not only benefit Bitcoin's progress but also serve as inspiration for new cryptocurrencies.

Several attacks have been introduced in the literature aimed at destroying the efforts of other miners in producing valid blocks, which we refer to as \emph{mining power destruction attacks}. Examples include selfish mining~\cite{eyal2018majority, sapirshtein2017optimal}, block withholding~\cite{rosenfeld2011analysis,eyal2015miner}, power adjusting withholding and bribery selfish mining~\cite{gao2019power}, block denial of service (BDoS)~\cite{mirkin2020bdos}, undercutting~\cite{carlsten2016instability}, eclipse attack~\cite{nayak2016stubborn}, Pitchforks~\cite{judmayer2018pitchforks}, and script puzzle distraction attack~\cite{teutsch2016cryptocurrencies}. While these attacks differ in their specific methods, they all share the common goal of destroying the mining power of other miners. The intuition behind the profitability of these attacks is the principle that mining rewards are distributed among the participating mining powers in proportion to their contributions to the chain extension. By destroying another miner's share, the adversary increases its own relative contributions, which consequently boosts its revenue.
No mining power destruction attacks can be more profitable than following the honest strategy during the initial difficulty epoch~\cite{grunspan2018profitability, sarenche2024time}. The adversary must wait for the subsequent difficulty adjustment mechanism to observe the impact of these attacks on its revenue. In contrast to attacks like double spending, where the adversary receives revenue immediately upon a successful attack, mining power destruction attacks must be continued for at least two weeks to become profitable.

In this paper, we analyze mining power destruction attacks in Bitcoin within the context of petty-compliant mining pools, referred to as a semi-rational setting. The analysis of these attacks in a semi-rational setting differs in two important ways from their analysis in the presence of altruistic mining pools. First, in semi-rational settings, we cannot naively assume that mining pools will always follow honest behavior, especially when they are the victims of an attack. As petty-compliant pools, they may choose to deviate from the honest strategy to defend against mining power destruction attacks and protect their mining efforts. Second, an adversarial mining pool in such settings may carry out incentive manipulation attacks~\cite{judmayer2021sok}, such as bribery, to convince non-victim petty-compliant mining pools to adopt its desired strategy, thereby increasing the success likelihood for its mining power destruction attacks. This potential for incentive manipulation allows adversarial mining pools even with a limited share of mining power to effectively attack the network. In this paper, we present a selfish mining analysis from a new perspective and introduce two novel mining power destruction attacks: the bribery and the mining power distraction attacks.

\textbf{Selfish mining attack (Section~\ref{sec: selfish_mining}):} 
Selfish mining, introduced by Eyal et al.~\cite{eyal2018majority}, is a well-studied attack in the context of Bitcoin. Numerous papers~\cite{eyal2018majority,sapirshtein2017optimal,grunspan2018profitability,zur2020efficient,bar2022werlman,sarenche2024deep} have analyzed the impact of selfish mining on the adversarial block ratio and profitability. However, most of these studies assume that, apart from the adversarial mining nodes, all the remaining mining nodes in the network follow the honest fork choice rule, meaning they always mine on the longest chain or the earliest of the longest forks. Carlsten et al.~\cite{carlsten2016instability} introduced the concept of petty-compliant mining nodes, highlighting that, in a fork race, these mining nodes choose the fork that offers the highest return.\footnote{\cite{carlsten2016instability} uses the concept of petty-compliant nodes in the undercutting attack analysis, overlooking them in the selfish mining analysis.} To the best of our knowledge, the first analysis of selfish mining in the presence of petty-compliant mining nodes was conducted by the authors of~\cite{bar2023deep}. The authors in~\cite{bar2023deep} showed that in a semi-rational setting, the mining share threshold for a profitable deviation from the honest strategy is reduced.

Despite considering petty-compliant mining nodes, the analysis in~\cite{bar2023deep} is based on an implicit assumption that each mining node controls only an infinitesimal share of the total mining power, thereby overlooking the presence of any petty-compliant \emph{mining pools}. A mining pool can be viewed as a group of infinitesimal mining nodes working together.
Assuming no petty-compliant mining pools exist, the analysis can simply presume that, during a fork race, if the adversary places a bribe on top of the adversarial fork, almost all the network's mining power will mine on top of the adversarial fork. This is because the mining nodes that do not adopt the adversarial fork are only those that own a block in the non-adversarial fork, with a total mining share that is infinitesimal.
However, when considering the existence of mining pools, the presence of a bribe offered by the adversary does not guarantee that all the network's mining power will mine on top of the adversarial fork. If a mining pool has already mined a block in the non-adversarial fork, it will continue to mine on top of that fork with its entire mining power, which can no longer be assumed to be infinitesimal.
Figure~\ref{fig:selfish_mining_setings} intuitively illustrates the mining share distributions between forks in a fork race, in the altruistic setting, in the presence of infinitesimally small petty-compliant mining nodes, and in the presence of petty-compliant mining pools.

In this paper, we analyze selfish mining in the presence of petty-compliant mining pools and examine the effect of mining power decentralization on selfish mining profitability. Our theoretical and Markov Decision Process (MDP)-based analysis shows that selfish mining is more destructive in settings where the non-adversarial mining pools are more decentralized. This suggests that the greater the gap in mining power share between the adversarial mining pool and the other pools, the more profitable selfish mining becomes.
\begin{figure*}[t] % Use [htbp] for better placement
    \centering
    \begin{subfigure}[t]{0.26\textwidth} % Adjusted width to 0.3
        \centering
        \includegraphics[width=\linewidth]{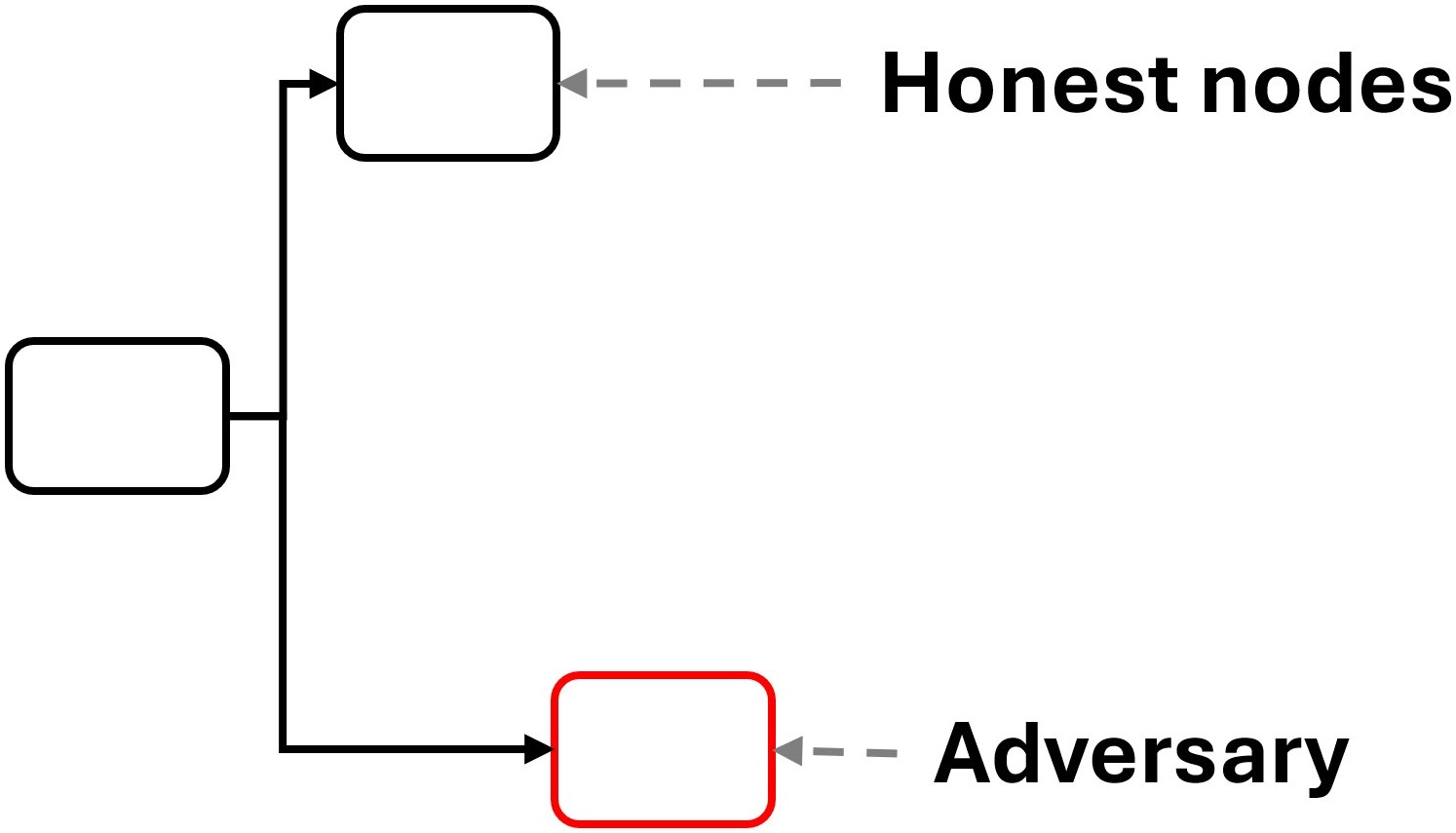} % Use width to fill the subfigure
        \caption{Altruistic setting: $\alpha_\mathcal{A}^\texttt{fork} = \alpha_\mathcal{A}$.}
        \label{fig:largest_selfish}
    \end{subfigure}%
    \hfill % Adds space between subfigures
    \begin{subfigure}[t]{0.37\textwidth} % Adjusted width to 0.3
        \centering
        \includegraphics[width=\linewidth]{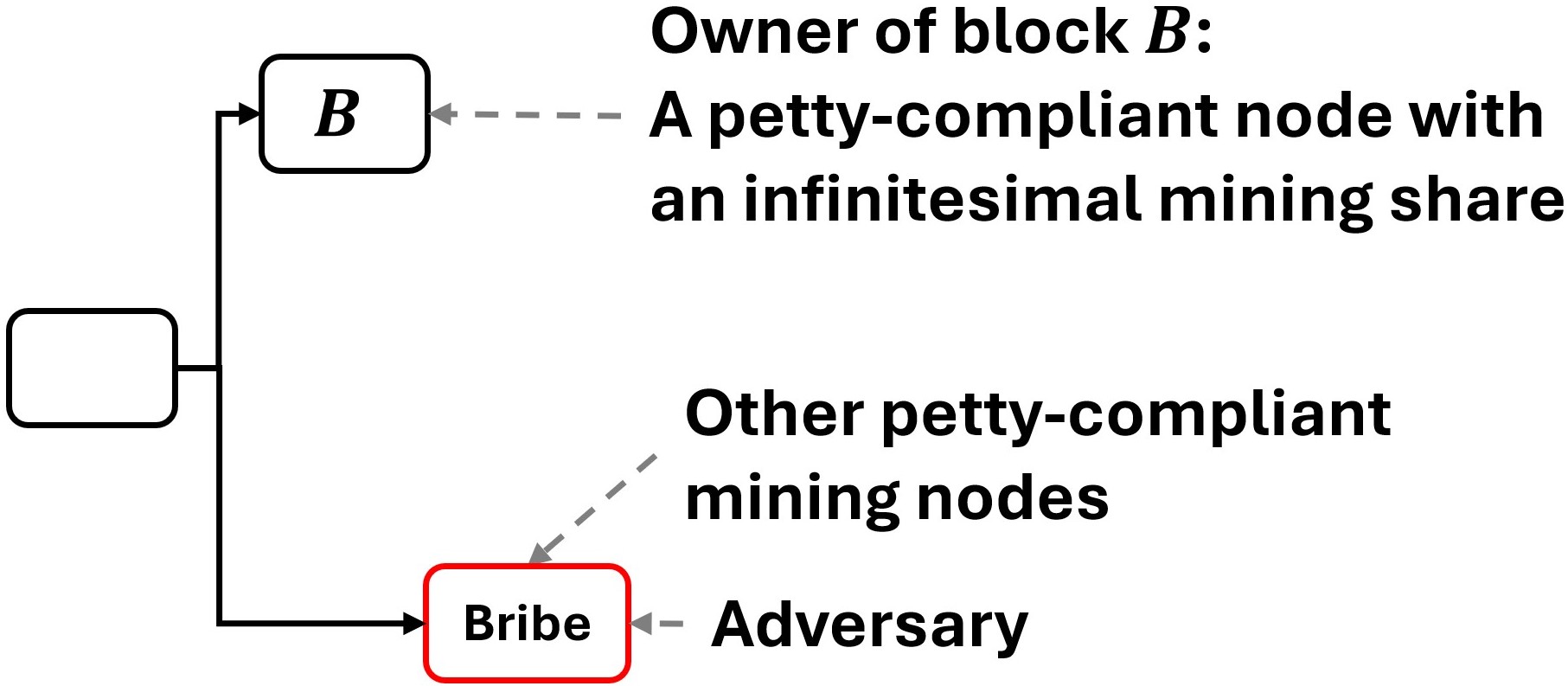} % Use width to fill the subfigure
        \caption{Semi-rational setting with infinitesimal mining nodes: \\ $\alpha_\mathcal{A}^\texttt{fork} \approx 1$.}
        \label{fig:min_mining_share}
    \end{subfigure}%
    \hfill % Adds space between subfigures
    \begin{subfigure}[t]{0.34\textwidth} % Adjusted width to 0.3
        \centering
        \includegraphics[width=\linewidth]{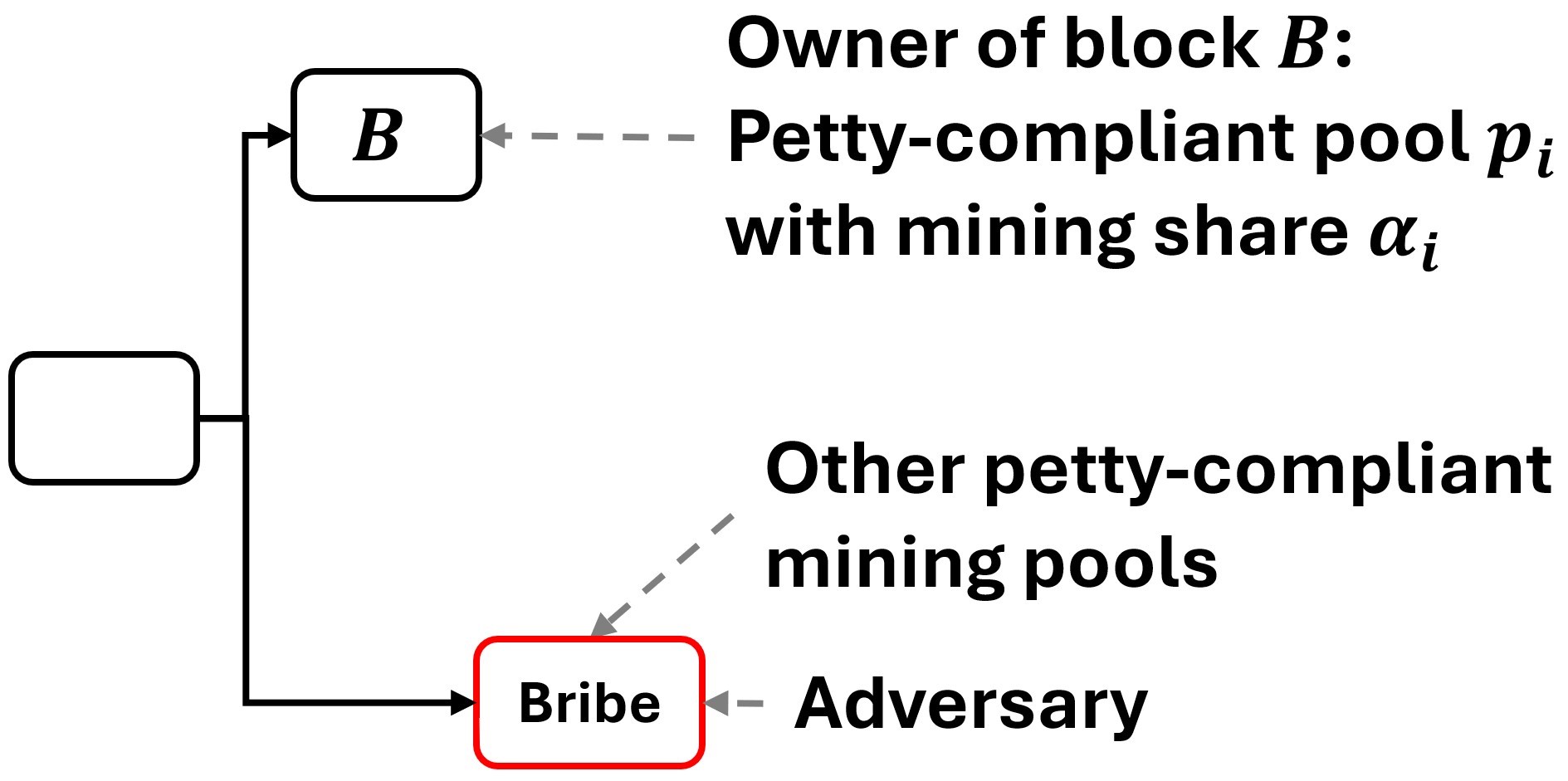} % Use width to fill the subfigure
        \caption{Semi-rational setting with mining pools: $\alpha_\mathcal{A}^\texttt{fork} = 1 - \alpha_i$.}
        \label{fig:largest_selfish_2}
    \end{subfigure}%
    \caption{Mining share distribution between forks in a block race. The adversarial block, denoted in red, is published later than the rival block in the block race. We denote by $\alpha_\mathcal{A}$, $\alpha_i$, and $\alpha_\mathcal{A}^\texttt{fork}$ the mining shares of the adversarial mining pool, mining pool $p_i$, and all mining nodes extending the adversarial fork, respectively.}
 % Main caption for the entire figure
    \label{fig:selfish_mining_setings}
    \vspace{-10 pt}
\end{figure*}

\textbf{Bribery attack (Section~\ref{sec: bribery}):}
In the blockchain literature, various bribery attacks have been introduced for different purposes. These attacks can range from attempts to censor a single transaction, as seen in~\cite{nadahalli2021timelocked, tsabary2021mad}, to efforts aimed at rewriting the history of blockchain blocks to facilitate a successful double-spending attack~\cite{judmayer2019pay, liao2017incentivizing, bonneau2016buy, mccorry2019smart}. The former requires a relatively small bribe to incentivize miners not to include a specific transaction, while the latter necessitates a substantial budget to persuade miners to abandon the longest chain and mine on top of a block that is deep within the chain.
In this paper, we introduce a novel bribery attack aimed at destroying mining hash power. In our attack, the adversarial mining pool bribes petty-compliant mining pools to orphan the blocks of other mining pools, with the sole intention of wasting a portion of their mining power. Orphaning these blocks can reduce mining difficulty, resulting in revenue that compensates for the bribe paid. Our bribery attack is similar to selfish mining, except that instead of gambling on its own block to orphan another, the adversary pays a bribe less than the block reward to orphan a block. The advantage of the bribery attack over attacks such as selfish mining and feather forking is that it is risk-free for the adversarial pool, meaning the attacker does not risk losing its block or bribing budget. Additionally, a mining pool with any arbitrary share can always exploit other pools with a lesser share, making the attack profitable even for small mining pools.

\textbf{Distraction attack (Appendix~\ref{sec:distraction}):}
As discussed in the literature~\cite{sarenche2024time, grunspan2018profitability}, modifying the Bitcoin Difficulty Adjustment Mechanism (DAM) to adjust difficulty based on the total active mining power—both wasted and effective—can mitigate mining power destruction attacks that generate valid proof of mining power destruction. For instance, selfish mining and the introduced bribery attack become unprofitable under such a modified DAM, as the orphan blocks generated during these attacks serve as evidence of mining power destruction.
However, there are other types of mining strategies and attacks discussed in the literature, such as smart mining~\cite{goren2019mind}, coin hopping~\cite{kwon2019bitcoin, ilie2021unstable, grunspan2023profit}, and distraction attacks~\cite{teutsch2016cryptocurrencies}, which can destroy mining power without leaving evidence of destruction. In smart mining, a miner alternates between being idle and mining honestly, while in coin hopping, the miner switches between mining on different networks. Both strategies aim to manipulate and reduce the mining difficulty level.
In distraction attacks, the adversary incentivizes miners to stop mining the Bitcoin puzzle and instead mine an alternative one. An example of a distraction attack is the script puzzle~\cite{teutsch2016cryptocurrencies}, where the adversary bribes miners through a smart contract-based puzzle to divert them from the Bitcoin chain. The goal of the script puzzle attack is to facilitate a double-spending attack or gain majority control of the network, which requires both a significant share of mining power and a substantial bribe (equivalent to six Bitcoin block rewards) for successful execution. 
These strategies and attacks can target Bitcoin without leaving any evidence, making them difficult to mitigate without relying on external trusted platforms.

In this paper, we introduce a novel distraction attack that aims to destroy a portion of mining power without leaving any evidence of power destruction. In our attack, the adversarial mining pool publishes a Proof-of-Work (PoW) puzzle with lower difficulty on another platform and incentivizes petty-compliant mining pools to mine the lower-difficulty puzzle. This attack can be carried out with minimal bribes, less than the value of a block reward.

\section{Preliminaries and System Model} \label{sec:system model}
In our system model, time is divided into smaller units referred to as rounds. We use $\lambda$ to denote the block mining rate of the network, representing the number of blocks mined per unit of time. 
% We assume that the block reward is fixed and equal to $R$.
We denote by $\texttt{Rev}_i(t; \pi)$ and $\texttt{Cost}_i(t; \pi)$ the revenue and the cost of mining pool $p_i$ in round $t$ under strategy $\pi$, respectively. If all the mining pools follow the honest strategy, the average per-round revenue of the mining pool $p_i$ with mining share $\alpha_i$ is equal to $\alpha_i \cdot \lambda R$, where $R$ denotes the block reward. If mining pool $p_i$ mines with its whole mining power, its average mining cost per round is equal to $\alpha_i \cdot c_i$, where $c_i$ denotes the average normalized mining cost of pool $p_i$ per round. 
\begin{definition}[Time-averaged profit] \label{def_profit}
    The time-averaged profit (per-round profit) of pool $p_i$ following strategy $\pi$ is defined as follows:
    \begin{equation}
        \texttt{Profit}_i^{t}(\pi) = {\frac{\sum_{t'=0}^{t-1} {\Big(\texttt{Rev}_i(t'; \pi)-\texttt{Cost}_i(t'; \pi)}\Big)}{t}},
        \;
        \texttt{Profit}_{i}(\pi) =\lim_{t\to\infty} {\texttt{Profit}_{i}^{t}}(\pi) \enspace.
    \end{equation}
\end{definition}

We define honest, petty-compliant, and adversarial mining pools as follows.
\begin{definition}[Honest mining pool]
    An honest mining pool is defined as a mining pool that i) always chooses the longest chain available in its view (in case of a tie, it chooses the block that was seen first) as its canonical chain to mine on top of, and ii) once it mines a new block, it immediately publishes the block to all other mining pools.
\end{definition}
To define petty-compliant mining pools, we first define the chain expected return and semi-rational fork choice rule.
\begin{definition}[Chain expected return]
    Let $\pi_C$ and $r_C$ denote the strategy of mining on top of a given chain $C$, and its expected return, respectively. The expected return $r_C$ is defined as follows:
    \begin{equation}
        r_C = \sum_{t=0}^{\infty} \gamma^{t} \texttt{Rev}_i(t; \pi_C) \enspace ,
    \end{equation}
    where $\gamma$ is a decaying factor set by each mining pool.\footnote{The presence of $\gamma$ in the expected return definition makes it subjective to each mining pool's view of profitability. This reflects reality, as some pools prefer immediate rewards (low $\gamma$), while others prioritize long-term profit (with $\gamma$ close to 1). In our paper analysis, we assume $\gamma$ is close to 1 and that petty-compliant pools do not account for the impact of difficulty adjustment on the chain expected return.}
\end{definition}
For a given chain $C$, we denote by $|C|$ the length of the chain.
\begin{definition}[Semi-rational fork choice rule]
    Let $C^\texttt{long}$ denote the longest chain available in the view of petty-compliant mining pool $p_i$ (in case there are multiple chains of equal length, consider the one that was seen first by the mining pool). Also. let $\texttt{CH}^D$ denote the set of all the chains $C$ in the view of petty-compliant mining pool $p_i$ that satisfy $|C^\texttt{long}| - |C| \le D$. The $(\epsilon, D)$-semi-rational fork choice rule to select the canonical chain is defined as follows:
    If there is a chain $C^* \in \texttt{CH}^D$ that satisfies the following conditions:
    \begin{enumerate}
        \item $r_{C^*} \ge r_{C}$ for all $C \in \texttt{CH}^D$, and
        \item $r_{C^*} - r_{C^\texttt{long}} > \epsilon \alpha_i R$,
    \end{enumerate}
    then $C^*$ is the canonical chain. Otherwise, $C^\texttt{long}$ is the canonical chain.
\end{definition}
We refer to $\epsilon$ as the incentivizing factor, representing the threshold of normalized loss a mining pool tolerates before deviating from the honest strategy. A discussion of the parameters $\epsilon$ and $D$, as well as their role in the semi-rational fork choice definition is provided in Appendix~\ref{appendix_incentivization_factor}.

\begin{definition}[Petty-compliant mining pool]
    An $(\epsilon, D)$-petty-compliant mining pool is defined as a mining pool that i) follows the $(\epsilon,D)$-semi-rational fork choice rule to select the canonical chain to mine on top of, and ii) once it mines a new block, it immediately publishes the block to all other mining pools unless it is incentivized not to do so.
\end{definition}
In our paper, petty-compliant pools are restricted to specific strategies defined for each attack and cannot act arbitrarily.
\begin{definition}[Adversarial mining pool]
   The adversarial mining pool may arbitrarily deviate from the honest strategy (for example, by delaying the publication of its blocks) or execute an incentive manipulation attack to induce petty-compliant mining pools to deviate from the honest strategy.
\end{definition}

\begin{definition}[Semi-rational environment]
    An $(\epsilon, D)$-semi-rational environment is an environment in which any non-adversarial mining pool is an $(\epsilon', D)$-petty-compliant mining pool, where $\epsilon' \leq \epsilon$. If $D = \infty$, we simply denote the environment as the $\epsilon$-semi-rational environment.
\end{definition}

\noindent\textbf{System model} We assume our model operates within an $(\epsilon,D)$-semi-rational environment. The system comprises a set of petty-compliant mining pools, denoted by $p_i$ for $i$ in ${1,2, \cdots, N}$, alongside an adversarial mining pool denoted by $p_\mathcal{A}$. We denote by $\alpha_i$ and $\alpha_\mathcal{A}$ the mining power share of the $i^\text{th}$ petty-compliant mining pool and the adversarial mining pool, respectively, where $\alpha_\mathcal{A} + \sum_{i=1}^{N}{\alpha_i} = 1$. Our model assumes a fixed block reward of $R$; however, the adversarial mining pool may also place a bribe on top of any block. A discussion of the limitations of our system model is provided in Appendix~\ref{appendix:model_limitations}.

\begin{definition}[Normalized bribe] \label{Def: norm_bribe}
    The normalized bribe is defined as the amount of bribe divided by the block reward $R$.
\end{definition}
According to Definition~\ref{Def: norm_bribe}, a normalized bribe of $\texttt{br}$ implies that the total amount of the bribe is equal to $\texttt{br}R$.

\section{Selfish Mining Attack} \label{sec: selfish_mining}
In this section, we analyze selfish mining in the presence of petty-compliant mining pools. 
% In this attack, the adversarial pool mines a private fork, while petty-compliant pools mine on the public chain (the non-adversarial fork). 
Selfish mining results in a fork race between the adversarial and non-adversarial forks, with only one eventually being included in the canonical chain. 
% The adversary’s goal is to win this race and orphan the non-adversarial fork, thereby reducing the network's effective mining power and increasing its own profitability after difficulty adjustment.
The presence of petty-compliant mining pools offers some benefits to an adversarial pool. During a fork race, the adversary can bribe these pools to mine on the adversarial fork, boosting its chances of winning. 
% In an environment with the incentivizing factor $\epsilon$ and block reward $R$, if the bribe placed on the adversarial fork increases its expected return by $\epsilon \alpha_i R$ compared to the expected return of the non-adversarial fork, a petty-compliant mining pool with a mining share of $\alpha_i$ becomes incentivized to mine on the adversarial fork.
The adversary can use in-band methods like whale transactions~\cite{liao2017incentivizing} (discussed in Appendix~\ref{appendix: whale}) or out-of-band methods like smart contracts~\cite{velner2017smart, mccorry2019smart} to bribe miners. It may also leave transaction fees in mempool as bribes for others~\cite{bar2023deep, carlsten2016instability}. 
% Whale transactions for bribery are discussed in Appendix~\ref{appendix: whale}.

Analyzing selfish mining in the presence of petty-compliant mining pools requires revisiting the selfish mining analysis typically applied to altruistic settings or semi-rational settings with infinitely many infinitesimal mining nodes. Under the altruistic assumption, the adversarial mining pool is guaranteed to win the fork race if its fork is longer than the competing fork. Additionally, if the adversarial pool propagates its fork faster, it increases its chances of winning a same-height fork race. In semi-rational settings with infinitely many infinitesimal mining nodes, the analysis may assume that the adversary can incentivize \textbf{all} petty-compliant nodes to abandon the non-adversarial fork with a bribe, as the mining share of infinitesimal miners in the non-adversarial fork is negligible.

However, the scenarios that can actually occur when selfish mining takes place in practice, particularly in the presence of petty-compliant mining pools, may differ significantly from the analyses conducted under the simplified assumptions of altruistic and infinitesimal miners. In the presence of petty-compliant mining pools, the adversary is not necessarily guaranteed to win the fork race solely based on the length of its fork, the speed at which it propagates its fork, or even the minimal bribe placed on its fork. Petty-compliant mining pools select the fork to mine based on its expected return, where the longest, fastest-propagated, or bribed chain may not always offer the highest expected return. For instance, consider a fork race between the non-adversarial fork with length $l_{\overline{\mathcal{A}}}$ and the adversarial fork with length $l_\mathcal{A} > l_{\overline{\mathcal{A}}}$. If the petty-compliant mining pool $p_i$ has $n>0$ blocks in the non-adversarial fork, although the non-adversarial fork is shorter than the adversarial fork, $p_i$ may still be incentivized to continue mining on top of the non-adversarial fork to revive its $n$ blocks included in it. This example suggests that in the presence of petty-compliant mining pools, the strategy employed by the selfish pool and its profitability differ from those in altruistic settings or in environments with infinitely many infinitesimal nodes.
% \begin{definition}[Normalized bribe] \label{Def: norm_bribe}
%     The normalized bribe is defined as the amount of bribe divided by the block reward $R$.
% \end{definition}

% According to Definition~\ref{Def: norm_bribe}, a normalized bribe of $\texttt{br}$ implies that the total amount of the bribe is equal to $\texttt{br}R$.
\subsection{Theoretical Analysis}
To analyze the selfish mining attack in the presence of petty-compliant mining pools, we first introduce metrics to assess the distribution of mining power among these pools.
\begin{definition}[Centralization factor]
    Let $\mathcal{P} = \{p_1, p_2, \cdots, p_N\}$ denote the set of all mining pools available in the environment. Also, let $\alpha_{p_i}$ denote the corresponding mining power share of mining pool $p_i \in \mathcal{P}$. The centralization factor, which is denoted by $\beta$, is defined as follows:
    \begin{equation}
        \beta = \sum_{p_i \in \mathcal{P}} {\alpha_{p_i}}^2 \enspace .
    \end{equation}
\end{definition}
The centralization factor $\beta$ can take on values in the range of $(0,1)$, with higher $\beta$ values indicating a more centralized network. In a fully decentralized network, $\beta$ approaches $0$. Within a network where the maximum mining share of its mining pools is denoted by $\alpha$, the centralization factor $\beta$ is less than or equal to $\alpha$, with the upper bound case of $\beta = \alpha$ occurring when all mining pools possess a mining share exactly equal to $\alpha$.

\begin{definition}[Residual centralization factor, Pool advantage] \label{def:residual_CF}
    Let $\mathcal{P}_{\overline{i}}= \{p_1, p_2, \cdots, p_N\} \setminus \{p_i\}$ denote the set of all mining pools excluding mining pool $p_i$ whose mining power share is denoted by $\alpha_{p_i}$. Also, let $\alpha_{p_j}$ denote the corresponding mining power share of mining pool $p_j \in \mathcal{P}_{\overline{i}}$. The residual centralization factor w.r.t. mining pool $p_i$, which is denoted by $\beta_i$, is defined as follows:
    \begin{equation}
        \beta_i = \frac{\sum_{p_j \in \mathcal{P}_{\overline{i}}} {\alpha_{p_j}}^2}{1-\alpha_{p_i}} \enspace .
    \end{equation}
    The mining advantage of pool $p_i$ is defined to be $1-\beta_i$.
\end{definition}
In the following, we demonstrate that a mining pool with a lower residual centralization factor (i.e., higher mining advantage) has a higher chance of conducting a successful selfish mining attack. To get the intuition of why pool advantage is defined as above, we first review the following lemma.
\begin{lemma}\label{lemma: selfish_mining_advantage}
    Consider a fork race within an $\epsilon$-semi-rational environment, where the length of both semi-rational and adversarial forks is equal to 1, and a normalized bribe\footnote{The normalized bribe is defined as the amount of bribe divided by the block reward.} of $\texttt{br}=\epsilon$ is available on top of the adversarial fork. The probability of the event that the next block is mined on top of the adversarial fork is equal to the mining advantage of the adversarial mining pool.
\end{lemma}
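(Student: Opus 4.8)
The plan is to identify which mining pools, in a fork race of this specific configuration, will choose the adversarial fork versus the non-adversarial fork under the $(\epsilon, D)$-semi-rational fork choice rule, and then sum the mining shares of the pools committed to the adversarial side — this sum equals $\alpha_\mathcal{A}^{\texttt{fork}}$, the probability the next block extends the adversarial fork. First I would fix notation: let $p_i$ be the (unique, since both forks have length 1) non-adversarial pool that mined the block in the non-adversarial fork. Every other non-adversarial pool $p_j$ ($j \neq i$) has no block in either fork.

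**The per-pool decision analysis:**

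Next I would carry out the fork-choice computation for each type of pool. For a pool $p_j$ with no blocks in either fork, mining on top of the adversarial fork yields expected return that includes the normalized bribe $\texttt{br} = \epsilon$ sitting on top of that fork, whereas mining on $C^{\texttt{long}}$ (taken, say, to be the non-adversarial fork as the first-seen chain, or either — the point is the non-bribed fork) yields no such bonus; since $C^{\texttt{long}}$ and the adversarial chain are both in $\texttt{CH}^D$ (the height gap is $0 \le D$), condition (2) of the semi-rational rule requires $r_{C^*} - r_{C^{\texttt{long}}} > \epsilon \alpha_j R$, and the bribe contributes exactly $\epsilon \alpha_j R$ in expectation to $p_j$'s return on the adversarial fork (the pool wins the bribe with probability proportional to $\alpha_j$ in the continuation, with $\gamma \approx 1$), so — modulo the precise tie-handling, which is the delicate point — pool $p_j$ switches to the adversarial fork. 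For the pool $p_i$ that owns the block in the non-adversarial fork, mining on the non-adversarial fork additionally revives its block reward $R$, which dominates the bribe $\epsilon R$ (since $\epsilon < 1$), so $p_i$ stays on the non-adversarial fork. Hence the pools extending the adversarial fork are $p_\mathcal{A}$ together with all $p_j$, $j \neq i$, so $\alpha_\mathcal{A}^{\texttt{fork}} = \alpha_\mathcal{A} + \sum_{j \neq i}\alpha_j = 1 - \alpha_i$. Finally I would take the expectation over which pool $p_i$ is the owner of the non-adversarial block: conditioned on a non-adversarial pool having mined it, $p_i = p_k$ with probability $\alpha_k / (1 - \alpha_\mathcal{A})$, so
\begin{equation}
\mathbb{E}[\alpha_\mathcal{A}^{\texttt{fork}}] = \sum_{k=1}^{N} \frac{\alpha_k}{1-\alpha_\mathcal{A}} (1 - \alpha_k) = 1 - \frac{\sum_{k=1}^N \alpha_k^2}{1-\alpha_\mathcal{A}} = 1 - \beta_\mathcal{A},
\end{equation}
which is exactly the mining advantage of $p_\mathcal{A}$ by Definition~\ref{def:residual_CF}.

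**Main obstacle:**

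The hard part will be justifying the knife-edge case precisely: with $\texttt{br}$ set to exactly $\epsilon$, the incentive inequality $r_{C^*} - r_{C^{\texttt{long}}} > \epsilon \alpha_j R$ is (in the idealized $\gamma \to 1$ limit, ignoring difficulty effects as the footnote permits) an equality rather than a strict inequality, so one must argue that the bribe-carrying fork is the one chosen — either by treating $\texttt{br} = \epsilon$ as shorthand for $\texttt{br} = \epsilon + \delta$ for arbitrarily small $\delta > 0$, or by adopting the convention that the adversarial fork, carrying the bribe, is the first-seen / preferred chain $C^{\texttt{long}}$ so that a non-bribed pool's default is already the adversarial fork and $p_i$ needs a strict improvement to leave it (which it has, via the revived block reward $(1-\epsilon)\alpha_i R > 0$). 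I would state this tie-breaking convention explicitly at the start of the proof and verify that under it every computation above goes through with the correct strict/non-strict inequalities; the remaining steps are then routine expectation arithmetic.
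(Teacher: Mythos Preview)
Your proposal is correct and follows essentially the same approach as the paper: condition on which pool $p_i$ owns the non-adversarial block (with probability $\alpha_i/(1-\alpha_\mathcal{A})$), observe that only $p_i$ stays on that fork while everyone else mines the adversarial fork, and average to get $1-\beta_\mathcal{A}$. The paper computes the complementary probability $\beta_\mathcal{A}$ and subtracts from $1$, but the argument is identical; your extra discussion of the knife-edge case $\texttt{br}=\epsilon$ is a point the paper glosses over, so you are being slightly more careful than necessary.
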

The proof of Lemma~\ref{lemma: selfish_mining_advantage} is presented in Appendix~\ref{appendix: selfish_mining_advantage}.
% \begin{theorem} \label{theorem_D=1_selfish}
%     Assume an $(\epsilon, D)-$environment with an arbitrarily $\epsilon$ and $D=1$, selfish mining for a mining pool with mining share $\alpha_\mathcal{A}$ and pool advantage $1-\beta_\mathcal{A}$ can dominate the honest mining if the following inequality holds:
%     \begin{equation}
%         \beta_\mathcal{A} < \frac{\alpha_\mathcal{A}(1-\alpha_\mathcal{A})- \epsilon (1-2\alpha_\mathcal{A})(1-\alpha_\mathcal{A})}{(1-2\alpha_\mathcal{A})(1-\epsilon)} \enspace .
%     \end{equation}
%     and the time-averaged profit of the adversarial mining pool after mining difficulty convergence can be obtained as follows:
%     \begin{equation}
%         r = \frac{2\alpha_\mathcal{A}^4-5\alpha_\mathcal{A}^3+4\alpha_\mathcal{A}^2+\alpha_\mathcal{A}(1-\alpha_\mathcal{A})(1-\alpha_\mathcal{A}-\beta_\mathcal{A})(1-\epsilon)}{\alpha_\mathcal{A}^3-\alpha_\mathcal{A}^2+1}
%     \end{equation}
% \end{theorem}
% For $\epsilon = 0$, selfish mining can dominate honest mining if $\beta_\mathcal{A} < \frac{\alpha_\mathcal{A}}{1-2\alpha_\mathcal{A}}$. Theorem~\ref{theorem_D=1_selfish} holds under the assumption of an $(\epsilon, D=1)$-environment, where all petty-compliant mining pools are assumed to abandon their fork if it lags behind by even a single block. However, this assumption contradicts the semi-rationality of the mining pools, indicating the need to extend Theorem~\ref{theorem_D=1_selfish} to cases with arbitrary $D$.
In Definition~\ref{def:strategy} presented in Appendix~\ref{appendix_slefish_mining_strategy}, we introduce a simple selfish mining strategy $\pi^\texttt{selfish}$ suitable for an $(\epsilon, D=1)$-semi-rational environment. As already discussed, strategies designed for an altruistic environment cannot be easily applied in a semi-rational setting. The outcome of each action that the adversarial mining pool takes in the semi-rational environment must be evaluated from the perspective of all petty-compliant mining pools. The following theorem examines the effect of the adversarial centralization factor on the profitability of strategy $\pi^\texttt{selfish}$.

\begin{theorem} \label{theorem_D=inf_selfish}
    Assume an $(\epsilon, D=1)$-semi-rational environment\footnote{$D=1$ implies that the best chain in a mining pool view is a chain that is at most one block shorter than the longest chain.} in which the mining share of each pool, excluding the adversarial pool, is less than $0.4302$. In this environment, the selfish mining strategy $\pi^\texttt{selfish}$ for an adversarial mining pool with mining share $\alpha_\mathcal{A}$ and residual centralization factor $\beta_\mathcal{A}$ can dominate the honest mining if the following inequality holds:
    \begin{equation} \label{eq:selfish_profitability}
        \beta_\mathcal{A} < \frac{\alpha_\mathcal{A}- \epsilon (1-\alpha_\mathcal{A})^2}{(1-\alpha_\mathcal{A})(1-\epsilon)} \enspace .
    \end{equation}
    % and the time-averaged profit\footnote{We ignore the mining cost.} of the adversarial mining pool after mining difficulty adjustment can be obtained as follows:
    % \begin{equation}
    %     \texttt{Profit}_{\mathcal{A}}\big(\pi^\texttt{selfish}\big) = \frac{2\alpha_\mathcal{A}^4-5\alpha_\mathcal{A}^3+4\alpha_\mathcal{A}^2+\alpha_\mathcal{A}(1-\alpha_\mathcal{A})(1-\alpha_\mathcal{A}-\beta_\mathcal{A})(1-\epsilon)}{\alpha_\mathcal{A}^3-\alpha_\mathcal{A}^2+1} \lambda R \enspace .
    % \end{equation}
\end{theorem}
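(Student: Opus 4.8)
The plan is to compare the time-averaged profit of the selfish mining strategy $\pi^\texttt{selfish}$ against that of honest mining, with the key observation being that both must be evaluated across a difficulty adjustment epoch, so what matters is the \emph{effective mining power} the adversary destroys relative to its own contribution. First I would set up the per-epoch accounting: under honest mining the adversary earns $\alpha_\mathcal{A} \lambda R$ per round at cost $\alpha_\mathcal{A} c_\mathcal{A}$, while under $\pi^\texttt{selfish}$ the adversary's block ratio in the canonical chain is boosted but some mining power (its own, on orphaned blocks, plus the bribes it must pay) is wasted. The crucial quantity is the probability that, after the adversary withholds a block and a competing block appears, the next block extends the adversarial fork — and by Lemma~\ref{lemma: selfish_mining_advantage}, when the adversary places a normalized bribe $\texttt{br} = \epsilon$ this probability is exactly the adversarial mining advantage $1 - \beta_\mathcal{A}$. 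This is why $\beta_\mathcal{A}$ enters the bound: it controls the win probability of the one-block-versus-one-block race that $\pi^\texttt{selfish}$ repeatedly induces.

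Next I would walk through the state transitions of $\pi^\texttt{selfish}$ in the $(\epsilon, D=1)$ environment. Because $D=1$, the adversary's lead is capped — it can hold at most one hidden block before being forced to either publish and race or extend; this keeps the Markov chain small (states like "no lead," "one hidden block," "racing"). From the stationary distribution of this chain I would compute (i) the long-run fraction of canonical blocks mined by the adversary, (ii) the long-run fraction of total rounds in which network mining power is "wasted" on orphaned forks, and (iii) the expected bribe outflow per canonical block, which is $\epsilon R$ times the number of races the adversary wins via bribe. Summing, honest-chain length grows slower by a factor tied to the destroyed power, so after difficulty re-adjusts downward the adversary's per-round revenue rises. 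Setting $\texttt{Profit}_\mathcal{A}(\pi^\texttt{selfish}) > \texttt{Profit}_\mathcal{A}(\pi^\texttt{honest})$ and simplifying should yield inequality~\eqref{eq:selfish_profitability}; the algebra will reduce to showing the revenue gain $\propto \alpha_\mathcal{A} - \epsilon(1-\alpha_\mathcal{A})^2$ dominates the bribe-and-orphan loss $\propto \beta_\mathcal{A}(1-\alpha_\mathcal{A})(1-\epsilon)$.

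The role of the hypothesis that every non-adversarial pool has share below $0.4302$ is to ensure that no single petty-compliant pool, upon finding a block in the non-adversarial fork, can by itself make the non-adversarial fork the higher-expected-return chain in a way that breaks the Lemma~\ref{lemma: selfish_mining_advantage} computation — i.e., it guarantees the fork race is genuinely governed by $\beta_\mathcal{A}$ rather than by one dominant rival pool overriding the bribe. I would verify that under this bound the semi-rational fork choice rule, applied pool-by-pool, leaves exactly the pools with a stake in the non-adversarial fork on that fork and everyone else chasing the bribe, matching the mining-share split $\alpha_\mathcal{A}^\texttt{fork} = 1 - \alpha_i$ depicted in Figure~\ref{fig:selfish_mining_setings}(c), and that the worst case for the adversary (the rival block belonging to the largest possible pool) is still covered.

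The main obstacle I anticipate is the bookkeeping across the difficulty epoch boundary: one must be careful that "profit" is computed in the post-adjustment steady state (as in Definition~\ref{def_profit} with $t \to \infty$), so the comparison is revenue-rate versus revenue-rate at the \emph{adjusted} difficulty, not a naive within-epoch block count — otherwise, as the paper itself notes, selfish mining looks unprofitable. Getting the normalization right so that the destroyed effective power translates cleanly into the stated closed form, while correctly subtracting the expected bribe payments (which occur only on won races, not attempted ones), is the delicate part; the rest is a finite Markov chain computation plus routine algebraic rearrangement.
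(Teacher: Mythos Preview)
Your overall architecture matches the paper's proof: build a small Markov chain for $\pi^\texttt{selfish}$, compute its stationary distribution, tally adversarial canonical blocks and bribe outflow, normalize by the number of canonical blocks per round (the difficulty-adjustment factor), and compare the resulting time-averaged profit to $\alpha_\mathcal{A}\lambda R$. The appearance of $1-\beta_\mathcal{A}$ via Lemma~\ref{lemma: selfish_mining_advantage} in the $(1,1)$ race is exactly right, and your remark that bribes are paid only when a \emph{non-adversarial} block extends the adversarial fork is the correct accounting.

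Two concrete corrections, though. First, you misread the parameter $D=1$: it constrains the \emph{petty-compliant} pools' fork choice (they will consider chains at most one block behind the longest), not the adversary's lead. The strategy $\pi^\texttt{selfish}$ in Definition~\ref{def:strategy} explicitly has a state $(l_\mathcal{A},l_\mathcal{R})=(2,0)$ with two hidden adversarial blocks; your three-state chain (``no lead,'' ``one hidden,'' ``racing'') omits it. The paper's chain has four states $(0,0),(1,0),(2,0),(1,1)$, and the $(2,0)$ state contributes nontrivially to both the revenue and the difficulty normalization, so dropping it would not yield the stated closed form.

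Second, the role of the $0.4302$ hypothesis is not to protect the Lemma~\ref{lemma: selfish_mining_advantage} race computation at $(1,1)$. It is invoked (via Lemma~\ref{lemma_0.43}) at the transitions where the adversarial fork becomes \emph{one block longer} than a petty-compliant pool's single-block fork --- i.e., when $(1,1)$ is resolved in the adversary's favour, and when the adversary reveals from $(2,0)$ after a rival block appears. The bound guarantees that the lone pool owning that orphaned block actually abandons it rather than stubbornly mining on, so the chain cleanly resets to $(0,0)$. Without this, those transitions would not be deterministic and the Markov analysis would need extra states.
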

The proof of Theorem~\ref{theorem_D=inf_selfish} is presented in Appendix~\ref{proof:theorem_D=inf_selfish}. The assumption that the mining share of each non-adversarial mining pool is less than $0.4302$ ensures that if an $(\epsilon, D=1)$-petty-compliant mining pool has a single block in a fork that lags behind the longest fork by one block, it is incentivized to abandon its fork and adopt the longest fork (Lemma~\ref{lemma_0.43}).

Note that strategy $\pi^\texttt{selfish}$ is not the optimal selfish mining strategy that an adversarial mining pool can follow in an $(\epsilon, D=1)$-semi-rational environment. 
% In Section~\ref{sec:MDP}, we will discuss how to derive the optimal selfish mining strategy in the semi-rational setting.
%This suggests that the attack can be further refined to be practical even with a lower residual centralization factor and to achieve higher time-averaged profit. 
The main goal of Theorem~\ref{theorem_D=inf_selfish} is to show that a mining pool with a lower residual centralization factor has a higher chance of successfully executing a selfish mining attack. This implies that the profitability of selfish mining in a semi-rational environment depends not only on the mining share of the adversarial pool but also on the distribution of mining power among the remaining mining pools.
\begin{corollary} \label{corollary:p_1andp_2}
    Let $P_1$ and $P_2$ denote the first and second mining pools with the highest mining share in the network, whose corresponding mining share are denoted by $\alpha_1$ and $\alpha_2$, respectively. Then, we can obtain the following statements for an $\epsilon$-semi-rational environment:
    \begin{itemize}
        \item For $\epsilon>0$, selfish mining dominate honest mining for mining pool $P_1$ if $\frac{\alpha_1}{1-\alpha_1}>\alpha_2 + \epsilon (1-\alpha_1-\alpha_2)$.
        \item For $\epsilon=0$, selfish mining always dominate honest mining for mining pool $P_1$.
    \end{itemize}
\end{corollary}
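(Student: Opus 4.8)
The plan is to obtain the corollary as a direct instantiation of Theorem~\ref{theorem_D=inf_selfish}, taking the adversarial pool to be $P_1$ and bounding its residual centralization factor in terms of $\alpha_2$. First I would set $p_\mathcal{A} = P_1$, so that $\alpha_\mathcal{A} = \alpha_1$ and $\mathcal{P}_{\overline{\mathcal{A}}}$ is the set of all remaining pools, each of mining share at most $\alpha_2$ (since $P_1$ has the largest share and $P_2$ the second largest). Using $\alpha_{p_j}^2 \le \alpha_2\,\alpha_{p_j}$ term by term and $\sum_{p_j \in \mathcal{P}_{\overline{\mathcal{A}}}} \alpha_{p_j} = 1-\alpha_1$, I get $\sum_{p_j \in \mathcal{P}_{\overline{\mathcal{A}}}} \alpha_{p_j}^2 \le \alpha_2 (1-\alpha_1)$, hence $\beta_\mathcal{A} = \beta_1 \le \alpha_2$. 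Consequently it suffices to verify $\alpha_2 < \frac{\alpha_1 - \epsilon(1-\alpha_1)^2}{(1-\alpha_1)(1-\epsilon)}$ in order to guarantee inequality~\eqref{eq:selfish_profitability}.

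Next I would clear denominators: multiplying through by $(1-\alpha_1)(1-\epsilon) > 0$ and then by $(1-\alpha_1)$, collecting the $\epsilon$-terms on one side and factoring out $(1-\alpha_1)$, this sufficient condition rearranges into exactly the stated form $\frac{\alpha_1}{1-\alpha_1} > \alpha_2 + \epsilon(1-\alpha_1-\alpha_2)$, proving the first bullet. For the second bullet I would specialize to $\epsilon = 0$: the condition collapses to $\frac{\alpha_1}{1-\alpha_1} > \alpha_2$, i.e. $\alpha_1(1+\alpha_2) > \alpha_2$, which holds for every $\alpha_1 \ge \alpha_2 > 0$ (and trivially if there is no second pool), so selfish mining always dominates honest mining for $P_1$.

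One technical point I would address is that Theorem~\ref{theorem_D=inf_selfish} is stated for an $(\epsilon, D=1)$-environment while the corollary speaks of an $\epsilon$-semi-rational ($D=\infty$) environment: since $\pi^\texttt{selfish}$ never produces a fork lagging the longest chain by more than one block, the reachable states are precisely those on which the $(\epsilon,\infty)$- and $(\epsilon,1)$-semi-rational fork choice rules agree, so the profitability conclusion transfers verbatim. I would also carry over the hypothesis of Theorem~\ref{theorem_D=inf_selfish} that every non-adversarial pool has share below $0.4302$, which here simply reads $\alpha_2 < 0.4302$, as a standing assumption. The only real work is bookkeeping — keeping the direction of every inequality correct when multiplying by the positive factors $(1-\alpha_1)$ and $(1-\epsilon)$, and checking the $\epsilon \to 0$ and "$P_2$ absent'' edge cases — since the single nontrivial step, $\beta_1 \le \alpha_2$, is immediate.
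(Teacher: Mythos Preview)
Your proposal is correct and follows essentially the same route as the paper: apply Theorem~\ref{theorem_D=inf_selfish} with $P_1$ as the adversary, bound $\beta_1 \le \alpha_2$, reduce~\eqref{eq:selfish_profitability} to $\alpha_2 < \frac{\alpha_1-\epsilon(1-\alpha_1)^2}{(1-\alpha_1)(1-\epsilon)}$, and rearrange. Your explicit term-by-term argument for $\beta_1 \le \alpha_2$ and your remark on the $(\epsilon,D=1)$ versus $(\epsilon,D=\infty)$ discrepancy are additional care that the paper's own proof does not supply, but the core argument is identical.
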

The proof of Corollary~\ref{corollary:p_1andp_2} is presented in Appendix~\ref{appendix: p_1andp_2}.
According to Corollary~\ref{corollary:p_1andp_2}, selfish mining is the dominant strategy for the largest mining pool if the network incentivizing threshold $\epsilon$ is less than the difference between the mining shares of the largest and the second largest mining pools. An important consideration in the literature on selfish mining is the minimum threshold of mining power required for a profitable attack. Perhaps the most well-known example is that, in an altruistic setting, a miner with a normal communication capability needs at least $25\%$ of the network's mining power to successfully execute selfish mining~\cite{eyal2018majority}. These thresholds might suggest that if no mining pool holds a share above the threshold, selfish mining cannot threaten the network. However, Theorem~\ref{theorem_D=inf_selfish} and Corollary~\ref{corollary:p_1andp_2} show that in practice, where multiple mining pools aim to maximize their payoffs, mining share alone is not the only factor determining the threat of selfish mining. The distribution of mining power among pools and the gap between their shares also play a critical role. Even if all mining pools hold less than these well-known thresholds, a pool with a sufficiently large gap between its share and that of others can still find selfish mining profitable.
\begin{figure*}[t!] % Use [htbp] for better placement
    \centering
    \begin{subfigure}[t]{0.49\textwidth} 
        \centering
        \includegraphics[width=\linewidth]{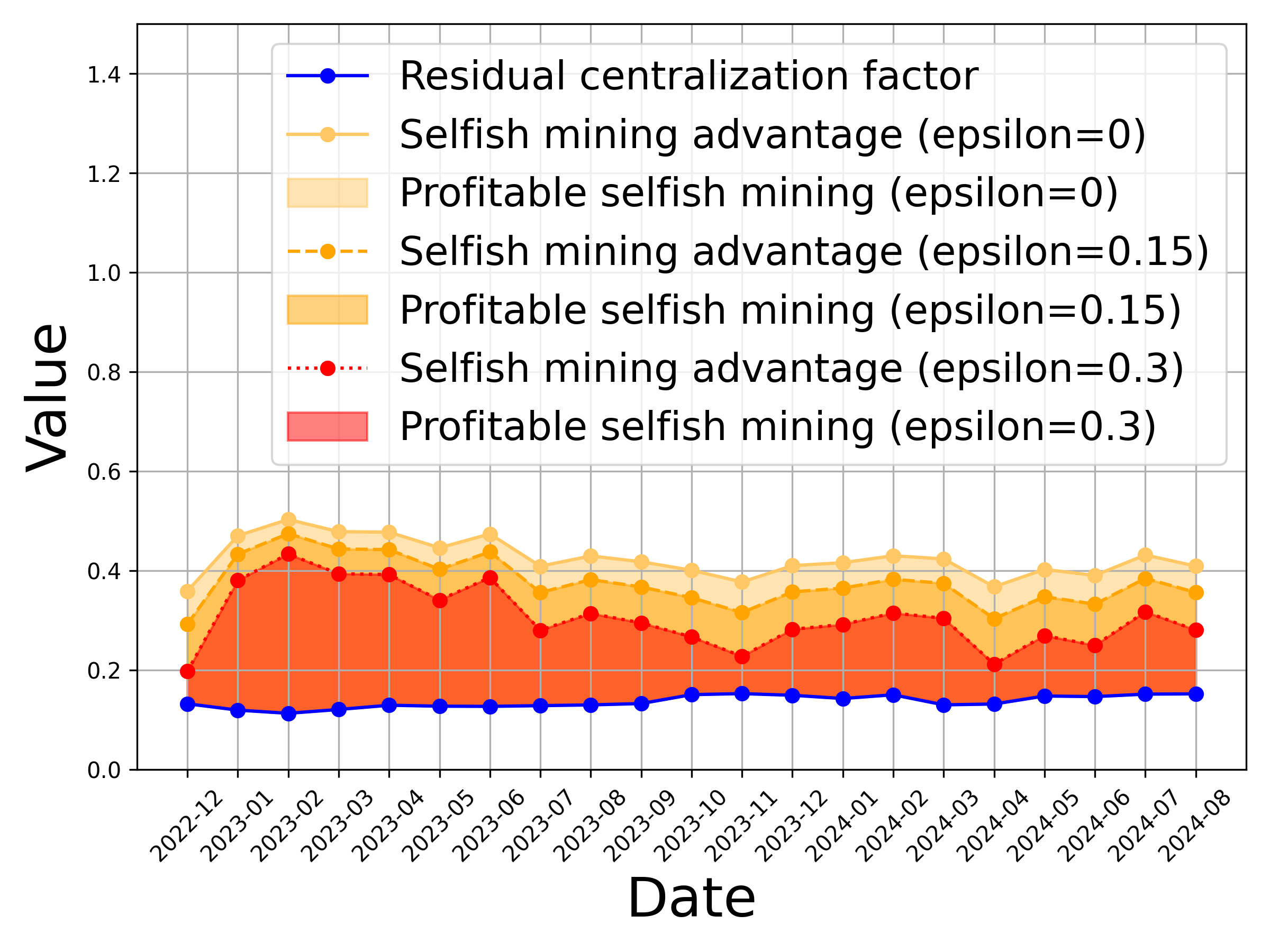}
        \caption{Largest mining pool}
        \label{fig:largest_pool_selfish_mining_profitable_period}
    \end{subfigure}%
    \hfill 
    \begin{subfigure}[t]{0.49\textwidth}
        \centering
        \includegraphics[width=\linewidth]{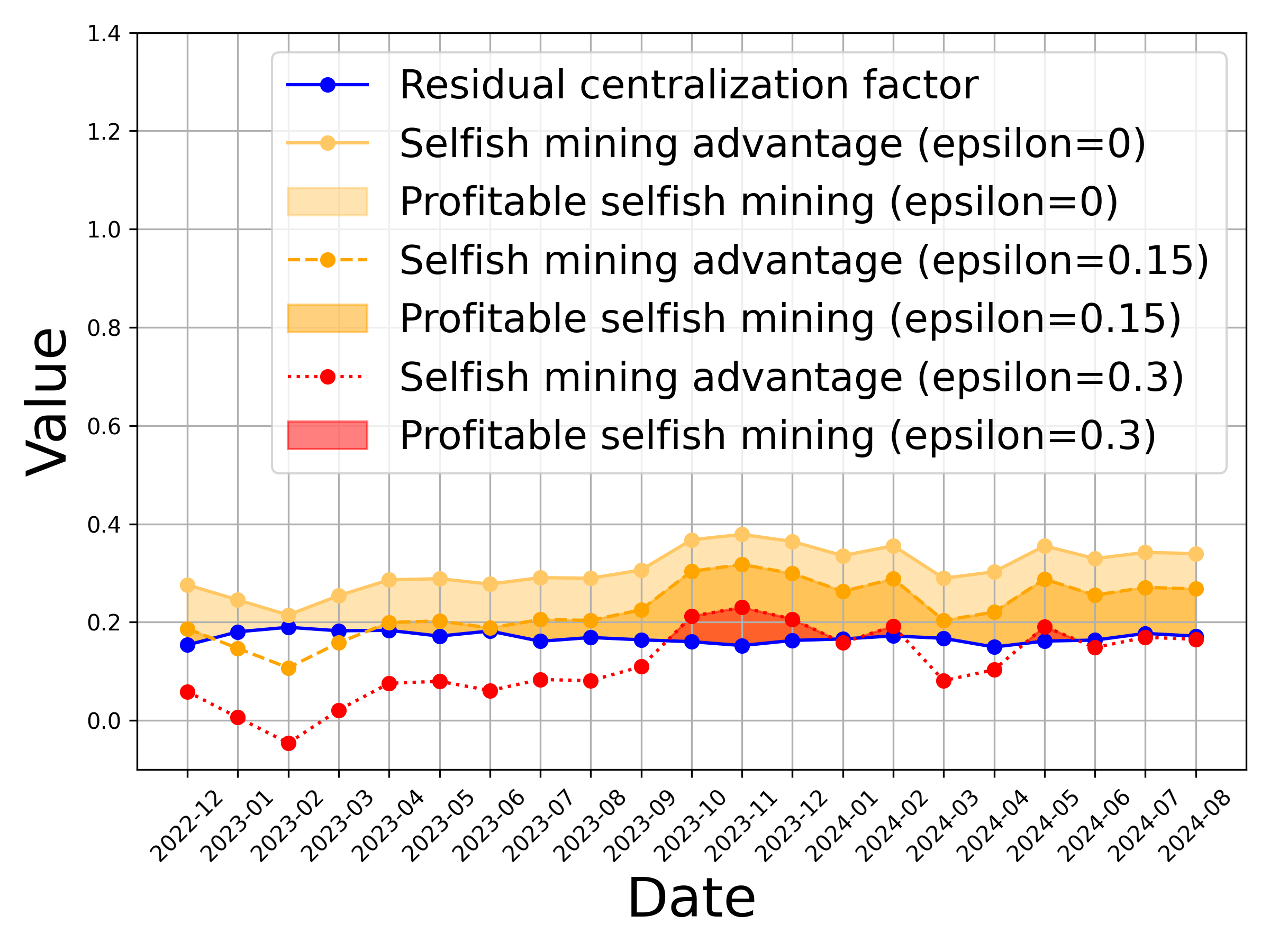} 
        \caption{Second largest mining pool}
        \label{fig:second_largest_pool_selfish_mining_profitable_period}
    \end{subfigure}%
    \caption{Periods of profitable selfish mining. According to Theorem~\ref{theorem_D=inf_selfish}, selfish mining becomes more profitable than honest mining when the selfish mining advantage of a mining pool (defined as the right-hand side of inequality~\ref{eq:selfish_profitability}) surpasses its residual centralization factor (defined in Definition~\ref{def:residual_CF}).}
    \label{fig:selfish_mining_profitable_period}
\end{figure*}

We define the $\epsilon$-selfish mining advantage of the adversary as the right-hand side of inequality~\ref{eq:selfish_profitability}. According to Theorem~\ref{theorem_D=inf_selfish}, selfish mining dominates honest mining if the adversary's $\epsilon$-selfish mining advantage exceeds its residual centralization factor. The mining power distribution for the first 8 months of 2024 is presented in Table~\ref{tab:MiningPowerDistribution} in Appendix~\ref{appendix_mining_pools}~\cite{Pools}. Figure~\ref{fig:selfish_mining_profitable_period} depicts both the centralization factor and the $\epsilon$-selfish mining advantage for the largest and second-largest mining pools over the months from December 2022 to August 2024. 
To generate Figure~\ref{fig:selfish_mining_profitable_period}, we used data on the number of blocks mined by the respective pools during these months to calculate their corresponding mining shares for each month. This figure illustrates in which months selfish mining could be considered profitable based on different values of the network incentivizing factor. 
As shown in Figure~\ref{fig:largest_pool_selfish_mining_profitable_period}, selfish mining is always the dominant strategy for the largest pool, even with a high incentivizing factor of $\epsilon=0.3$. If we assume $\epsilon=0$, as can be seen in Figure~\ref{fig:second_largest_pool_selfish_mining_profitable_period}, selfish mining is also the dominant strategy for the second-largest pool. However, as the incentivizing factor increases, some months are excluded from the profitable selfish mining period.

In Appendix~\ref{sec:MDP}, we present our MDP-based implementation, available at~\cite{Our_MDP}, to derive the optimal selfish mining strategy in the semi-rational setting. In Appendix~\ref{appendix:selfish_mining_time_cost}, we discuss the duration of the initial financial loss associated with selfish mining and the cost incurred by the adversarial mining pool during this period.

\section{Bribery Attack} \label{sec: bribery}
Based on the selfish mining results from our MDP implementation presented in Table~\ref{tab:selfish_mining_MDP_result_real}, the profitability of selfish mining drastically decreases as a pool's mining share decreases. The main reason for the marginal profitability of selfish mining for smaller pools is that, in selfish mining, an adversary is gambling on its own blocks, with no guarantee of winning the fork race. In other words, when an adversary withholds a block to initiate a fork race, it either collects the block reward $R$ or loses it entirely. For larger mining pools, the probability of winning the fork race is high, making selfish mining profitable. However, smaller pools face a significantly higher risk of losing the block reward in the fork race. As a result, small mining pools prefer not to initiate a fork race unless they know the competing mining pool in the fork race is also small.

Knowing that selfish mining is a marginally-profitable mining power destruction attack for smaller pools, in this section, we introduce a bribery attack that similar to selfish mining aims to waste a portion of the network's hash power. However, unlike selfish mining, this bribery attack is risk-free for the adversary and is not limited to times when the adversary has mined a block, thereby allowing the adversary to increase its profitability. In this attack, an adversarial mining pool pays a bribe to any other mining pool that can orphan a specific block targeted by the adversarial pool. In other words, the adversarial pool incentivizes petty-compliant mining pools to orphan a block mined by another mining pool. 
% At first glance, it seems that this attack is not profitable for the adversary, as they need to spend money on bribes. However, upon closer inspection, we can see that orphaning some of the non-adversarial blocks will decrease the mining difficulty in the subsequent epoch, thereby increasing the profitability of the adversarial mining pool. If the attack is properly designed, the profit the adversary achieves after the difficulty adjustment will compensate for the cost of the bribes, making the attack profitable for the adversary. 
The profitability of the bribery attack is based on the same principle as selfish mining. In both of these attacks, the adversary aims to waste part of the network's hash power to exploit the difficulty adjustment mechanism, thereby reducing mining difficulty. The key difference is that in selfish mining, the adversarial pool uses its own blocks to carry out the attack, whereas in the bribery attack, it uses its budget to incentivize other mining pools to conduct the attack. The bribery attack offers two key advantages over selfish mining. First, it is risk-free for the adversary, as the bribe is only paid to collaborating pools upon successfully orphaning a block. Second, it can be executed by pools with small mining shares, enabling them to exploit weaker pools.

% To analyze the bribery attack, we first need to determine the maximum budget a mining pool can spend on the attack while still achieving a profit. Next, we will discuss how the adversarial mining pool can use this budget to conduct the bribery attack. To this end, we first review some definitions.

% \begin{definition}[Epoch actual length]
%     The epoch's actual length is defined as the total number of both canonical and orphaned blocks mined during the epoch.
% \end{definition}
\begin{definition}[Target and rival blocks]
    Let $B_1$ represent the head of the canonical chain that is mined on top of block $B_0$. Additionally, suppose an adversarial mining pool offers a bribe for publishing a block $B'_1$ on top of $B_0$, where $B'_1$ differs from $B_1$. In this case, $B_1$ and $B'_1$ are referred to as the target block and the rival block of a bribery attack, respectively.
\end{definition}
When an adversarial mining pool targets a block $B_1$ for a bribery attack, it is not guaranteed that a rival block will always be mined for the target block $B_1$. If no rival block is mined, the attack fails. However, if a rival block $B'_1$ is mined for the target block $B_1$, the attack succeeds. In a successful bribery attack, both the target and rival blocks undergo a fork race. Note that the success of this bribery attack is independent of the outcome of the fork race, as neither of the blocks involved in the fork race is adversarial. Since only one block between the target and rival blocks can be included in the canonical chain, the orphaning of one block is definite, implying that the occurrence of the fork race is enough to consider the attack successful.
The following theorem determines the maximum bribe the adversary can spend on the bribery attack while keeping it profitable.
\begin{theorem} \label{theorem_bribery attack_long_term}
    Consider an adversarial mining pool with a mining share $\alpha_\mathcal{A}$ that allocates a normalized bribe $\texttt{br}$ for each target block $B$, which is payable only upon the successful mining of a rival block for block $B$. Assume \(k\) is the average number of target blocks per epoch for which a rival block is mined under the bribery attack. The time-averaged profit of the adversarial pool under the bribery attack exceeds the honest mining profit as long as \(\texttt{br} < \alpha_\mathcal{A}\), for any value of $k$.
\end{theorem}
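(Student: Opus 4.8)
The plan is to carry out a steady-state, per-epoch accounting of the adversary's revenue and costs, exploiting the fact that in the long run Bitcoin's difficulty adjustment mechanism (DAM) forces the canonical chain to grow at the target rate $\lambda$ no matter how much mining power is wasted. First I would fix a difficulty epoch, defined as a run of $E$ canonical blocks (e.g.\ $E=2016$). Under the bribery attack, by hypothesis $k$ of these canonical blocks are target blocks for which a rival is mined, and each such event produces exactly one orphaned block (the loser of the fork race), so a total of $E+k$ blocks are mined during the epoch while only $E$ of them end up on the canonical chain. Since in steady state the canonical chain must still advance at rate $\lambda$, each epoch lasts real time $E/\lambda$, exactly as in the honest regime; consequently the \emph{total} block-production rate during the attack is $\lambda(E+k)/E = \lambda(1+k/E)$ — the same hash power mines faster because the DAM has lowered the difficulty in response to the orphaning.

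Next I would handle the revenue side. The adversary simply mines honestly with its full share $\alpha_\mathcal{A}$ (it only pays bribes; it neither withholds nor mines rival blocks), so it produces an $\alpha_\mathcal{A}$ fraction of all blocks mined, i.e.\ $\alpha_\mathcal{A}(E+k)$ blocks per epoch. None of the adversary's own blocks is ever a target block (it would not bribe against itself), and natural forks are negligible, so every block it mines is canonical; hence its block-reward revenue rate is $\alpha_\mathcal{A}\lambda(1+k/E)R$, against the honest rate $\alpha_\mathcal{A}\lambda R$. On the cost side, the adversary runs all its hardware in both regimes, so its mining cost rate is $\alpha_\mathcal{A} c_\mathcal{A}$ either way and cancels in the comparison; the only extra cost is the bribe, $\texttt{br}R$ per successfully orphaned target, i.e.\ $k\,\texttt{br}R$ per epoch, a rate of $(k/E)\texttt{br}R\lambda$.

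Combining these, the time-averaged profit of the bribery attack minus that of honest mining is
\begin{equation*}
\Big(\alpha_\mathcal{A}\lambda(1+\tfrac{k}{E})R - \alpha_\mathcal{A} c_\mathcal{A} - \tfrac{k}{E}\texttt{br}R\lambda\Big) - \big(\alpha_\mathcal{A}\lambda R - \alpha_\mathcal{A} c_\mathcal{A}\big) = \tfrac{k}{E}\,\lambda R\,(\alpha_\mathcal{A} - \texttt{br}),
\end{equation*}
which is strictly positive precisely when $\texttt{br} < \alpha_\mathcal{A}$ (for any $k>0$; if $k=0$ the attack is vacuous and the two profits coincide). The factor $k/E$ multiplies both the gain and the bribe cost and therefore cancels out of the threshold, which is exactly why the conclusion is independent of $k$.

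The hard part will not be the algebra but justifying the steady-state modelling assumptions: that as $t\to\infty$ the DAM indeed restores the canonical growth rate to $\lambda$ (so the extra $k$ blocks per epoch become a genuine $(1+k/E)$-fold speed-up rather than a longer epoch), and that the adversary's blocks are never orphaned so that it captures its full proportional share of the inflated production. I would justify the first point exactly as in the selfish-mining analysis and the cited difficulty-adjustment works~\cite{grunspan2018profitability, sarenche2024time}, emphasizing that this is a \emph{long-term} statement — during the initial epoch, before difficulty readjusts, the attack is a net loss, so the claim must be read through the $t\to\infty$ limit in Definition~\ref{def_profit}. I would also note that for realistic parameters $k/E$ is small, so the incidental requirement that the adversary's $\alpha_\mathcal{A}(E+k)$ canonical blocks coexist with the $k$ contested ones among the $E$ canonical slots is automatically met; if one wants this for genuinely arbitrary $k$, the revenue step can be stated purely at the level of rates to sidestep it.
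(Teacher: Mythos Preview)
Your proposal is correct and follows essentially the same per-epoch accounting argument as the paper: both compute $\alpha_\mathcal{A}(L+k)R$ revenue against $k\,\texttt{br}\,R$ bribe cost over an epoch of duration $L/\lambda$ (the paper uses $L$ where you use $E$), arriving at the identical profit difference $\tfrac{k}{L}\lambda R(\alpha_\mathcal{A}-\texttt{br})$. Your exposition is slightly more explicit about why the DAM raises the total block-production rate to $\lambda(1+k/L)$ and about the steady-state caveats, but the core argument matches the paper's line for line.
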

The proof of Theorem~\ref{theorem_bribery attack_long_term} is presented in Appendix~\ref{appendix:bribery_theorem_proof}. 
The idea behind the proof of Theorem~\ref{theorem_bribery attack_long_term} is to demonstrate that for each non-adversarial block that gets orphaned, the adversarial mining pool receives an additional revenue of $\alpha_\mathcal{A}R$ on average. Therefore, if the bribe spent on orphaning each non-adversarial block is less than $\alpha_\mathcal{A}R$, the adversarial mining pool can still earn a profit.

% According to Theorem~\ref{theorem_bribery attack_long_term}, an adversarial mining pool with a mining share $\alpha_\mathcal{A}$ can spend up to $\alpha_\mathcal{A}R$ per target block $B$ to incentivize petty-compliant mining pools to orphan block $B$ and still gain a profit. 
The question that arises is how the adversary can effectively use this bribe budget to incentivize other petty-compliant pools to mine a rival block for a target block. In the following section, we introduce a bribery attack and analyze it in the setting where all mining pools are aware of which pool has mined the target block $B$. Once a mining pool mines a block, it can reveal its identity in the coinbase transaction. According to the statistical information presented in~\cite{Pools}, the ratio of unknown blocks in the first 8 months of 2024 is less than $8\%$, indicating that the majority of mining pools reveal their identity in their blocks. In Appendix~\ref{appendix:bribery_unknown}, we analyze the bribery attack in the setting where the miner of the target block is unknown.
% \footnote{The coinbase transaction is the first transaction in the block, in which the block reward is transferred to the mining pool.}
\subsection{Bribery Attack Under the Assumption of Known Miners} \label{section: bribe_attack_known_miners}
In this section, we present a smart-contract-based bribery attack under the assumption that the miners of target blocks are known.
Appendix~\ref{appendix:whale_based_known_miners} discusses a bribery attack using whale transactions.
\vspace{-10 pt}
\subsubsection{Description of Bribery Attack Using Smart Contracts}
% Similar to whale transactions, the adversarial mining pool can use out-of-band payments such as smart contracts to bribe petty-compliant mining pools. The advantage of using a smart contract is that it guarantees the petty-compliant mining pool that follows the bribery strategy and mines a rival block for the target block 
% will receive the bribe regardless of the fork race outcome. In a bribery attack using whale transactions, if the miner of the rival block loses the fork race, it would lose both the block reward and the bribe. However, in an attack based on a smart contract, even if the miner of the rival block loses its block, it can still receive the bribe. As a result, in this type of attack, the adversarial mining pool can target blocks from mining pools with a higher mining share compared to an attack based on whale transactions.
% Let $\mathcal{P} = \{p_1, p_2, \cdots, p_N\} \cup \{p_\mathcal{A}\}$ denote the set of all semi-rational and adversarial mining pools. 
% During the bribery attack, the adversarial mining pool always mines on top of the longest chain available in its view. 
Let $\alpha_\mathcal{A}$ represent the mining share of the adversarial mining pool $p_\mathcal{A}$. Assume block $B_1$ denotes the head of the canonical chain and is mined by mining pool $p_i$ with mining share $\alpha_i$. From the perspective of the adversarial mining pool, block $B_1$ can be considered a valid target block for the smart contract-based bribery attack if the following conditions hold: 1) $B_1$ is a non-adversarial block, and 2) $\alpha_\mathcal{A} > \alpha_i + 2\epsilon$.
% \begin{itemize}
%     \item $B_1$ is a non-adversarial block.
%     \item $\alpha_\mathcal{A} > \alpha_i + 2\epsilon$.
% \end{itemize}
If these conditions are not satisfied, the attack is not applicable, and the adversarial mining pool waits for the next block. If block $B_1$ is a valid target block, the adversarial mining pool proceeds with the bribery attack.
Let $B_0$ denote the parent of the target block $B_1$. The adversarial mining pool deploys a smart contract and publishes it for all the mining pools. The smart contract stores three parameters: the hash of block $B_0$ denoted by $\texttt{Hash}(B_0)$, the hash of block $B_1$ denoted by $\texttt{Hash}(B_1)$, and a difficulty target denoted by $\texttt{Target}$ that is equal to the difficulty target of the epoch to which block $B_1$ belongs. The adversary deposits 2 normalized bribes $\texttt{br}_1 = \alpha_i + \epsilon$ and $\texttt{br}_2 = \epsilon$ in the smart contract. Anyone who submits a witness that proves the mining of a rival block $B_\texttt{rival}$ for block $B_1$ can withdraw $\texttt{br}_1$. A valid witness includes a Bitcoin $\texttt{nonce}$, a Merkle root $\texttt{MR}$, a Bitcoin coinbase transaction $tx$, a Merkle inclusion $\texttt{proof}$, and a payout address $\texttt{add}$ in the host blockchain, and satisfies the conditions:
\begin{enumerate}
    \item Demonstrates valid evidence of power destruction\footnote{Verifying the transactions in the rival block is unnecessary. Any user who solves a new PoW puzzle (not necessarily a valid Bitcoin block) deserves the bribe.}:
    %\footnote{Note that there is no need to check the validity of the transactions included in the rival block. To serve as a valid witness of mining power destruction, the witness does not necessarily need to be a \textbf{valid} Bitcoin block. It only needs to demonstrate that a new PoW puzzle has been solved. In other words, any user who submits a valid solution for the PoW puzzle (relation~\ref{eq:hash}) deserves the bribe, regardless of whether this solution can be considered a valid Bitcoin block.}
    \begin{itemize}
        \item A valid rival block is mined on top of block $B_0$: \\$\texttt{Hash}(B_\texttt{rival}) = \texttt{Hash}\big( \texttt{Hash}(B_0), \texttt{MR}, \texttt{nonce}\big) \leq \texttt{Target}$.
        % \begin{equation} \label{eq:hash}
        %     \texttt{Hash}(B_\texttt{rival}) = \texttt{Hash}\big( \texttt{Hash}(B_0), \texttt{MR}, \texttt{nonce}\big) \leq \texttt{Target} \enspace .
        % \end{equation}
        \item The rival block differs from the target block: $\texttt{Hash}(B_\texttt{rival}) \neq \texttt{Hash}(B_1)$.
    \end{itemize}
    
    \item Includes a payout address that belongs to the compliant miner:
    \begin{itemize}
        \item The coinbase transaction is included in the list of block transactions, i.e., a valid Merkle inclusion $\texttt{proof}$ is available that proves coinbase transaction $tx$ is a leaf of a Merkle tree with Merkle root $\texttt{MR}$.
        \item The payout address $\texttt{add}$ is included in the coinbase transaction.
    \end{itemize}
\end{enumerate}
% \begin{itemize}
%     \item Demonstrate a valid evidence of power destruction:
%     \item A valid rival block is mined on top of block $B_0$: 
%     \begin{equation} \label{eq:hash}
%         \texttt{Hash}(B_\texttt{rival}) = \texttt{Hash}\big( \texttt{Hash}(B_0), \texttt{MR}, \texttt{nounce}\big) \leq \texttt{Target} \enspace .
%     \end{equation}
%     \item The rival block differs from the target block: $\texttt{Hash}(B_\texttt{rival}) \neq \texttt{Hash}(B_1)$.
%     \item The payout address belongs to the compliant miner:
%     \item The coinbase transaction is included in the list of block transactions, i.e., a valid Merkle inclusion $\texttt{proof}$ is available that proves coinbase transaction $tx$ is a leaf of a Merkle tree with Merkle root $\texttt{MR}$.
%     \item The payout address $\texttt{add}$ is included in the coinbase transaction.
% \end{itemize}
If the witness transaction satisfies the conditions above, bribe $\texttt{br}_1$ will be transferred from the smart contract deposit to the payout address $\texttt{add}$, and the rival block hash $\texttt{Hash}(B_{\text{rival}})$ gets stored in the contract as the hash of the rival block. Anyone who submits a witness that proves the mining of a valid supporting block on top of the rival block $B_{\text{rival}}$ can withdraw $\texttt{br}_2$. The witness verification is similar to the previous case with the difference that the hash of the parent block should be equal to $\texttt{Hash}(B_{\text{rival}})$. Note that this smart contract should have a time lock, after the expiration of which the adversarial mining pool is able to withdraw the deposits. This is to prevent any loss in the case of the attack's failure.
During this attack, the adversarial mining pool follows the following strategy: if no rival block is published, it mines on top of the target block. Once a rival block is published, it switches to mining on top of the rival block.
% Note that in contrast to the whale transaction-based attack, there is no need for the adversarial mining pool to commit to this strategy, as we will discuss that this strategy is the dominating strategy for the adversarial mining pool.
\begin{theorem} \label{theorem_bribery_smart}
    In an $\epsilon$-semi-rational environment, an adversarial mining pool with mining share $\alpha_\mathcal{A}$ is incentivized to conduct a smart contract-based bribery attack on any target block $B$ if block $B$ is mined by a mining pool with mining share $\alpha_i$, where $\alpha_\mathcal{A} > \alpha_i + 2\epsilon$.
\end{theorem}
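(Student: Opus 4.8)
The goal is to show that when the target block $B_1$ is mined by pool $p_i$ with $\alpha_\mathcal{A} > \alpha_i + 2\epsilon$, the described smart-contract bribery attack is incentive-compatible for $p_\mathcal{A}$, i.e.\ yields strictly higher time-averaged profit than honest mining. The plan is to reduce the statement to two sub-claims: (i) the attack, once launched, actually succeeds with non-negligible probability — meaning the petty-compliant pools other than $p_i$ are indeed incentivized, under the $(\epsilon)$-semi-rational fork choice rule, first to mine a rival block $B_\texttt{rival}$ on top of $B_0$ and then to extend $B_\texttt{rival}$ rather than $B_1$; and (ii) conditioned on the attack structure, the expected bribe actually disbursed by $p_\mathcal{A}$ per orphaned non-adversarial block is strictly below $\alpha_\mathcal{A}R$, so that Theorem~\ref{theorem_bribery attack_long_term} applies and gives the profitability conclusion.

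I would carry this out as follows. First, establish the incentive for a petty-compliant pool $p_j$ ($j\neq i$, $j\neq\mathcal{A}$) to mine the rival block on $B_0$: a block on $B_0$ competing with $B_1$ carries the reward $R$ plus the withdrawable bribe $\texttt{br}_1 R = (\alpha_i+\epsilon)R$, whereas mining on top of $B_1$ only carries $R$ for the next block; by the semi-rational fork choice rule the pool deviates once the return gap exceeds $\epsilon\alpha_j R$, and since $\alpha_j < 1$ we have $(\alpha_i+\epsilon)R > \epsilon\alpha_j R$, so every such pool prefers the fork containing the bribe. Crucially, pool $p_i$ itself will not mine the rival — it already owns $B_1$ — so the aggregate mining power chasing $B_\texttt{rival}$ is $1-\alpha_i$ (adversary included, since $p_\mathcal{A}$'s strategy is to mine on the target until a rival appears, then switch to the rival). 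Second, once a rival block $B_\texttt{rival}$ exists, I must check the resulting length-1 vs length-1 fork race: using Lemma~\ref{lemma: selfish_mining_advantage} with a normalized bribe of $\texttt{br}_2=\epsilon$ sitting on top of $B_\texttt{rival}$, the probability that the next block lands on the rival fork equals the mining advantage of the "rival coalition", and one argues that this coalition (everyone except $p_i$, who still has only its single block $B_1$) wins the race with probability exactly $1-\beta$-type quantity; regardless of the precise number, the point for profitability is only that the attack is \emph{structurally} successful — a fork race occurs, hence exactly one of $B_1,B_\texttt{rival}$ is orphaned, which is the notion of success in the paragraph preceding the theorem. Third, compute the adversary's expected out-of-pocket bribe: $\texttt{br}_1 R = (\alpha_i+\epsilon)R$ is paid with certainty once a rival exists, and $\texttt{br}_2 R = \epsilon R$ is paid only in the sub-event that a supporting block is mined on the rival (probability at most $1$), so the expected disbursed bribe per successful attack is at most $(\alpha_i + 2\epsilon)R < \alpha_\mathcal{A}R$ by hypothesis. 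Feeding $\texttt{br} < \alpha_\mathcal{A}$ into Theorem~\ref{theorem_bribery attack_long_term} yields that the time-averaged profit under the attack strictly exceeds honest-mining profit, for any $k$, which is the claim.

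The main obstacle I anticipate is step one's fork-choice bookkeeping in the presence of \emph{multiple} petty-compliant pools rather than infinitesimal nodes: I need to argue that no petty-compliant pool is ever better off sticking with $B_1$ (or abstaining), that the bribe $\texttt{br}_1$ is correctly "claimable" exactly once so competing pools are not racing for an already-spent reward, and that $p_i$'s own incentive to revive $B_1$ — the very phenomenon highlighted in Section~\ref{sec: selfish_mining} — does not let $p_i$ out-compete the rival coalition and defeat the attack. This is where the condition $\alpha_\mathcal{A} > \alpha_i + 2\epsilon$ does double duty: it both caps the bribe below $\alpha_\mathcal{A}R$ (profitability) and, combined with the $0.4302$-type reasoning of Lemma~\ref{lemma_0.43} / Lemma~\ref{lemma: selfish_mining_advantage}, ensures $p_i$ is small enough that its single-block fork is outpaced. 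A secondary subtlety is verifying that the smart-contract witness mechanism cannot be gamed — e.g.\ $p_i$ withdrawing $\texttt{br}_1$ for a sham block — but since any valid PoW solution distinct from $B_1$ on $B_0$ genuinely destroys mining power, this does not hurt the profitability argument and can be dispatched with a short remark.
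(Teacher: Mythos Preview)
Your overall architecture matches the paper's: first show (via a lemma) that every petty-compliant pool $p_j\neq p_i$ is incentivized to mine the rival block when $\texttt{br}_1=\alpha_i+\epsilon$ and $\texttt{br}_2=\epsilon$, then observe the total bribe is $\alpha_i+2\epsilon<\alpha_\mathcal{A}$ and invoke Theorem~\ref{theorem_bribery attack_long_term}. Step~(iii) and the reduction to Theorem~\ref{theorem_bribery attack_long_term} are fine, and your observation in step~(ii) that mere occurrence of a fork race already counts as success is exactly right.

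The genuine gap is in step~(i). You claim that mining on $B_0$ ``carries the reward $R$ plus the withdrawable bribe $\texttt{br}_1R$'', so the return gap over mining on $B_1$ is $(\alpha_i+\epsilon)R$, which you then compare to the threshold $\epsilon\alpha_jR$. This ignores the fork race: if $p_j$ mines the rival block and then \emph{loses} the race to $p_i$, the block reward $R$ is forfeited. The correct expected return from rival-mining is
\[
r_2=\alpha_j\big[(1-\alpha_i)(1+\texttt{br}_1)+\alpha_i\,\texttt{br}_1\big]R,
\]
using that the smart contract pays $\texttt{br}_1$ \emph{regardless} of the fork-race outcome (this is the whole point of the smart-contract mechanism, and what distinguishes it from the whale-transaction variant). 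Against $r_1=\alpha_jR$, the gap is $\alpha_j(\texttt{br}_1-\alpha_i)R$, so the semi-rational condition $r_2-r_1\ge\epsilon\alpha_jR$ becomes $\texttt{br}_1\ge\alpha_i+\epsilon$, tight. Your argument as written would conclude that any positive bribe suffices (since $(\alpha_i+\epsilon)R>\epsilon\alpha_jR$ would hold even for $\texttt{br}_1$ arbitrarily small), which is false; it therefore does not actually justify the specific value $\texttt{br}_1=\alpha_i+\epsilon$, and hence does not explain why the hypothesis reads $\alpha_\mathcal{A}>\alpha_i+2\epsilon$ rather than something weaker. The appeals to Lemma~\ref{lemma: selfish_mining_advantage} and Lemma~\ref{lemma_0.43} in your ``main obstacle'' paragraph are unnecessary once this calculation is done correctly: the fork-race analysis here is elementary (only $p_i$ stays on $B_1$, everyone else moves to the rival because of $\texttt{br}_2=\epsilon$), and no $0.4302$-type bound is needed.
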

The proof of Theorem~\ref{theorem_bribery_smart} is presented in Appendix~\ref{appendix:prrof_theorem_bribery_smart}. The proof demonstrates that the adversarial mining pool must pay a normalized bribe of $\alpha_i + 2\epsilon$ to incentivize petty-compliant pools to mine a rival block for a target block mined by a pool with share $\alpha_i$. Each bribery attack orphans one non-adversarial block, granting the adversary additional normalized revenue of $\alpha_\mathcal{A}$. Thus, if this additional revenue $\alpha_\mathcal{A}$ exceeds the bribe $\alpha_i + 2\epsilon$, the attack becomes profitable.

According to Theorem~\ref{theorem_bribery_smart}, assuming $\epsilon = 0$, an adversarial mining pool can perform a bribery attack on blocks mined by any mining pool whose mining power is less than the adversarial pool's mining share. 
% As can be seen, in the smart contract-based attack, the corresponding mining share of the target block miner can be higher than that in the whale-based transaction, implying that the smart contract-based bribery attack can be more destructive. 
The bribery attack is similar to the undercutting attack introduced in~\cite{carlsten2016instability}. The key difference is that in undercutting, the adversarial mining pool uses its own block to undercut a target block. However, in the bribery attack, the adversarial mining pool incentivizes other mining pools to undercut a target block. Note that the undercutting attack is a subset of the possible actions considered in our selfish mining MDP analyzer introduced in Section~\ref{sec:MDP}. The Markov chain analysis of both the introduced bribery attack and the undercutting attack in the presence of petty-compliant pools is presented in Appendix~\ref{appendix:markov_chain_bribery} and Appendix~\ref{appendix:markov_chain_undercut}, respectively.

Figure~\ref{fig:bribery_selfish_undercut_pecentage_increase}, illustrates the percentage increase in reward share for the smart contract-based bribery attack, the selfish mining attack, and the undercutting attack, calculated for the 8 largest mining pools, based on the mining distribution in Table~\ref{tab:MiningPowerDistribution} and incentivizing factor $\epsilon = 0$. Figure~\ref{fig:bribery_selfish_undercut_infinitesimal} depicts the reward share that an adversarial mining pool can achieve in the presence of infinitesimal petty-compliant mining nodes.
As can be seen in both Figures~\ref{fig:bribery_selfish_undercut_pecentage_increase} and~\ref{fig:bribery_selfish_undercut_infinitesimal}, the bribery attack can dominate both selfish mining and undercutting in an $\epsilon$-semi-rational environment when the adversarial mining pool's share is small.
The higher profitability of the bribery attack can be attributed to two main factors. First, in the bribery attack, the adversarial mining pool does not risk losing its blocks. Second, the bribery attack is not limited to the times when the adversarial mining pool has mined a block. In the undercutting attack, once the tip of the blockchain is a valid target block, the adversarial pool must mine the next block to successfully undercut the target block. This probability is relatively low, especially if the adversarial mining pool is small. In contrast, in the bribery attack, the adversarial mining pool can incentivize all mining pools, except the target block miner, to undercut the target block.
\begin{figure*}[t!] 
    \centering
    \begin{subfigure}[t]{0.49\textwidth} 
        \centering
        \includegraphics[width=\linewidth]{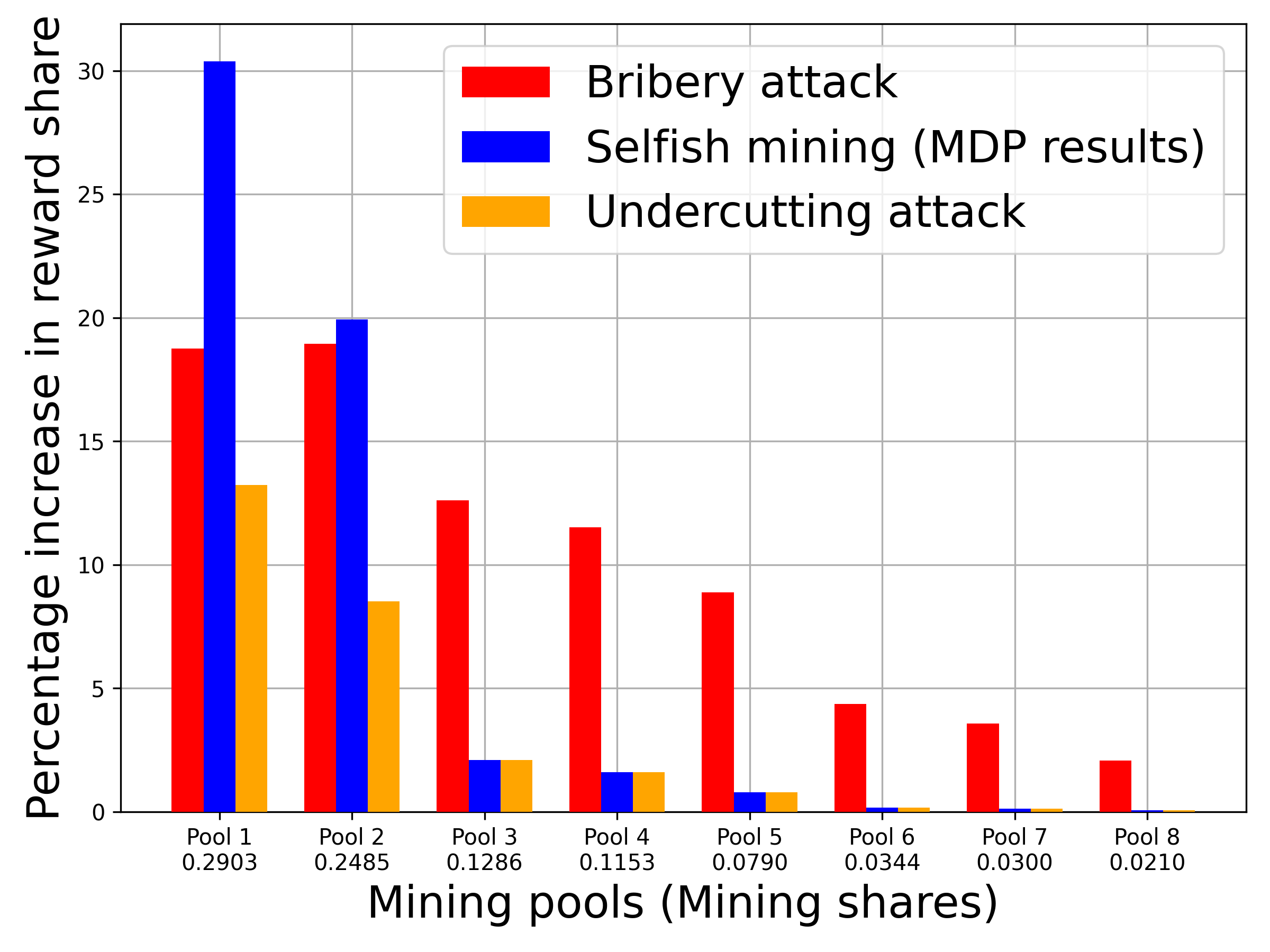}
        \caption{Petty-compliant mining pools (real-world mining power distribution).}
        \label{fig:bribery_selfish_undercut_pecentage_increase}
    \end{subfigure}%
    \hfill 
    \begin{subfigure}[t]{0.49\textwidth}
        \centering
        \includegraphics[width=\linewidth]{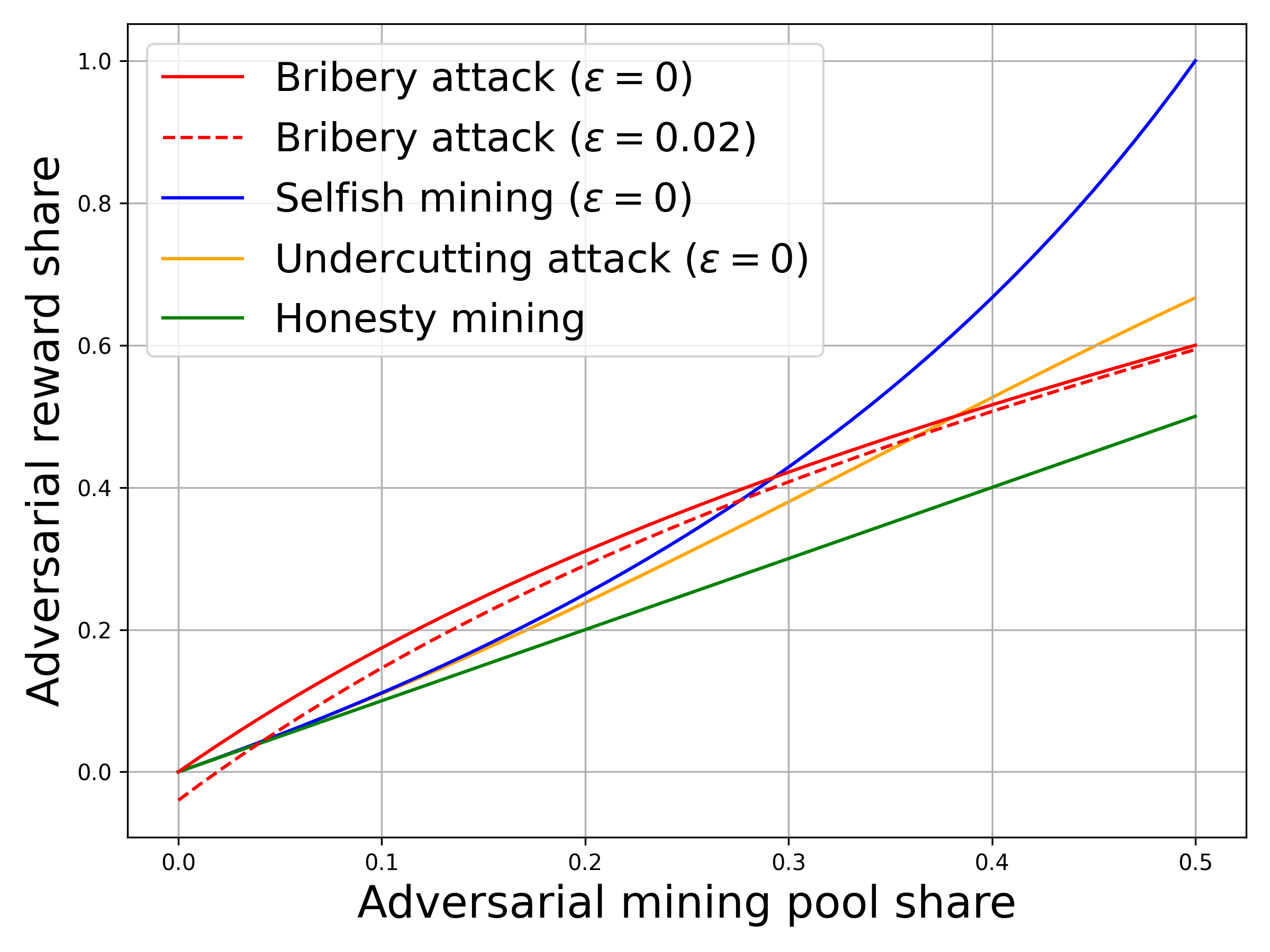} 
        \caption{A single adversarial pool and infinitesimal petty-compliant nodes.}
        \label{fig:bribery_selfish_undercut_infinitesimal}
    \end{subfigure}%
    \caption{The adversarial reward share obtained from different attacks.}
    % \label{fig:selfish_mining_profitable_period}
    \vspace{ -10 pt}
\end{figure*}
Compared to other bribery attacks in the literature~\cite{judmayer2019pay, liao2017incentivizing, bonneau2016buy, mccorry2019smart}, where double-spent transactions compensate the adversary, in the introduced bribery attack, the protocol itself compensates the adversary, eliminating the need for performing deep block reorganizations. Additionally, the budget for each instance of the introduced bribery attack is only a fraction of a block reward, significantly less than the double-spending bribes, which require several block rewards.

\textbf{Duration and cost of the initial loss period:} Mining destruction attacks, including the bribery attack, incur an initial period of financial loss before the mining difficulty adjusts. During the first epoch of the bribery attack, the adversary must pay bribes without any immediate increase in revenue per unit of time. This implies that the adversarial mining pool must allocate a budget to initiate the attack. In Fig.~\ref{fig:profit_lag_bribery}, we depict the normalized revenue advantage of the bribery attack for different real-world mining pools as a function of time. The revenue advantage is defined as the difference between the attack revenue and honest mining revenue since the launch of the attack~\cite{grunspan2023profit}. As shown, during the first epoch of the attack---lasting longer than the standard 2-week epoch duration---the revenue advantage decreases, indicating a financial loss for the adversary. However, after the difficulty adjustment at the end of the first epoch, the revenue advantage of the bribery attack begins to increase and eventually becomes positive, indicating profitability.

\begin{figure}[t!]
    \centering
    \includegraphics[height=2in]{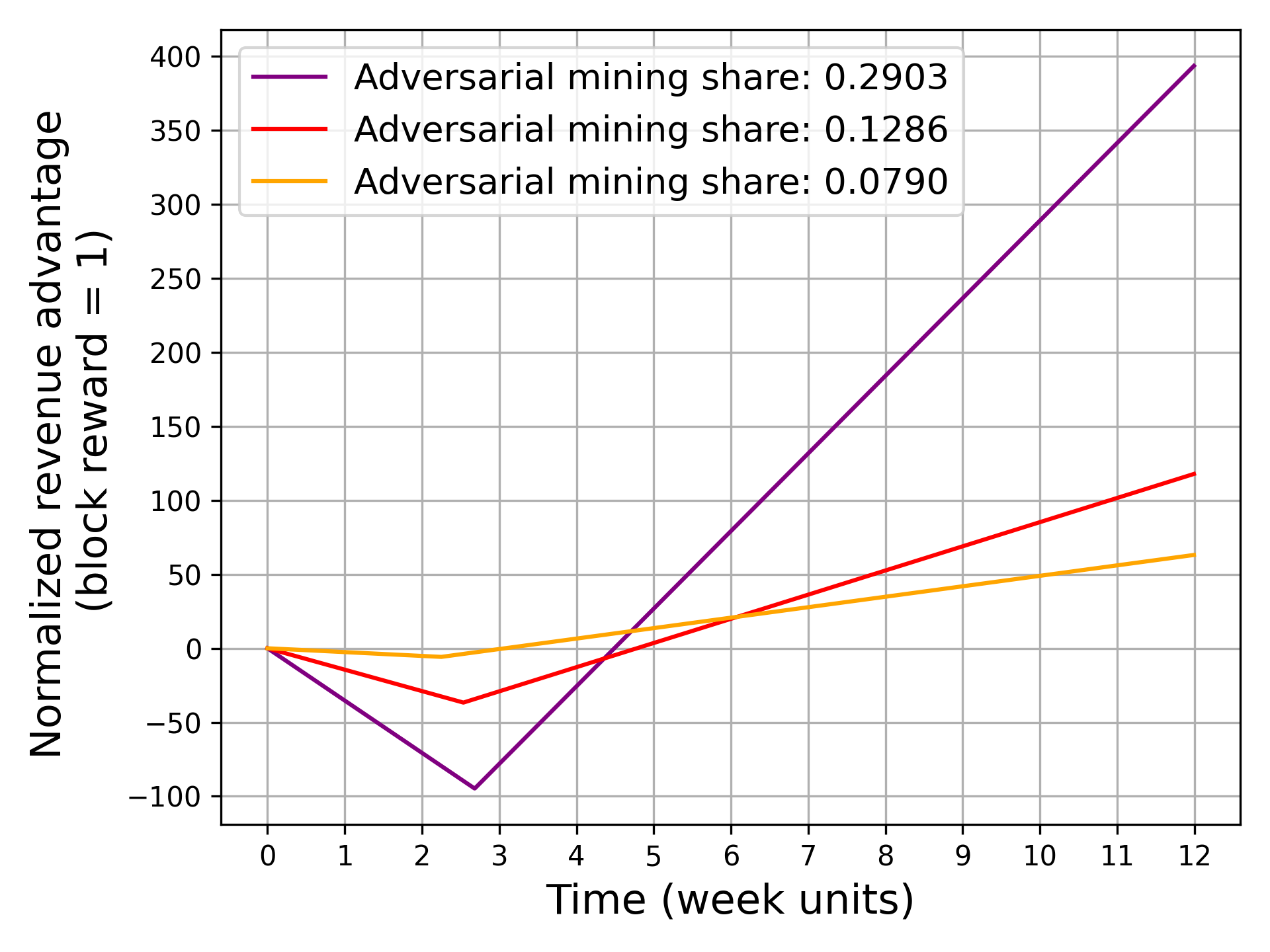}
    \caption{The normalized revenue advantage of the bribery attack over time.}
    \label{fig:profit_lag_bribery}
    % \vspace{-5pt}
\end{figure}

\section{Discussion}
Thus far, Bitcoin has not experienced a serious mining destruction attack. This can be attributed to various factors, including the high required mining share and the long initial period of loss associated with these attacks. While these factors may limit mining power destruction attacks, they do not provide a 100\% guarantee that Bitcoin will remain immune to such attacks in the future. For example, while a typical selfish mining attack requires a relatively high hash power ($\geq 0.25$)~\cite{sapirshtein2017optimal}, conducting these attacks more strategically, such as through the bribery attack introduced in this paper, can make them feasible even for smaller mining pools. Additionally, as Bitcoin undergoes more halving events and transaction fees become an increasingly important source of rewards, mining destruction attacks can become more destructive. 
This shift toward a transaction-fee era can lower the mining share threshold for a profitable deviation from the honest strategy, as an adversarial mining pool may be incentivized to risk its low-fee-valuable blocks to orphan others' high-fee-valuable blocks. Orphaning such blocks returns their fee-valuable transactions to the mempool, giving the adversary an opportunity to claim them by mining the subsequent block.
% This shift toward a transaction-fee era can lower the mining share threshold, as miners may be willing to sacrifice their less fee-valuable blocks to orphan others' valuable blocks and steal the included transaction fees~\cite{bar2022werlman, bar2023deep, carlsten2016instability}. 
Eliminating the protocol reward can also remove the initial loss period for the adversary, as stolen fee-valuable transactions can offset the reduced block generation rate during the first epoch of the attack~\cite{carlsten2016instability, sarenche2024bitcoin_volatile}. This discussion highlights the importance of analyzing these threats and developing solutions to help Bitcoin remain secure against such attacks in the future. 

Multiple solutions have been suggested in the literature, each with its own limitations. Some solutions rely on assessing the timeliness of block publication to detect honest and adversarial behavior~\cite{heilman2014one, lee2018preventing, solat2017brief, reno2022preventing, zhang2017publish}. However, beyond the limitations of these schemes in partially-synchronous and asynchronous networks, they cannot effectively mitigate attacks in the presence of rational mining pools. Note that rational miners choose the fork to mine on based on its return, regardless of whether the fork is honest or not. There are other solutions that suggest modifying the current Bitcoin difficulty adjustment mechanism (DAM) to prevent difficulty reduction during mining destruction attacks~\cite{zhou2022effective, azimy2022preventing, grunspan2018profitability, sarenche2024time}. These solutions rely on the existence of an honest minority that reports evidence of destruction attacks, such as orphaned blocks. This evidence helps the DAM distinguish between an attack occurrence and a part of the network being offline, thereby preventing difficulty reduction during the attack. A comprehensive discussion on the impact of DAM on the profitability of selfish mining and bribery attacks, as well as modifications to the current Bitcoin DAM to mitigate these attacks, is presented in Appendix~\ref{appendix:mitigation_DAM_selfish_bribery}. However, it is important to note that there exist certain mining power destruction attacks, such as the distraction attack introduced in this paper (Appendix~\ref{sec:distraction}), that destroy mining power without generating any evidence. Furthermore, these solutions may limit adversarial profit at the cost of reducing the block generation rate, thereby penalizing honest miners. For more information on strategies to counter selfish mining, readers are referred to~\cite{sarenche2024time, madhushanie2024selfish, zhang2019lay, nicolas2019comprehensive}.

\section{Conclusion}
In this paper, we discussed how an adversarial mining pool can use various methods to destroy a portion of network's mining power. To execute a mining power destruction attack, the adversarial mining pool must invest some resources, which can be viewed as an initial cost. These resources may include withholding a block in selfish mining or offering a bribe in a bribery attack. However, this initial cost can be compensated, as the destruction attack lowers mining difficulty, resulting in an increased block production rate for the adversarial mining pool.
% the adversarial pool must invest some resources, such as withholding a block in selfish mining or paying a bribe in a bribery attack. This initial cost can be offset, as the attack may lower mining difficulty, increasing the adversarial pool's block production rate.
Victims of destruction attacks can take countermeasures to mitigate them. For example, hiding their identity as block miners is a simple yet effective strategy. Additionally, mining pools can collaborate on mutual agreements, like dividing larger pools into smaller ones or adopting a more secure difficulty adjustment mechanism. However, for every wise solution, there may be an equally clever counterattack.
% In this paper, we aimed to convey the idea that an adversarial mining pool can employ various methods to destroy a portion of mining power. To execute the mining power destruction attack, the adversarial mining pool needs to expend some of its resources, which can be considered the initial cost of an investment. These resources may take the form of a withheld block in the case of selfish mining or a bribe in the case of a bribery attack. However, this initial cost can potentially be compensated, as the mining power destruction attack can result in a reduction of mining difficulty, leading to an increase in the block production rate of the adversarial mining pool.
% From the perspective of victims of destruction attacks, we discussed some useful tips that can help mitigate the impact. For instance, hiding their identity as a block miner can be a straightforward yet fairly effective solution. Moreover, to safeguard the network, all mining pools can collaborate on mutual agreements, such as dividing larger mining pools into smaller ones or moving towards a more secure difficulty adjustment mechanism. However, this is not the end, as for every wise solution, there may be a wise attack to counter it.
%
% ---- Bibliography ----
%
% BibTeX users should specify bibliography style 'splncs04'.
% References will then be sorted and formatted in the correct style.
%
\bibliographystyle{splncs04}
\bibliography{1_mybib}

\appendix
\section{Parameters Involved in Semi-Rational Fork Choice}\label{appendix_incentivization_factor}
For $\epsilon = \infty$, the $(\epsilon,D)$-semi-rational fork choice rule is the same as the honest fork choice rule. The $\epsilon$ parameter in the definition of the semi-rational fork choice rule captures the idea that a mining pool may decide to stick to the honest strategy up to facing a normalized loss of $\epsilon$. However, once the normalized loss it incurs by following the honest strategy exceeds $\epsilon$, it may deviate from the honest strategy. The parameter $D$ in the definition of the semi-rational fork choice rule represents the maximum length by which the selected chain can be shorter than the longest chain. We use $D$ to present our theorems and lemmas throughout the text. The larger the parameter $D$, the closer we get to modeling a real-world rational mining pool. However, as $D$ increases, modeling the system becomes more complex, since the number of actions a petty-compliant mining pool might consider also increases. 

%%%%%%%%%%%%%%%%%%%%%%%%%%%%%%%%%%%%%%%%%%%%%%%%%%%%%%%%%%%%%%%%%%%%%%%%%%%%%%%%%%
\section{Our System Model Limitations} \label{appendix:model_limitations}
In our system model, we assume that all block rewards are fixed, and the adversary can place bribes on some blocks to increase the expected return from mining on those blocks. However, this model does not accurately reflect the real world, where blocks can collect varying levels of transaction fees. In Bitcoin, block rewards come from two sources: the protocol reward and the transaction fee reward. In the early years of Bitcoin's introduction, the protocol reward was much higher than the transaction fees, making the fixed block reward assumption reasonable. However, as Bitcoin undergoes more halving events, the protocol reward decreases, and transaction fees become the primary source of mining rewards. In the new transaction-fee era, different blocks can have varying rewards, even without bribes being placed on top of them. This can significantly impact mining power destruction attacks. For instance, in a block race, petty-compliant mining pools will typically adopt the block that leaves the higher transaction fee in the mempool, which is usually the block mined earlier. Therefore, a selfish miner who mines the first block and keeps it secret has a higher chance of winning the block race compared to the miner of the other block in the public fork, suggesting that these attacks may be more threatening in the transaction-fee era. Several papers in the literature~\cite{bar2022werlman, bar2023deep, carlsten2016instability, sarenche2024bitcoin_volatile} have analyzed Bitcoin mining destruction attacks in the presence of altruistic or infinitesimally petty-compliant nodes. However, studying mining destruction attacks in the presence of petty-compliant mining pools in the transaction-fee era remains an interesting direction for future research.

Another assumption in our model is that there is a single adversarial mining pool in the system, while the remaining pools are petty-compliant. This limits the strategies of the remaining petty-compliant pools to a predefined set of actions, implying they cannot respond arbitrarily to the attack. However, in practice, multiple mining pools may behave adversarially to maximize their payoffs. For example, once an adversarial mining pool offers a bribe to launch an attack, another pool may respond by offering a higher bribe to defend against the attack. Several papers in the literature have analyzed multi-agent selfish mining attacks~\cite{bai2023blockchain, zhang2020analysing}, although they do not consider bribes and transaction fees. Analyzing multi-agent destruction attacks in the presence of bribes and within the transaction-fee era could be another interesting direction for future research.

\section{Selfish Mining Attack: Supplementary}
\subsection{Whale Transactions As a Bribe} \label{appendix: whale}
Once the adversarial mining pool mines its first block in its private fork, it publishes transaction $tx_1$ in the public transaction pool. If the transaction fee of $tx_1$ is chosen properly, $tx_1$ would be included in the first block of the public non-adversarial fork. The adversarial mining pool refrains from including $tx_1$ in its private fork. Once the non-adversarial fork length becomes equal to the adversarial fork, the adversarial mining pool publishes its private fork along with transaction $tx_2$ that is only valid if $tx_1$ is not already included in the chain (both $tx_1$ and $tx_2$ pays out form the same address). The transaction fee of $tx_2$ can be set to $\epsilon R$. In this case, the indifferent petty-compliant mining pools get incentivized to select the adversarial fork since they can include $tx_2$ only in the block mined on top of the adversarial fork, not the non-adversarial fork.
\subsection{Proof of Lemma~\ref{lemma: selfish_mining_advantage}} \label{appendix: selfish_mining_advantage}
\begin{proof}
    Let $\mathcal{P}_{\overline{\mathcal{A}}}$ denote the set of all mining pools excluding the adversarial mining pool. Assume that the single block available in the non-adversarial fork is mined by petty-compliant mining pool $p_i \in \mathcal{P}_{\overline{\mathcal{A}}}$, whose mining share is equal to $\alpha_i$. The probability of this event is equal to $\alpha_i$. Since there is a bribe of $\epsilon R$ available on top of the adversarial fork, all the mining pools excluding $p_i$ select the adversarial fork to mine on top of. This implies the only mining pool that is incentivized to mine on top of the non-adversarial fork is the mining pool $p_i$ itself. Therefore, given that the first block of the non-adversarial fork is mined by the petty-compliant mining pool $p_i$, the probability that the next block is mined on top of the non-adversarial fork is equal to $\alpha_i$. Consequently, the total probability of the event that the next block is mined on top of the non-adversarial fork (denoted by event $E$) can be obtained as follows:
    \begin{equation}
        \texttt{Pr}(E) = \frac{\sum_{p_i \in \mathcal{P}_{\overline{\mathcal{A}}}} {{\alpha_i}^2}}{\sum_{p_i \in \mathcal{P}_{\overline{\mathcal{A}}}} {\alpha_i}} = \frac{\sum_{p_i \in \mathcal{P}_{\overline{\mathcal{A}}}} {{\alpha_i}^2}}{1-\alpha_\mathcal{A}} = \beta_\mathcal{A} ,
    \end{equation}
    where $\alpha_\mathcal{A}$ and $\beta_\mathcal{A}$ represent the mining share and the residual centralization factor of the adversarial mining pool, respectively. The event mentioned in the lemma is complementary to event $E$, and thus, its probability is equal to $1 - \beta_\mathcal{A}$.
\null\hfill\qedsymbol\end{proof}

\subsection{Selfish Mining Strategy} \label{appendix_slefish_mining_strategy}
In this section, we introduce the selfish mining strategy $\pi^\texttt{selfish}$, which is suitable for an $(\epsilon, D=1)$-semi-rational environment, where the mining share of all mining pools excluding the adversarial mining pool is assumed to be less than $0.4302$. This assumption is important as it helps us analyze the strategy of a petty-compliant pool that is under the selfish mining $\pi^\texttt{selfish}$ attack.
\begin{lemma} \label{lemma_0.43}
     Assume mining pool $p_i$ to be an $(\epsilon, D=1)$-petty-compliant mining pool. Consider a fork race between forks $f_1$ and $f_2$, each of length $l_1 = 1$ and $l_2 = 2$, respectively, where the single block on fork $f_1$ is mined by mining pool $p_i$. The mining pool $p_i$ is incentivized to abandon mining on fork $f_1$ to mine on top of $f_2$ if $\alpha_i < 0.4302$.
\end{lemma}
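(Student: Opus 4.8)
The plan is to reduce the statement to a comparison of the two chain expected returns $r_{f_1}$ and $r_{f_2}$ and then evaluate them with a small gambler's-ruin computation. Since $D=1$ and $|f_2|-|f_1|=1$, both forks lie in $\texttt{CH}^{D}$ while $C^{\texttt{long}}=f_2$, so by the $(\epsilon,D=1)$-semi-rational fork choice rule $p_i$ keeps mining on $f_1$ only if $r_{f_1}\ge r_{f_2}$ and $r_{f_1}-r_{f_2}>\epsilon\alpha_i R$. Hence it suffices to prove $r_{f_1}\le r_{f_2}$ whenever $\alpha_i<0.4302$, which in fact gives the conclusion for every $\epsilon\ge 0$.

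For $r_{f_2}$ I would observe that if $p_i$ joins $f_2$, then every pool is on $f_2$: $f_2$ is the strictly longest chain, the only pool with a block in $f_1$ is $p_i$ itself, and a stakeless pool gains nothing from the shorter fork. Thus $f_2$ becomes canonical with certainty and $p_i$'s block in $f_1$ is orphaned. For $r_{f_1}$ I would model the ensuing race by the walk $\Delta=|f_2|-|f_1|$, which starts at $1$, decreases by $1$ with probability $\alpha_i$ (a $p_i$-block on $f_1$) and increases by $1$ with probability $1-\alpha_i$ (an other-pool block on $f_2$). The state $\Delta=2$ is absorbing and counts as a loss, because then $f_1\notin\texttt{CH}^{D}$ and $p_i$ is forced back onto $f_2$; I would optimistically declare the race won the instant $f_1$ becomes strictly longest ($\Delta=-1$), so the resulting value is an \emph{upper bound} on $r_{f_1}$. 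Coupling the per-pool block-arrival processes across the two options, $p_i$ produces exactly the same blocks whichever fork it sits on, and once the race ends both options behave identically; so in the limit $\gamma\to1$ the common tails cancel and $r_{f_1}-r_{f_2}$ reduces to a finite net difference: $+R$ for reviving the original $f_1$-block if $f_1$ wins, and $-KR$ for the $K$ blocks $p_i$ wasted on $f_1$ if $f_1$ loses, where $K$ is the number of down-steps before absorption. Hence $r_{f_1}-r_{f_2}\le R\big(q-(1-q)\,\mathbb{E}[K\mid \text{loss}]\big)$, with $q$ the winning probability.

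One-step recursions (e.g.\ $q=\alpha_i^2+\alpha_i(1-\alpha_i)q$) give $q=\alpha_i^2/(1-\alpha_i+\alpha_i^2)$ and, similarly, $\mathbb{E}[K\mid\text{loss}]=\alpha_i(1-\alpha_i)/(1-\alpha_i+\alpha_i^2)$. Substituting, the bracket equals $\alpha_i\big(\alpha_i^2-(1-\alpha_i)^3\big)/(1-\alpha_i+\alpha_i^2)^2$, so $r_{f_1}\le r_{f_2}$ exactly when $\alpha_i^2\le(1-\alpha_i)^3$, i.e.\ when $\alpha_i$ does not exceed the root of $(1-\alpha)^3=\alpha^2$ in $(0,1)$ (unique, since $(1-\alpha)^3-\alpha^2$ is strictly decreasing there), which is $\approx 0.4302$. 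Combined with the first paragraph, this proves the lemma.

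The step I expect to be the main obstacle is justifying the race model behind $r_{f_1}$: a priori the other petty-compliant pools could refuse to leave $f_2$ after $f_1$ overtakes it, or could chase $f_1$, so declaring a ``win'' at $\Delta=-1$ is not automatic, and $p_i$'s own continuation behaviour during the race is itself governed by the fork choice rule. The way around this is that proving ``deviation is unprofitable'' only requires an \emph{upper} bound on $r_{f_1}$: the absorption at $\Delta=2$ is \emph{forced} by the $D=1$ rule (hence exact), while crediting $p_i$ with the whole of $f_1$ as soon as $f_1$ is longest can only overstate $p_i$'s return, so if even this optimistic value fails to beat $r_{f_2}$, switching to $f_2$ is incentivized. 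The secondary difficulty---making sense of the $\gamma\to1$ infinite-horizon returns---is handled by the coupling argument, which cancels the common tail and isolates the finite net effect above.
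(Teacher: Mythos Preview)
Your proposal is correct and arrives at the same threshold as the paper. Both arguments reduce the question to comparing $r_{f_1}$ and $r_{f_2}$ via the random walk on the gap $\Delta=|f_2|-|f_1|$, with absorption at $\Delta=2$ (forced by $D=1$) and at $\Delta=-1$ (treated as an outright win, the same favourable-to-$p_i$ assumption the paper makes explicitly). The paper, however, runs the computation through its grid path-counting Lemmas on $F_s^{d}$ and $G_s^{d}$, keeps a general deficit parameter $d$, asserts monotonicity of $r_1-r_2$ in $d$, and only sets $d=2$ at the very end. You instead solve the two-barrier gambler's ruin directly by first-step recursion, which is shorter and produces the clean closed-form threshold $\alpha^2=(1-\alpha)^3$; that identity is implicit in the paper's numerics but never displayed. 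Your coupling argument for the $\gamma\to1$ limit also makes explicit why it is legitimate to compare only the rewards accumulated up to the absorption time, a point the paper takes for granted. The trade-off is that your derivation is hard-wired to $D=1$, whereas the paper's machinery nominally extends to larger deficits by varying $d$.
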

The proof of Lemma~\ref{lemma_0.43} is presented in Appendix~\ref{appendix:proof_lemma_0.43}. Note that in the altruistic setting, we can always assume that once a mining pool finds its fork shorter than the adversarial fork, it will adopt the adversarial fork. However, this behavior is not necessarily followed by a petty-compliant mining pool under the attack.
Lemma~\ref{lemma_0.43} shows that, if under a selfish mining attack, an $(\epsilon, D=1)$-petty-compliant mining pool with mining share less than $0.4302$ finds its one-block fork shorter than the adversarial fork, it is also incentivized to abandon its fork and mine on top of the adversarial fork.

\begin{definition} [Strategy $\pi^\texttt{selfish}$] \label{def:strategy}
Let $l_\mathcal{A}$ and $l_\mathcal{R}$ denote the length of the adversarial and semi-rational forks, respectively. Strategy $\pi^\texttt{selfish}$ is defined as follows:
\begin{itemize}
    \item Assume the initial fork state is $(l_\mathcal{A}, l_\mathcal{R})  = (0,0)$. If a petty-compliant mining pool mines a new block, the adversarial mining pool adds the semi-rational block to the head of its chain and continues mining on top of that. Therefore, the fork state remains in state $(l_\mathcal{A}, l_\mathcal{R})  = (0,0)$.
    \item Assume the initial fork state is $(l_\mathcal{A}, l_\mathcal{R})  = (0,0)$. If the adversarial mining pool mines a new block, it keeps the block hidden and does not publish it to the petty-compliant pools. Therefore, the fork state transitions to state $(l_\mathcal{A}, l_\mathcal{R})  = (1,0)$.
    \item Assume the initial fork state is $(l_\mathcal{A}, l_\mathcal{R})  = (1,0)$. If a petty-compliant mining pool $p_i$ mines a new block, a fork race occurs between the adversarial and the semi-rational blocks, and the fork state transitions to state $(l_\mathcal{A}, l_\mathcal{R})  = (1,1)$. In this case, the adversarial mining pool publishes its block along with a normalized bribe of $\texttt{br}= \epsilon$ on top of it. As a result of this bribe, all petty-compliant pools, except $p_i$, mine on top of the adversarial fork, while mining pool $p_i$ continues to mine on top of its own block.
     % As a result of this action, $2$ possible scenarios can take place:
    \item Assume the initial fork state is $(l_\mathcal{A}, l_\mathcal{R})  = (1,0)$. If the adversarial mining pool mines a new block, the fork state transitions to state $(l_\mathcal{A}, l_\mathcal{R})  = (2,0)$.
    \item Assume the initial fork state is $(l_\mathcal{A}, l_\mathcal{R})  = (1,1)$. If the next block is mined on top of the semi-rational fork, the adversarial mining pool gives up on its block and admits the semi-rational fork as the canonical fork. Therefore, the fork state transitions to state $(l_\mathcal{A}, l_\mathcal{R})  = (0,0)$. In this case, the adversarial mining pool retracts its bribe and avoids incurring the bribe cost.
    \item Assume the initial fork state is $(l_\mathcal{A}, l_\mathcal{R})  = (1,1)$. If the next block is mined on top of the adversarial fork, the length of the adversarial fork exceeds that of the semi-rational fork by one block. In this case, we argue that all mining pools adopt the longer adversarial fork and mine on top of it. For a petty-compliant pool $p_j$ that is not the miner of the single block in the semi-rational fork ($p_j \ne p_i$), it is evident that it adopts the longer adversarial fork, as there is no incentive to mine on top of the shorter semi-rational fork owned by another pool. According to Lemma~\ref{lemma_0.43} and the assumption $\alpha_i < 0.4302$, even the petty-compliant mining pool $p_i$, which mined the single semi-rational block, is incentivized to abandon its block and adopt the adversarial fork as the canonical fork. Therefore, the fork state transitions to state $(l_\mathcal{A}, l_\mathcal{R})  = (0,0)$. Note that, in this case, the bribe becomes payable only if the adversarial fork is extended by a non-adversarial block.
    \item Assume the initial fork state is $(l_\mathcal{A}, l_\mathcal{R})  = (2,0)$. If a petty-compliant mining pool $p_i$ mines a new block, the adversarial mining pool publishes its $2$ hidden blocks. Similar to the previous case, according to Lemma~\ref{lemma_0.43} and under the assumption of $\alpha_i < 0.4302$, the petty-compliant mining pool $\pi$ gives up on its block and admits the adversarial fork as the canonical fork. Therefore, the fork state transitions to state $(l_\mathcal{A}, l_\mathcal{R})  = (0,0)$.
    \item Assume the initial fork state is $(l_\mathcal{A}, l_\mathcal{R})  = (2,0)$. If the adversarial mining pool mines a new block, the adversarial mining pool publishes its first hidden block and remains in the state $(l_\mathcal{A}, l_\mathcal{A})  = (2,0)$. \footnote{This action simplifies the relative revenue calculation and is not the optimal action that the adversarial mining pool can take in state $(l_\mathcal{A}, l_\mathcal{R})  = (2,0)$. In an optimal strategy, the adversarial mining pool may hide all three blocks to orphan a longer non-adversarial fork.}
\end{itemize}
    
\end{definition}

\subsection{Proof of Lemma~\ref{lemma_0.43}} \label{appendix:proof_lemma_0.43}
Before proving Lemma~\ref{lemma_0.43}, we first review some useful lemmas. To calculate the expected return of different forks in the view of a given petty-compliant mining pool $p$, we use random walking on a $(x,y)$-grid. The two-dimensional $(x,y)$-grid represents the fork race between the fork of mining pool $p$ and the fork of remaining mining pools, referred to as the public fork. Each fork race scenario can be represented by a path on this grid, which is referred to as the \emph{mining path}. Whenever the petty-compliant mining pool $p$ mines a new block, the mining path moves one step to the right, and whenever a block is added on top of the public fork, the mining path moves one step up.
\begin{lemma}\label{lemma_never_reach_y=x_r}
    Consider the $(x,y)$-grid representation of the fork race between the petty-compliant mining pool $p$ with mining share $\alpha$ and the remaining mining pools. Let $\overline{P_{r}}$ for $r\ge1$ denote the probability of the event that the mining path starting at $(0,0)$ never reaches the line $y=x-r$. We have $\overline{P_{r}} = 1-\Big(\frac{\alpha}{1-\alpha}\Big)^r$.
\end{lemma}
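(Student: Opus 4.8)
The plan is to recast the grid problem as a one-dimensional biased random walk and then apply a standard first-passage (renewal) argument. After any number of blocks have been mined, the coordinate $x$ counts the blocks mined by $p$ and $y$ counts the blocks mined by the rest of the network; since successive blocks are awarded independently, each new block is a step to the right with probability $\alpha$ and a step up with probability $1-\alpha$. The mining path starting at $(0,0)$ reaches the line $y = x - r$ precisely when the process $S_n := x_n - y_n$ first attains the value $r$, where $(S_n)_{n\ge 0}$ is the walk on $\mathbb{Z}$ with $S_0 = 0$ and i.i.d.\ increments equal to $+1$ with probability $\alpha$ and $-1$ with probability $1-\alpha$. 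Hence $P_r := 1 - \overline{P_r}$ is exactly the probability that $S$ ever reaches level $r$.

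First I treat $r=1$. Let $\rho := P_1$ be the probability that $S$, started at $0$, ever hits $1$. Conditioning on the first step: with probability $\alpha$ we are done immediately; with probability $1-\alpha$ we sit at $-1$, and reaching $1$ from there requires first climbing back to $0$ and then from $0$ up to $1$. By spatial homogeneity and the strong Markov property these two passages are independent, each of probability $\rho$, which yields the fixed-point equation
\begin{equation}
\rho = \alpha + (1-\alpha)\rho^2 .
\end{equation}
Its roots are $\rho = 1$ and $\rho = \alpha/(1-\alpha)$. The correct one is selected by observing that $\alpha < 1/2$ (implicit in the lemma, since otherwise the stated value would not be a probability): the walk then has strictly negative drift, so by the strong law of large numbers $S_n \to -\infty$ almost surely, hence $\sup_n S_n < \infty$ a.s., which forces $\rho < 1$ and therefore $\rho = \alpha/(1-\alpha)$. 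The boundary case $\alpha = 1/2$ gives $\rho = 1$, consistent with recurrence of the simple walk and with $\overline{P_r}=0$.

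Then I bootstrap to general $r$: reaching level $r$ from $0$ amounts to successively reaching $1$, then $2$, and so on up to $r$, and by homogeneity and the strong Markov property these are independent copies of the $0\to 1$ passage. Thus $P_r = \rho^r = \big(\alpha/(1-\alpha)\big)^r$, giving $\overline{P_r} = 1 - \big(\alpha/(1-\alpha)\big)^r$, as claimed.

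The only genuinely delicate step is the root selection for the quadratic, i.e.\ establishing $\overline{P_1} > 0$; this is where the hypothesis $\alpha < 1/2$ enters, through the negative-drift argument above. Everything else is routine bookkeeping. If one prefers to avoid invoking the strong law, an alternative is the classical gambler's-ruin computation on the finite interval $\{-M, \dots, r\}$ using the martingale $\big((1-\alpha)/\alpha\big)^{S_n}$ together with optional stopping, and then letting $M \to \infty$; this produces $P_r = \big(\alpha/(1-\alpha)\big)^r$ directly.
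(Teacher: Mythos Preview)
Your proof is correct and takes a genuinely different route from the paper. The paper works combinatorially: it first writes down the linear recursion $P_r = \alpha P_{r-1} + (1-\alpha)P_{r+1}$ by conditioning on the first step, and then computes $P_1$ and $P_2$ explicitly by summing Catalan-number series (counting paths that first hit $y=x-1$ at $(s+1,s)$, respectively $y=x-2$ at $(s+2,s)$), appealing to an external reference for the closed form of $\sum c_s \big(\alpha(1-\alpha)\big)^s$. With two initial values in hand, the second-order recursion is then solved. You instead collapse the picture to the one-dimensional walk $S_n = x_n - y_n$, derive the quadratic fixed-point equation $\rho = \alpha + (1-\alpha)\rho^2$ for $P_1$, and pick the root via the negative-drift argument; general $r$ then follows multiplicatively from the strong Markov property. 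Your approach is the standard first-passage argument from random-walk theory and is cleaner: it avoids the Catalan machinery entirely and makes the role of the hypothesis $\alpha<1/2$ explicit (root selection). The paper's approach, by contrast, is more elementary in the sense that it never invokes the strong law or the strong Markov property, trading those for an explicit series evaluation; it also yields the exact distribution of the hitting point as a by-product, which feeds into the neighbouring lemmas.
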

\begin{proof}
    The proof of Lemma~\ref{lemma_never_reach_y=x_r} is presented in Appendix D.2 of~\cite{sarenche2024deep}. Here, we briefly review the idea behind the proof. 
    
    Let $P_r$ denote the probability of the event that mining path starting at $(0,0)$ reaches the line $y=x-r$ at least once. Starting at point $(0,0)$, the mining path moves to point $(1,0)$ with probability $\alpha$ and moves to point $(0, 1)$ with probability $1-\alpha$. The probability of the event that mining path starting at $(1,0)$ reaches the line $y=x-r$ is equal to $P_{r-1}$. And, the probability of the event that mining path starting at $(0,1)$ reaches the line $y=x-r$ is equal to $P_{r+1}$. Therefore, we have: $P_r = \alpha P_{r-1} + (1-\alpha) P_{r+1}$. This recursive equation can be evaluated for different values of $r \ge 1$ by haviing $P_1$ and $P_2$.

    We first find the probability of \( P_1 \), i.e., the probability of reaching the line \( y = x - 1 \) starting at point $(0,0)$. Suppose the mining path reaches the line \( y = x - 1 \) at the point \( (s + 1, s) \) for the first time. For this to occur, the mining path must reach the point \( (s, s) \) without crossing below the line \( y = x \), and then take a step to the right. The number of valid paths from \( (0, 0) \) to \( (s, s) \), while not passing below the line \( y = x \), corresponds to the \( s \)-th Catalan number, which we refer to as \( c_s \). Therefore, the probability that the mining path reaches the line \( y = x - 1 \) for the first time at \( (s + 1, s) \) is given by \( c_s \alpha^{s+1} (1 - \alpha)^s \). Hence, the overall probability of reaching the line \( y = x - 1 \) can be expressed as:

    \begin{equation}
        P_1 = \alpha \sum_{s=0}^{\infty} c_s \alpha^s (1 - \alpha)^s = \frac{\alpha}{1 - \alpha} \enspace.
    \end{equation}
    
    The method for solving this series is provided in~\cite{chen2016interesting}. We next find the probability of \( P_2 \), i.e., the probability of reaching the line \( y = x - 1 \) starting at point $(0,0)$. Suppose the mining path first reaches the line \( y = x - 2 \) at the point \( (s + 2, s) \). For this to happen, the mining path must first reach the point \( (s + 1, s) \) without falling below the line \( y = x - 1 \), and then take one step to the right. The number of valid paths from \( (0, 0) \) to \( (s + 1, s) \), without falling below the line \( y = x - 1 \), corresponds to the \( (s + 1) \)-th Catalan number, \( c_{s+1} \). Thus, the probability of reaching the line \( y = x - 2 \) for the first time at the point \( (s + 2, s) \) is given by \( c_{s+1} \alpha^{s+2} (1 - \alpha)^s \). Therefore, the probability of reaching the line \( y = x - 2 \) is:

    \begin{equation}
        P_2 = \alpha^2 \sum_{s=0}^{\infty} c_{s+1} \alpha^s (1 - \alpha)^s = \frac{\alpha}{1 - \alpha} \sum_{s=1}^{\infty} c_s \alpha^s (1 - \alpha)^s = \left( \frac{\alpha}{1 - \alpha} \right)^2 \enspace.
    \end{equation}

    Knowing $P_1 = \frac{\alpha}{1-\alpha}$, $P_2 = \big(\frac{\alpha}{1-\alpha}\big)^2$, and the recursive equation $P_r = \alpha P_{r-1} + (1-\alpha) P_{r+1}$, the probability $P_r$ for $r\ge1$ can be obtained as $P_r = \big(\frac{\alpha}{1-\alpha}\big)^r$. Therefore, the probability of the event that the mining path starting at $(0,0)$ never reaches the line $y=x-r$ canbe obtained as $\overline{P_{r}} = 1-\Big(\frac{\alpha}{1-\alpha}\Big)^r$.
    
\null\hfill\qedsymbol\end{proof}

\begin{lemma}\label{lemma_number_path_y=x+d,y>x-2,y<x+d}
     Let $G_s^d$ for $d \ge 1$ denote the number of paths in a $(x,y)$-grid from start point $(0,0)$ to the point $(s,s+d)$ without reaching the lines $y=x-2$ and $y=x+d$ in advance. We have:
    \begin{equation}
        \begin{split}
            & \sum_{s=0}^{\infty} {G_s^d \big(\alpha(1-\alpha)\big)^s}= \frac{1-2\alpha}{(1-\alpha)^{d+2}-\alpha^{d+2}} \enspace, \\
            & \sum_{s=0}^{\infty} {s G_s^d \big(\alpha(1-\alpha)\big)^s} = \frac{(d+2)\alpha(1-\alpha)\big((1-\alpha)^{d+1}+\alpha^{d+1}\big)}{\big((1-\alpha)^{d+2}-\alpha^{d+2}\big)^2} \\
            & - \frac{2\alpha(1-\alpha)}{(1-2\alpha)\big((1-\alpha)^{d+2}-\alpha^{d+2}\big)} \enspace .
        \end{split}
    \end{equation}
\end{lemma}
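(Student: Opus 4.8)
The plan is to prove the first identity by re-encoding the lattice paths counted by $G_s^d$ as one-dimensional $\pm 1$ walks confined to a strip, applying the reflection principle to obtain an alternating sum of binomial coefficients, and then summing the resulting double series with the Catalan-type generating-function manipulations already used in the proof of Lemma~\ref{lemma_never_reach_y=x_r} (cf.~\cite{chen2016interesting}); the second identity will then follow from the first by differentiation. First I would pass to the height coordinate $z = y-x$. A path counted by $G_s^d$ becomes a $\pm 1$ walk from $z=0$ to $z=d$ with $s$ ``down'' steps (the right steps) and $s+d$ ``up'' steps that never reaches $z=-2$ and never reaches $z=d$ strictly before its last vertex; since the final step must then be the up-step from $z=d-1$ to $z=d$, deleting it shows that $G_s^d$ equals the number of $\pm 1$ walks from $z=0$ to $z=d-1$ of length $2s+d-1$ (with $s$ down-steps) that stay inside the strip $\{-1,0,\dots,d-1\}$, i.e.\ that avoid the two forbidden levels $z=-2$ and $z=d$, which lie a distance $d+2$ apart.

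Next, the reflection principle for walks between two absorbing barriers yields
\begin{equation}
G_s^d = \sum_{k\in\mathbb{Z}}\left(\binom{2s+d-1}{s-k(d+2)} - \binom{2s+d-1}{s-1-k(d+2)}\right),
\end{equation}
where for each fixed $s$ only finitely many summands are nonzero; I would verify this directly for small $d,s$ (e.g.\ $d=1$, where $G_s^1=1$) to pin down the barrier offsets. Multiplying by $\big(\alpha(1-\alpha)\big)^s$ and summing over $s$ — interchanging the sums, which is legitimate because the inner sum is finite for each $s$ — I would invoke the classical identity $\sum_{s\ge 0}\binom{2s+m}{s}x^s = \frac{1}{\sqrt{1-4x}}\big(\frac{1-\sqrt{1-4x}}{2x}\big)^{m}$ for $m\ge 0$, together with the symmetry $\binom{n}{k}=\binom{n}{n-k}$, to bring every term into this shape. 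The key point is that at $x=\alpha(1-\alpha)$ one has $\sqrt{1-4x}=1-2\alpha$ and $\frac{1-\sqrt{1-4x}}{2x}=\frac{1}{1-\alpha}$, so $\sum_{s}\binom{2s+m}{s}\big(\alpha(1-\alpha)\big)^s = \frac{1}{(1-2\alpha)(1-\alpha)^{m}}$; hence each of the two families of reflected terms turns into a geometric series in $\big(\frac{\alpha}{1-\alpha}\big)^{d+2}$. Summing the two geometric series and combining the $+$ and $-$ contributions collapses everything (using, at the end, $(1-\alpha)^3+\alpha^3-\alpha(1-\alpha)=(1-2\alpha)^2$) to $\frac{1-2\alpha}{(1-\alpha)^{d+2}-\alpha^{d+2}}$, the first claimed identity.

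The second identity is then simply $t\frac{d}{dt}$ of the first, with $t=\alpha(1-\alpha)$: since $\frac{dt}{d\alpha}=1-2\alpha$ we have $t\frac{d}{dt}=\frac{\alpha(1-\alpha)}{1-2\alpha}\frac{d}{d\alpha}$, and differentiating $\frac{1-2\alpha}{(1-\alpha)^{d+2}-\alpha^{d+2}}$ in $\alpha$ and multiplying by $\frac{\alpha(1-\alpha)}{1-2\alpha}$ reproduces, after splitting into two fractions, exactly the stated expression.

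I expect the main obstacle to be the bookkeeping in the reflection step: fixing the two binomial families and the period $d+2$ correctly — especially the off-by-one caused by deleting the final up-step and by the lower barrier sitting at $z=-2$ rather than $z=-1$ — and justifying the rearrangement of the doubly infinite sum before summing the geometric series. A cleaner fallback is the transfer-matrix route: write $\sum_s G_s^d t^s$ as a ratio of tridiagonal determinants for the walk on $\{-1,\dots,d-1\}$ with absorption at $d$, note that the denominator determinants $D_d(t)$ satisfy $D_d = D_{d-1} - t\,D_{d-2}$ (characteristic roots $\alpha$ and $1-\alpha$ when $t=\alpha(1-\alpha)$), and solve this recursion against the base cases $D_0=1$, $D_1=1-t$ to obtain $D_d=\frac{(1-\alpha)^{d+2}-\alpha^{d+2}}{1-2\alpha}$; the second identity then follows by the same differentiation as above.
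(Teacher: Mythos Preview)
Your plan is sound and both the reflection-principle route and the transfer-matrix fallback would go through, but the paper's own argument is different and considerably shorter. The paper does not compute $G_s^d$ at all; instead it reads $G_s^d\,\alpha^s(1-\alpha)^{s+d}$ as the probability that the biased $(\alpha,1-\alpha)$ walk first hits the line $y=x+d$ at $(s,s+d)$ without having touched $y=x-2$, multiplies by the gambler's-ruin probability $1-(\alpha/(1-\alpha))^{d+2}$ of never subsequently reaching $y=x-2$ (Lemma~\ref{lemma_never_reach_y=x_r}), and observes that summing over $s$ must return the total probability $1-(\alpha/(1-\alpha))^{2}$ of never reaching $y=x-2$; one division then gives the first identity. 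Your approach trades this single probabilistic decomposition for explicit combinatorics: the reflection formula yields $G_s^d$ itself and the transfer-matrix recursion exposes why the denominator factors as $(1-\alpha)^{d+2}-\alpha^{d+2}$ via the characteristic roots $\alpha,1-\alpha$, which is structurally informative but costs the bookkeeping you anticipate. For the second identity both proofs coincide: the paper also substitutes $x=\alpha(1-\alpha)$, differentiates, multiplies by $x$, and substitutes back.
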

\begin{proof}
    The proof follows the same logic as the proof of Lemma 4 presented in Appendix D.2 of~\cite{sarenche2024deep}.
    Assume $\texttt{Event}_1(s)$ is defined as follows: (i) the block path starting at $(0,0)$ reaches the point $(s,s+d-1)$, and (ii) before reaching the point $(s,s+d-1)$, the block path never passes the line $y=x+d-1$ and never reaches the line $y=x-2$. According to the definition of $G_s^d$, the probability of $\texttt{Event}_1(s)$ is equal to $G_s^d \alpha^s (1-\alpha)^{s+d-1}$. Assume $\texttt{Event}_2(s)$ is defined as follows: (i) the block path starting at $(0,0)$ passes the line $y=x+d-1$ for the first time at point $(s,s+d-1)$, and (ii) before reaching the point $(s,s+d-1)$, the block path never reaches the line $y=x-2$. $\texttt{Event}_2(s)$ happens if, after the occurrence of $\texttt{Event}_1(s)$ and reaching the point $(s,s+d-1)$, the block path immediately moves one step up to pass the line $y=x+d-1$ and reach the point $(s,s+d)$. Therefore, the probability of $\texttt{Event}_2(s)$ is equal to $G_s^d \alpha^s (1-\alpha)^{s+d}$. Assume $\texttt{Event}_3(s)$ is defined as follows: (i) the block path starting at $(0,0)$ reaches the line $y=x+d$ for the first time at point $(s,s+d)$, and (ii) the block path never reaches the line $y=x-2$ both before and after reaching the point $(s,s+d)$. $\texttt{Event}_3(s)$ happens if, after the occurrence of $\texttt{Event}_2(s)$ and reaching the point $(s,s+d)$, the block path never reaches the line $y=x-r$. The event that the block path starting at $(s,s+d)$ never reaches the line $y=x-2$ is equivalent to the event that the block path starting at $(0,0)$ never reaches the line $y=x-(d+2)$, the probability of which is equal to $1-(\frac{\alpha}{1-\alpha})^{d+2}$ according to Lemma~\ref{lemma_never_reach_y=x_r}. Therefore, the probability of $\texttt{Event}_3(s)$ is equal to $G_s^d \alpha^s (1-\alpha)^{s+d} \big(1-(\frac{\alpha}{1-\alpha})^{d+2}\big)$. Note that since $1-\alpha>\alpha$, a mining path will eventually pass the line $y=x+d$. Therefore, the sum of $\texttt{Event}_3(s)$ probabilities over all values of $s$ is equal to the probability that the block path never reaches the line $y=x-2$, the probability of which is equal to $1-(\frac{\alpha}{1-\alpha})^{2}$ according to Lemma~\ref{lemma_never_reach_y=x_r}. As a result,
    \begin{equation}
        \begin{split}
            &\sum_{s=0}^{\infty} G_s^d \alpha^s (1-\alpha)^{s+d} \big(1-(\frac{\alpha}{1-\alpha})^{d+2}\big) = 1-(\frac{\alpha}{1-\alpha})^{2}\\
            & \Rightarrow \sum_{s=0}^{\infty} G_s^d \big(\alpha(1-\alpha)\big)^s = \frac{1-2\alpha}{(1-\alpha)^{d+2}-\alpha^{d+2}} \enspace.
        \end{split}
   \end{equation}
    To prove the second equality, we use the variable substitution $\alpha(1-\alpha)=x$ in the equality above.
    By taking the derivative from both sides, then multiplying both sides to $x$, and finally substituting $x=\alpha(1-\alpha)$, the second equality can be obtained.   
\null\hfill\qedsymbol\end{proof}

\begin{lemma}\label{lemma_number_path_y=x-2,y>x-2,y<x+d}
     Let $F_s^d$ for $d \ge 1$ denote the number of paths in a $(x,y)$-grid from start point $(0,0)$ to the point $(s+2,s)$ without reaching the lines $y=x-2$ and $y=x+d$ in advance. We have:
    \begin{equation}
        \begin{split}
            & \sum_{s=0}^{\infty} {F_s^d \big(\alpha(1-\alpha)\big)^s}= \frac{1}{\alpha^2}\Big(1-\frac{1-2\alpha}{(1-\alpha)^2\big(1-(\frac{\alpha}{1-\alpha})^{d+2}\big)}\Big) \enspace.
        \end{split}
    \end{equation}
\end{lemma}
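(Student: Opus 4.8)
The plan is to establish this identity by a complementary‑counting argument over first‑passage events of the lattice walk, exactly parallel to the derivations of Lemma~\ref{lemma_never_reach_y=x_r} and Lemma~\ref{lemma_number_path_y=x+d,y>x-2,y<x+d}, and without re‑summing Catalan series. Recall the walk on the $(x,y)$‑grid started at $(0,0)$ that takes a right step with probability $\alpha$ (mining pool $p$ finds a block) and an up step with probability $1-\alpha$ (the public fork extends). Since $1-\alpha>\alpha$, the walk has strictly positive upward drift, so $y-x\to+\infty$ almost surely; in particular it almost surely reaches the line $y=x+d$ in finite time. Moreover a one‑step walk cannot move from $y-x>-2$ to $y-x<-2$ without landing on $y=x-2$, nor from $y-x<d$ to $y-x>d$ without landing on $y=x+d$. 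Consequently, with probability $1$ the walk has a well‑defined first visit to the union of the two lines $y=x-2$ and $y=x+d$, and this first visit lies either at some point $(s+2,s)$ on $y=x-2$ or at some point $(s,s+d)$ on $y=x+d$, with $s\ge0$.

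First I would read $F_s^d$ as a probability weight: a path counted by $F_s^d$ goes from $(0,0)$ to $(s+2,s)$ using $s+2$ right steps and $s$ up steps and avoids both lines beforehand, so it is realized with probability $\alpha^{s+2}(1-\alpha)^{s}=\alpha^{2}\big(\alpha(1-\alpha)\big)^{s}$; summing over $s$, the probability that the walk's first line‑visit is on $y=x-2$ equals $\alpha^{2}\sum_{s\ge0}F_s^{d}\big(\alpha(1-\alpha)\big)^{s}$. The analogous event for $y=x+d$ has, by the same bookkeeping, probability $(1-\alpha)^{d}\sum_{s\ge0}G_s^{d}\big(\alpha(1-\alpha)\big)^{s}$ with $G_s^d$ as in Lemma~\ref{lemma_number_path_y=x+d,y>x-2,y<x+d}. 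Since these events are mutually exclusive and, by the drift argument, together exhaust the probability space, I obtain
\[
\alpha^{2}\sum_{s\ge0}F_s^{d}\big(\alpha(1-\alpha)\big)^{s}+(1-\alpha)^{d}\sum_{s\ge0}G_s^{d}\big(\alpha(1-\alpha)\big)^{s}=1 .
\]

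Then I would substitute the closed form $\sum_{s\ge0}G_s^{d}\big(\alpha(1-\alpha)\big)^{s}=\dfrac{1-2\alpha}{(1-\alpha)^{d+2}-\alpha^{d+2}}$ from Lemma~\ref{lemma_number_path_y=x+d,y>x-2,y<x+d}, solve for the $F$‑sum, and simplify: factoring $(1-\alpha)^{d+2}-\alpha^{d+2}=(1-\alpha)^{d+2}\big(1-(\tfrac{\alpha}{1-\alpha})^{d+2}\big)$ cancels $(1-\alpha)^{d}$ against $(1-\alpha)^{d+2}$ to leave $(1-\alpha)^{2}$ in the denominator, which gives
\[
\sum_{s\ge0}F_s^{d}\big(\alpha(1-\alpha)\big)^{s}=\frac{1}{\alpha^{2}}\Big(1-\frac{1-2\alpha}{(1-\alpha)^{2}\big(1-(\tfrac{\alpha}{1-\alpha})^{d+2}\big)}\Big),
\]
as claimed. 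The one point that needs genuine care — the main obstacle — is the justification that the walk almost surely reaches one of the two boundary lines, so that the two first‑passage probabilities sum to exactly $1$ and not to something smaller; this is precisely where the standing hypothesis $\alpha<\tfrac12$ (equivalently $1-\alpha>\alpha$) enters, and it is what makes the whole argument a one‑line complement rather than an independent Catalan computation.
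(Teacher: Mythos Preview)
Your proof is correct and follows essentially the same approach as the paper: both argue that the first-passage probabilities to the two barriers sum to $1$ (the paper relegates the drift justification to the proof of Lemma~\ref{lemma_number_path_y=x+d,y>x-2,y<x+d}, where it notes ``since $1-\alpha>\alpha$, a mining path will eventually pass the line $y=x+d$''), then subtract the known $G$-probability $\frac{1-2\alpha}{(1-\alpha)^2\big(1-(\alpha/(1-\alpha))^{d+2}\big)}$ from $1$ and divide by $\alpha^2$. Your version is simply more explicit about why the two first-passage events exhaust the space.
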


\begin{proof}
    According to the proof of Lemma~\ref{lemma_number_path_y=x+d,y>x-2,y<x+d}, the probability of the event that the mining path starting at point $(0,0)$ reaches the line $y=x+d$ before reaching the line $y=x-2$ is equal to $\frac{1-2\alpha}{(1-\alpha)^2\big(1-(\frac{\alpha}{1-\alpha})^{d+2}\big)}$. Therefore, the probability of the event that the mining path starting at point $(0,0)$ reaches the line $y=x-2$ before reaching the line $y=x+d$ is equal to $1-\frac{1-2\alpha}{(1-\alpha)^2\big(1-(\frac{\alpha}{1-\alpha})^{d+2}\big)}$. According o the definition of $F_s^d$, the latter probability is equal to $F_s^d \alpha^2 \big(\alpha(1-\alpha)\big)^s$. This completes our proof.
\null\hfill\qedsymbol\end{proof}

\begin{proof}[Proof of Lemma~\ref{lemma_0.43}]
    We need to compare the expected return of mining on top of 2 forks $f_1$ and $f_2$ to find out which one is more profitable. We assume:
    \begin{itemize}
        \item If fork $f_1$ lags behind fork $f_2$ by $d$ blocks, where $d \ge 2$, the mining pool $p_i$ switches to mine on top of $f_2$. This is because mining pool $p_i$ is assumed to be a $D=1$-petty-compliant mining pool.
        \item If fork $f_1$ becomes even one block longer than fork $f_2$, all the remaining mining pools switch to mine on top of $f_1$. This assumption is in favor of the mining pool $p_i$.
    \end{itemize}
    We calculate the expected return using the $(x,y)$-grid. $l_1=1$ and $l_2=2$ implies that the mining path is at point $(1,2)$. If the mining path reaches the line $y=x-1$ before reaching the line $y=x+d$, the mining pool $p_i$ wins the fork race. If the mining path reaches the line $y=x+d$ before reaching the line $y=x-1$, the mining pool $p_i$ loses the fork race. Assume the mining path starting at point $(1,2)$ reaches the line $y=x-1$ for the first time at point $(s+3,s+2)$ without reaching the line $y=x+d$ in advance. The probability of this event is equal to $F_s^{d-1} \alpha_i^2 (\alpha_i(1-\alpha_i))^s$. The mining pool $p_i$ receives $s+3$ block rewards under this event. Therefore, the expected return of mining on top of fork $f_1$ for mining pool $p_i$ up to the point that the fork race is resolved can be obtained as follows:
    \begin{equation}
        r_1 = \sum_{s=0}^{\infty} {(s+3) F_s^{d-1} \alpha_i^2 (\alpha_i(1-\alpha_i))^s} \enspace .
    \end{equation}
    The fork race ends once the mining path reaches either $y=x-1$ or $y=x+d$. To obtain the expected return of switching the fork, we calculate the reward that the mining pool $p_i$ could have received if it was mining on top of the fork $f_2$. Assume the mining path starting at point $(1,2)$ reaches the line $y=x+d$ for the first time at point $(s+1,s+d+1)$ without reaching the line $y=x-1$ in advance. The probability of this event is equal to $G_s^{d-1} (1-\alpha_i)^{d-1} (\alpha_i(1-\alpha_i))^s$. The mining pool $p_i$ receives $s$ block rewards under this event. Now, assume the mining path starting at point $(1,2)$ reaches the line $y=x-1$ for the first time at point $(s+3,s+2)$ without reaching the line $y=x+d$ in advance. The probability of this event is equal to $F_s^{d-1} \alpha_i^2 (\alpha_i(1-\alpha_i))^s$. The mining pool $p_i$ receives $s+2$ block rewards under this event. Therefore, the expected return of mining on top of fork $f_2$ for mining pool $p_i$ up to the point that the mining path reaches either $y=x-1$ or $y=x+d$ can be obtained as follows:
     \begin{equation}
        r_2 = \sum_{s=0}^{\infty} {(s+2) F_s^{d-1} \alpha_i^2 (\alpha_i(1-\alpha_i))^s + s G_s^{d-1} (1-\alpha_i)^{d-1} (\alpha_i(1-\alpha_i))^s} \enspace .
    \end{equation}
    Using Lemmas~\ref{lemma_number_path_y=x-2,y>x-2,y<x+d} and~\ref{lemma_number_path_y=x+d,y>x-2,y<x+d}, $r_1$ and $r_2$ can be calculated. The petty-compliant mining pool $p_i$ is incentivized to leave its fork at state $l_1=1$ and $l_2=2$ if $r_1 - r_2 < 0$. For $\alpha_i \in (0,0.5)$, the function $r_1 - r_2$ is decreasing on $d$. By assigning $d=2$, the inequality $r_1 - r_2 < 0$ holds if $\alpha_i < 0.4302$.
\null\hfill\qedsymbol\end{proof}

\subsection{Proof of Theorem~\ref{theorem_D=inf_selfish}} \label{proof:theorem_D=inf_selfish}
\begin{proof}[Proof of Theorem~\ref{theorem_D=inf_selfish}]
    Let $\texttt{Pr}_{l_\mathcal{A}, l_\mathcal{R}}$ denote the probability of being in state $(l_\mathcal{A}, l_\mathcal{R})$. We can obtain the following equations:
    \begin{equation}
        \begin{split}
            & \texttt{Pr}_{0,0} = \frac{1-\alpha_\mathcal{A}}{1+(1-\alpha_\mathcal{A})^2 \alpha_\mathcal{A}} \enspace , \\
            & \texttt{Pr}_{1,0} = \alpha_\mathcal{A} \texttt{Pr}_{0,0} \enspace , \\
            & \texttt{Pr}_{2,0} = \frac{\alpha_\mathcal{A}}{1-\alpha_\mathcal{A}} \texttt{Pr}_{1,0} \enspace , \\
            & \texttt{Pr}_{1,1} = (1-\alpha_\mathcal{A}) \texttt{Pr}_{1,0} \enspace .
        \end{split}
    \end{equation}
    To find the adversarial average revenue, we should consider the average number of adversarial blocks that get added to the canonical chain at each state. The average revenue that the adversarial mining pool receives can be obtained as follows:
    \begin{equation}
        \texttt{Rev} = \Big(\alpha_\mathcal{A}\texttt{Pr}_{2,0}+2(1-\alpha_\mathcal{A})\texttt{Pr}_{2,0} + 2\alpha_\mathcal{A}\texttt{Pr}_{1,1} + (1-\alpha_\mathcal{A}-\beta_\mathcal{A}) \texttt{Pr}_{1,1}\Big) \lambda' R \enspace, 
    \end{equation}
    where $\lambda'$ is the block mining rate under the selfish mining attack.

    To find the adversary's cost, we ignore the mining cost\footnote{The mining cost during both honest mining and selfish mining is the same, and thus does not affect the profit comparison between these two strategies.} and only consider the cost incurred by the adversarial mining pool due to paying bribes. Note that the normalized bribe $\texttt{br}= \epsilon$ is payable if, under a fork race, the adversarial fork is extended by a non-adversarial block. In this case, the non-adversarial mining pool that extends the adversarial block is eligible to collect the bribe.
    The average cost of the adversarial mining pool can be obtained as follows:
    \begin{equation}
       \texttt{Cost} = \sum_{p_i \in \mathcal{P}_{\overline{\mathcal{A}}}}{\epsilon(1-\alpha_\mathcal{A}-\alpha_i) \alpha_i\texttt{Pr}_{1,0} \lambda' R} =\epsilon(1-\alpha_\mathcal{A}-\beta_\mathcal{A}) \texttt{Pr}_{1,1} \lambda' R \enspace,
    \end{equation}
    where $\mathcal{P}_{\overline{\mathcal{A}}}$ is the set of non-adversarial mining pools.
    
    The mining rate under the selfish mining attack can be obtained as $\lambda' = \frac{\lambda}{\texttt{Diff}}$, where $\texttt{Diff}$ is the normalized number of blocks added to the canonical chain and can be obtained as follows:
    \begin{equation}
        \texttt{Diff} = (1-\alpha_\mathcal{A})\texttt{Pr}_{0,0}+\alpha_\mathcal{A}\texttt{Pr}_{2,0}+2(1-\alpha_\mathcal{A})\texttt{Pr}_{2,0} + 2 \texttt{Pr}_{1,1} \enspace .
    \end{equation}
    Therefore, after mining difficulty adjustment, the time-averaged profit can be calculated as below:
    \begin{equation}
        \texttt{Profit}\big(\pi^\texttt{selfish}\big) = \frac{2\alpha_\mathcal{A}^4-5\alpha_\mathcal{A}^3+4\alpha_\mathcal{A}^2+\alpha_\mathcal{A}(1-\alpha_\mathcal{A})(1-\alpha_\mathcal{A}-\beta_\mathcal{A})(1-\epsilon)}{\alpha_\mathcal{A}^3-\alpha_\mathcal{A}^2+1} \lambda R \enspace .
    \end{equation}
    The selfish mining strategy $\pi^\texttt{selfish}$ strongly dominates honest mining if we have:
    \begin{equation}
        \texttt{Profit}\big(\pi^\texttt{selfish}\big)>\alpha_\mathcal{A}\lambda R \Longleftrightarrow \beta_\mathcal{A} < \frac{\alpha_\mathcal{A}- \epsilon (1-\alpha_\mathcal{A})^2}{(1-\alpha_\mathcal{A})(1-\epsilon)} \enspace .
    \end{equation}
\null\hfill\qedsymbol\end{proof}

\subsection{Proof of Corollary~\ref{corollary:p_1andp_2}}\label{appendix: p_1andp_2}
\begin{proof}
According to Theorem~\ref{theorem_D=inf_selfish}, selfish mining dominates honest mining for $P_1$ if the inequality $\beta_1 < \frac{\alpha_1- \epsilon (1-\alpha_1)^2}{(1-\alpha_1)(1-\epsilon)} \enspace$ holds. The maximum value that $P_1$'s residual centralization factor  $\beta_1$ can take is equal to the mining share of the largest mining pool excluding $P_1$, which in our case is $\alpha_2$. This maximum value occurs when all mining pools, excluding $P_1$, have the same mining share $\alpha_i$. Therefore, to satisfy the inequality $\beta_1 < \frac{\alpha_1- \epsilon (1-\alpha_1)^2}{(1-\alpha_1)(1-\epsilon)} \enspace$, one must ensure that $\alpha_2 < \frac{\alpha_1- \epsilon (1-\alpha_1)^2}{(1-\alpha_1)(1-\epsilon)} \enspace$ holds. The latter inequality holds if $\frac{\alpha_1}{1-\alpha_1}>\alpha_2 + \epsilon (1-\alpha_1-\alpha_2)$ holds. This proves the first statement of the corollary.

In the case of $\epsilon = 0$, selfish mining is the dominant strategy for $P_1$ if $\frac{\alpha_1}{1-\alpha_1}>\alpha_2$ holds, which is always the case as $\alpha_1 \ge \alpha_2$.
\null\hfill\qedsymbol\end{proof}

\subsection{Mining pools}\label{appendix_mining_pools}
The distribution of mining power share for the first $8$ months of 2024 is presented in Table~\ref{tab:MiningPowerDistribution}~\cite{Pools}. We consider all the unknown miners as a single mining pool.
\begin{table}[b]
    \caption{Mining Power Distribution of Bitcoin Mining Pools}
    \label{tab:MiningPowerDistribution}
    \centering
    \begin{tabular}{|c|c|c|c|c|c|}
        \hline
        \textbf{Foundry USA} & \textbf{AntPool} & \textbf{ViaBTC} & \textbf{F2Pool} & \textbf{Unknown} & \textbf{Mara Pool}  \\
        \hline
        29.03\% & 24.85\% & 12.86\% & 11.53\% & 7.90\% & 3.44\% \\
        \hline
        \textbf{Binance Pool} & \textbf{SBI Crypto} & \textbf{Braiins Pool} & \textbf{BTC.com} & \textbf{BTC M4} & \textbf{Poolin}  \\
        \hline
        3.00\% & 2.10\% & 1.78\% & 1.42\% & 0.83\% & 0.80\% \\
        \hline
        \textbf{Ultimus} & \textbf{1THash} & \textbf{Solo CKPool} & \textbf{KanoPool} & & \\
        \hline
        0.35\% & 0.00054\% & 0.034\% & 0.011\% & & \\
        \hline
    \end{tabular}
\end{table}

\subsection{MDP-Based Analysis} \label{sec:MDP}
In this section, we present an MDP-based implementation to analyze selfish mining in the presence of petty-compliant mining pools. The implementation is available at~\cite{Our_MDP}. Our MDP implementation considers the following two assumptions: 
1) The environment is an $(\epsilon, D=0)$-semi-rational environment, where $D=0$ implies that petty-compliant mining pools always select the longest chain. However, in the case of a fork with chains of the same height, they opt for the chain with the highest return.
2) The bribe placed on the adversarial chain is valid until the next non-adversarial block is mined. The next non-adversarial block either collects the bribe or leaves it unspent, allowing the adversary to collect it back. This assumption can be implemented using a smart contract, as described in Appendix~\ref{appendix:bribe_selfish_mining_smart_contract}.
Our selfish mining analysis differs from the implementation in~\cite{bar2023deep}, which assumes volatile block rewards. Our approach assumes uniform block rewards, while allowing the adversary to place bribes on top of the adversarial fork.

MDP can be used to obtain the optimal strategy in settings where the number of states is limited to less than $10^7$~\cite{zhang2019lay}. Due to the limitation of the state space, our MDP implementation only accounts for sufficiently large mining pools, while treating the remaining mining power as a single, aggregated mining pool. Let $N$ denote the total number of mining pools available in the network. Our implementation can consider up to almost $N=10$ mining pools. Assume that mining pool $p_1$ is the single honest mining pool.\footnote{All honest mining power is unified into a single honest mining pool.} Assume further that the set $\{p_2, p_3, \ldots, p_{N-1}\}$ denotes the $N-2$ petty-compliant mining pools, and $p_\mathcal{A}$ denotes the adversarial mining pool. Each state in our implementation represents a fork race between the adversarial fork of length $l_\mathcal{A}$ and the non-adversarial fork of length $l_{\overline{\mathcal{A}}}$.
We first discuss the set of possible actions. The adversarial mining pool can choose from four major actions: \texttt{Override}, \texttt{Wait}, \texttt{Adopt}, and \texttt{Match}, where the action \texttt{Match} consists of multiple subactions. 
% Note that $l_\mathcal{A}$ and $l_{\overline{\mathcal{A}}}$ denote the lengths of the adversarial and non-adversarial forks, respectively.

\noindent\texttt{Override}: This action represents the adversarial mining pool publishing a sub-fork of its secret fork that is one block longer than the honest fork. As a result of the \texttt{Override} action, non-adversarial mining pools abandon their current fork and start mining on top of the longer adversarial fork. This is because, according to our assumption, the petty-compliant pools are $(\epsilon, D=0)$-semi-rational, meaning that they always mine on top of the longest chain. This action is feasible if $l_\mathcal{A} > l_{\overline{\mathcal{A}}}$.

\noindent\texttt{Wait}: This action represents the adversarial mining pool continuing to mine new blocks on top of its secret fork. This action is always feasible.

\noindent\texttt{Adopt}: The \texttt{Adopt} action represents the adversarial mining pool abandoning its adversarial fork and accepting the non-adversarial fork. This action is always feasible.

\noindent$\texttt{Match}_i$: The action $\texttt{Match}_i$ for $i \in \{0,1, \cdots, \texttt{max\_bribe}\}$ represents the adversarial mining pool publishing a sub-fork of its secret fork that is equal in length to the non-adversarial fork and placing a normalized bribe of value $i+\epsilon$ on top of the adversarial fork. \texttt{max\_bribe} is a parameter denoting the maximum bribe the adversary can pay. Action $\texttt{Match}_i$ results in a race between two forks of the same height. Petty-compliant pools, whose number of blocks on the non-adversarial fork is less than or equal to $i$, are incentivized to mine on the adversarial fork. This action is feasible if $l_\mathcal{A} \ge l_{\overline{\mathcal{A}}}$. 

The most important difference between the set of possible actions in the setting of petty-compliant mining pools and the altruistic setting is the action $\texttt{Match}_i$ for $i \in [\texttt{max\_bribe}]$. Note that the undercutting attack is a subset of the stated possible actions. The undercutting attack can be described as follows: starting in state $(l_\mathcal{A} = 0, l_{\overline{\mathcal{A}}} = 1)$, the adversarial mining pool first takes the \texttt{Wait} action, attempts to mine a block to transition to state $(l_\mathcal{A} = 1, l_{\overline{\mathcal{A}}} = 1)$, and then proceeds with the $\texttt{Match}_0$ action.

% Action $\texttt{Match}_i$ represents the action in which the adversary publishes a fork of the same height as the non-adversarial fork and places a normalized bribe of $i + \epsilon$ on top of it. As a result of this action, any petty-compliant mining pool that has less than or equal to $i$ blocks in the non-adversarial fork is incentivized to mine on the adversarial fork.
Each state in our implementation has the following form: $$\big(l_1, l_2, \ldots, l_{N-1}, l_\mathcal{A}, \texttt{latest}, \texttt{match}, \texttt{bribe}\big) \enspace.$$ Here, $l_i$ for $i \in [N-1]$ denotes the number of blocks mined by pool $p_i$ in the non-adversarial fork, where we have $l_{\overline{\mathcal{A}}} = \sum_{i=1}^{N-1} l_i$. The variable \texttt{latest} represents information on the latest block mined in the system, \texttt{match} is a boolean element indicating whether the action $\texttt{Match}$ is active, and \texttt{bribe} denotes the amount of bribe placed on top of the adversarial fork. Table~\ref{tab:selfish_mining_MDP_result_40} presents the reward shares an adversarial mining pool with a mining share of $0.4$ can achieve under different distributions of mining power among petty-compliant mining pools. Table~\ref{tab:selfish_mining_MDP_result_30} presents the selfish mining reward shares of an adversarial mining pool with a mining share of $0.3$. Table~\ref{tab:selfish_mining_MDP_result_real} presents the selfish mining reward shares achieved by mining pools, with mining shares assigned according to the real-world distribution shown in Table~\ref{tab:MiningPowerDistribution}.
As can be observed, in scenarios where the residual centralization factor with respect to the adversary is lower, the reward share the adversary can achieve is higher. 
% Additional results regarding the MDP implementation are presented in Appendix~\ref{appendix: selfish_mining_results}.
\begin{table}[t]
    \centering
    \caption{Selfish mining in the presence of petty-compliant pools ($\alpha_\mathcal{A}=0.4, \enspace \epsilon=0.1$).} \label{tab:selfish_mining_MDP_result_40}
    \resizebox{\textwidth}{!}{ % Resize table to fit within text width
    \begin{tabular}{|c|c|}
    \hline
    \textbf{Adversarial Share} & \textbf{Petty-Compliant Mining Pool Shares} \\
    \hline
    0.40 & [0.3, 0.3] \\
    \hline
    \textbf{Residual Centralization Factor}: 0.3 & \cellcolor{red!20}\textbf{Reward Share:} 0.5448\\
    \hline\hline
    % Second set
    \textbf{Adversarial Share} & \textbf{Petty-Compliant Mining Pool Shares} \\
    \hline
    0.40 & [0.2, 0.2, 0.2] \\
    \hline
    \textbf{Residual Centralization Factor}: 0.2 & \cellcolor{red!40}\textbf{Reward Share:} 0.5714\\
    \hline\hline
    % Third set
    \textbf{Adversarial Share} & \textbf{Petty-Compliant Mining Pool Shares} \\
    \hline
    0.40 & [0.1, 0.1, 0.1, 0.1, 0.05, 0.05, 0.05, 0.05] \\
    \hline
    \textbf{Residual Centralization Factor}: 0.0766 & \cellcolor{red!60}\textbf{Reward Share:} 0.5955\\
    \hline\hline
    % Fourth set
    \textbf{Adversarial Share} & \textbf{Petty-Compliant Mining Pool Shares} \\
    \hline
    0.40 & [0.075, 0.075, 0.075, 0.075, 0.075, 0.075, 0.075, 0.075] \\
    \hline
    \textbf{Residual Centralization Factor}: 0.075 & \cellcolor{red!80}\textbf{Reward Share:} 0.5967\\
    \hline
    \end{tabular}
    } % End resizebox
\end{table}

\begin{table}[b]
    \centering
    \caption{Selfish mining in the presence of petty-compliant pools ($\alpha_\mathcal{A}=0.3, \enspace \epsilon=0$).} \label{tab:selfish_mining_MDP_result_30}
    \resizebox{\textwidth}{!}{ % Resize table to fit within text width
    \begin{tabular}{|c|c|}
    \hline
    \textbf{Adversarial Share} & \textbf{Petty-Compliant Mining Pool Shares} \\
    \hline
    0.30 & [0.4, 0.2, 0.1] \\
    \hline
    \textbf{Residual Centralization Factor}: 0.3 & \cellcolor{red!20}\textbf{Reward Share:} 0.3534\\
    \hline\hline
    % Second set
    \textbf{Adversarial Share} & \textbf{Petty-Compliant Mining Pool Shares} \\
    \hline
    0.30 & [0.2, 0.2, 0.2, 0.1] \\
    \hline
    \textbf{Residual Centralization Factor}: 0.1857 & \cellcolor{red!40}\textbf{Reward Share:} 0.3877\\
    \hline\hline
    % Third set
    \textbf{Adversarial Share} & \textbf{Petty-Compliant Mining Pool Shares} \\
    \hline
    0.30 & [0.2, 0.2, 0.05, 0.05, 0.05, 0.05, 0.05, 0.05] \\
    \hline
    \textbf{Residual Centralization Factor}: 0.1357 & \cellcolor{red!60}\textbf{Reward Share:} 0.4006\\
    \hline\hline
    % Fourth set
    \textbf{Adversarial Share} & \textbf{Petty-Compliant Mining Pool Shares} \\
    \hline
    0.30 & [0.0875, 0.0875, 0.0875, 0.0875, 0.0875, 0.0875, 0.0875, 0.0875] \\
    \hline
    \textbf{Residual Centralization Factor}: 0.0875 & \cellcolor{red!80}\textbf{Reward Share:} 0.4112\\
    \hline
    \end{tabular}
    } % End resizebox
\end{table}

\begin{table}[t]
    \centering
    \caption{Selfish mining results under real-world mining power distribution ($\epsilon=0$).} \label{tab:selfish_mining_MDP_result_real}
    \resizebox{\textwidth}{!}{ % Resize table to fit within text width
    \begin{tabular}{|c|c|}
    \hline
    \textbf{Adversarial Share} & \textbf{Petty-Compliant Mining Pool Shares} \\
    \hline
    0.07902 & [0.29033, 0.24852, 0.12858, 0.11527, 0.03438, 0.03000, 0.02101, 0.05289] \\
    \hline
    \textbf{Residual Centralization Factor}: 0.1967 & \cellcolor{red!20}\textbf{Reward Share:} 0.0794\\
    \hline\hline
    % Fourth set
    \textbf{Adversarial Share} & \textbf{Petty-Compliant Mining Pool Shares} \\
    \hline
    0.11527 & [0.29033, 0.24852, 0.12858, 0.07902, 0.03438, 0.03000, 0.02101, 0.05289] \\
    \hline
    \textbf{Residual Centralization Factor}: 0.1968 & \cellcolor{red!40}\textbf{Reward Share:} 0.1166\\
    \hline\hline
    % Third set
    \textbf{Adversarial Share} & \textbf{Petty-Compliant Mining Pool Shares} \\
    \hline
    0.12858 & [0.29033, 0.24852, 0.11527, 0.07902, 0.03438, 0.03000, 0.02101, 0.05289] \\
    \hline
    \textbf{Residual Centralization Factor}: 0.1961 & \cellcolor{red!60}\textbf{Reward Share:} 0.1306\\
    \hline\hline
    % Second set
    \textbf{Adversarial Share} & \textbf{Petty-Compliant Mining Pool Shares} \\
    \hline
    0.24852 & [0.29033, 0.12858, 0.11527, 0.07902, 0.03438, 0.03000, 0.02101, 0.05289] \\
    \hline
    \textbf{Residual Centralization Factor}: 0.1672 & \cellcolor{red!80}\textbf{Reward Share:} 0.2980\\
    \hline\hline
    % First set
    \textbf{Adversarial Share} & \textbf{Petty-Compliant Mining Pool Shares} \\
    \hline
    0.29033 & [0.24852, 0.12858, 0.11527, 0.07902, 0.03438, 0.03000, 0.02101, 0.05289] \\
    \hline
    \textbf{Residual Centralization Factor}: 0.1453 & \cellcolor{red!100}\textbf{Reward Share:} 0.3785\\
    \hline
    \end{tabular}
    } % End resizebox
\end{table}

\subsection{Smart Contract-Based Selfish Mining}\label{appendix:bribe_selfish_mining_smart_contract}
The goal of the adversarial mining pool is to design a bribing smart contract such that the bribe placed on the adversarial fork remains valid until the next non-adversarial block is mined. If a non-adversarial block is mined on top of the adversarial fork, the miner of the non-adversarial block should be able to receive the bribe; otherwise, the adversary should be able to recollect it.

Let $B_\mathcal{A}$ denote the tip of the adversarial fork and $B_{\overline{\mathcal{A}}}$ denote the tip of the non-adversarial fork. The adversarial mining pool deploys a smart contract and publishes it for all mining pools. The smart contract stores four parameters: the hash of block $B_\mathcal{A}$, denoted by $\texttt{Hash}(B_\mathcal{A})$; the hash of block $B_{\overline{\mathcal{A}}}$, denoted by $\texttt{Hash}(B_{\overline{\mathcal{A}}})$; a difficulty target denoted by $\texttt{Target}$, which is equal to the difficulty target of the epoch to which blocks $B_\mathcal{A}$ and $B_{\overline{\mathcal{A}}}$ belong; and a payout address $\texttt{add}_\mathcal{A}$ for the adversarial mining pool. The adversarial mining pool deposits a normalized bribe $\texttt{br}$ into the smart contract. Anyone who submits a witness proving the mining of a block on top of the adversarial block $B_\mathcal{A}$ can withdraw $\texttt{br}$. A valid witness includes a Bitcoin $\texttt{nonce}$, a Merkle root $\texttt{MR}$, a Bitcoin coinbase transaction $tx$, a Merkle inclusion $\texttt{proof}$, and a payout address $\texttt{add}$ in the host blockchain, satisfying the following conditions:

\begin{itemize}
    \item A valid block is mined on top of adversarial block $B_\mathcal{A}$: 
    \begin{equation} \label{eq:hash}
        \texttt{Hash}\big( \texttt{Hash}(B_\mathcal{A}), \texttt{MR}, \texttt{nonce}\big) \leq \texttt{Target} \enspace .
    \end{equation}
    \item The coinbase transaction is included in the list of block transactions, meaning a valid Merkle inclusion $\texttt{proof}$ is available to demonstrate that the coinbase transaction $tx$ is a leaf of a Merkle tree with Merkle root $\texttt{MR}$.
    \item The payout address $\texttt{add}$ is included in the coinbase transaction.
\end{itemize}

If the witness transaction satisfies the conditions above, the bribe $\texttt{br}$ will be transferred from the smart contract deposit to the payout address $\texttt{add}$. However, if the adversary submits a witness proving the mining of a block on top of the non-adversarial block $B_{\overline{\mathcal{A}}}$, it can recollect $\texttt{br}$. Upon submission of a valid witness of mining on top of the non-adversarial block $B_{\overline{\mathcal{A}}}$, the bribe will be transferred to the adversarial payout address $\texttt{add}_\mathcal{A}$ stored in the contract.

\subsection{Duration and Cost of the Initial Loss Period in Selfish Mining}\label{appendix:selfish_mining_time_cost}
As discussed in the literature~\cite{grunspan2018profitability, grunspan2023profit, sarenche2024time}, all mining destruction attacks, including selfish mining, face an initial period of financial loss before the mining difficulty adjusts. During the first epoch of selfish mining, some adversarial blocks may become orphaned, while the adversarial block generation rate remains unchanged. Therefore, due to the loss of some adversarial blocks, the adversarial mining pool incurs a profit loss before the difficulty adjustment. However, after the difficulty adjustment at the end of the attack's first epoch, the mining difficulty reduces, leading to an increase in the adversarial block generation rate. If this increase in the block generation rate can compensate for the loss of orphaned adversarial blocks, the selfish mining attack can become profitable.

In Fig.~\ref{fig:profit_lag_selfish}, we depict the normalized revenue advantage of the selfish mining attack for a real-world mining pool with mining share $0.29033$ as a function of time for different values of incentivizing factor $\epsilon$. The revenue advantage is defined as the difference between the attack revenue and honest mining revenue since the launch of the attack~\cite{grunspan2023profit}. As shown in Fig.~\ref{fig:profit_lag_selfish}, during the first epoch of the attack, which lasts longer than the standard 2-week epoch duration, the revenue advantage decreases, indicating a financial loss for the adversarial mining pool. However, after the difficulty adjustment at the end of the first epoch, the revenue advantage of the selfish mining attack begins to increase. A couple of weeks after the attack begins, the revenue advantage finally turns positive, meaning that selfish mining can be considered profitable from that point onward. The initial duration in which the revenue advantage is negative, referred to as the profit lag in~\cite{grunspan2023profit}, can be considered the initial loss duration of a selfish mining attack. As can be seen in Fig.~\ref{fig:profit_lag_selfish}, the initial cost of the attack increases with a higher incentivizing factor. This is because, in an environment with a higher incentivizing factor, the petty-compliant pools are more likely to behave honestly, and consequently, the adversary must pay a higher bribe to incentivize them.

\begin{figure}[t]
    \centering
    \includegraphics[height=2.5in]{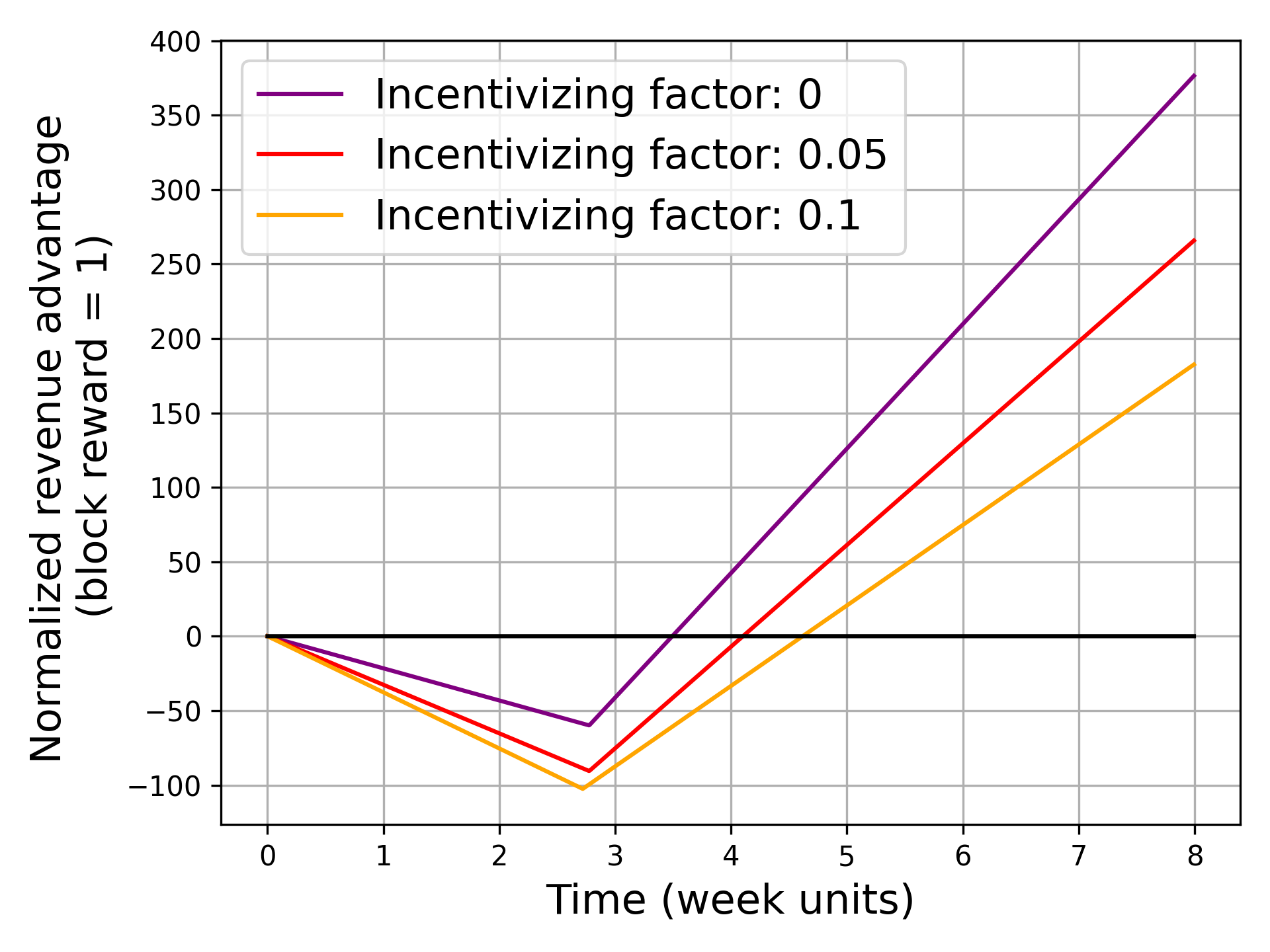}
    \caption{The normalized revenue advantage of selfish mining over time (adversarial mining power share: $0.29033$).}
    \label{fig:profit_lag_selfish}
    % \vspace{-5pt}
\end{figure}

\section{Bribery Attack: Supplementary}
\subsection{Proof of Theorem~\ref{theorem_bribery attack_long_term}} \label{appendix:bribery_theorem_proof}
\begin{proof}
    Assume $L = 2016$ denotes the epoch length, $R$ denotes the block reward, and $\lambda$ denotes the block mining rate (the number of blocks mined per unit of time) when all the mining pools follow the honest strategy. We first calculate the time-averaged profit of the adversarial pool under the bribery attack. Since the average number of target blocks per epoch for which a rival block is mined is equal to $k$, we can conclude that the average number of non-adversarial blocks that get orphaned per epoch is also $k$. Given that, on average, the same number of non-adversarial blocks are orphaned in each epoch, the mining difficulty is well-adjusted, indicating that the mining rate in all epochs is the same as $\lambda$. Therefore, the \texttt{duration} of each epoch is equal to $L/\lambda$.
    In the absence of an attack, out of $L$ canonical blocks per epoch, $\alpha L$ blocks are adversarial. Under the bribery attack, however, $k$ blocks are orphaned, where all of them are non-adversarial. The $k$ non-adversarial orphaned blocks need to be replaced with $k$ new blocks, of which $\alpha k$ blocks are adversarial. This implies that under the bribery attack, the number of adversarial blocks per epoch increases to $\alpha(L+k)$ blocks.
    %Since $k$ non-adversarial blocks are orphaned during each epoch, the total number of blocks mined during each epoch is $L + k$. Out of these $L + k$ blocks, $\alpha_\mathcal{A}(L + k)$ blocks are mined by the adversarial pool, where all the adversarial blocks are included in the canonical chain as the bribery attack does not influence the adversarial blocks. 
    Therefore, the adversarial mining pool receives a total \texttt{Rev} of $\alpha_\mathcal{A}(L + k)R$ per epoch. The cost that the adversary incurs in each epoch includes the mining cost and the bribery cost. Here, we ignore the mining cost in our calculation as it is the same as when mining honestly. The total bribery \texttt{Cost} in each epoch is $k \texttt{br} R$ since the adversary needs to pay a normalized bribe of $\texttt{br}$ for each of the $k$ successful attempts of the bribery attack. Therefore, the time-averaged profit of the bribery attack can be obtained as follows:
    \begin{equation}
        \texttt{profit}_\texttt{br} = \frac{\texttt{Rev}-\texttt{Cost}}{\texttt{duration}} = \frac{\lambda\big(\alpha_\mathcal{A}(L+k)R - k\texttt{br}R\big)}{L} = \lambda \alpha_\mathcal{A}R + \frac{\lambda k R}{L} (\alpha_\mathcal{A}-\texttt{br}) \enspace.
    \end{equation}
    The time-averaged profit of honest mining is equal to $\texttt{profit}_\mathcal{H}=\lambda \alpha_\mathcal{A}R$. The bribery attack can strongly dominate the honest strategy if the following inequalities hold: 
    \begin{equation}
        \texttt{profit}_\texttt{br} > \texttt{profit}_\mathcal{H} \Longleftrightarrow \lambda \alpha_\mathcal{A}R + \frac{\lambda k R}{L} (\alpha_\mathcal{A}-\texttt{br}) > \lambda \alpha_\mathcal{A}R \Longleftrightarrow \alpha_\mathcal{A} > \texttt{br} \enspace .
    \end{equation}
\null\hfill\qedsymbol\end{proof}
% Note that Theorem~\ref{theorem_bribery attack_long_term} addresses the case of a long-term bribery attack, where the mining difficulty has properly adjusted according to the bribery attack with $k$ successful attempts per epoch. In Theorem~\ref{theorem_bribery attack_short_term}, presented in Appendix~\ref{appendix_bribery attack_short_term}, we discuss how the adversarial mining pool should set its bribing budget so that if it starts the attack in epoch $e$, it profits by the end of the subsequent epoch $e+1$.
\subsection{Bribery Attack Under the Assumption of Unknown Miners}\label{appendix:bribery_unknown}
In Section~\ref{section: bribe_attack_known_miners}, we analyzed the bribery attack under the assumption that the miner of a target block is known. However, a mining pool may choose not to reveal its identity in the coinbase transaction. This makes it difficult for the adversarial and remaining petty-compliant mining pools to detect the miner of the target block. In this section, we analyze the bribery attack in the setting where the miner of the target block is unknown.
In the following, we only discuss the smart contract-based attack.

\subsubsection{Bribery Attack Using Smart Contracts}
\paragraph{Attack Description} An adversarial mining pool with mining share $\alpha_\mathcal{A}$ can conduct the smart contract-based bribery attack if there is at least one petty-compliant mining pool $p_j \in \mathcal{P}$ whose residual centralization factor is less than the mining share of the adversarial mining pool. Otherwise, the attack is not feasible.

Assume that block $B_1$ represents the head of the canonical chain, and its corresponding miner is not known to the adversarial and petty-compliant mining pools. Block $B_1$ is considered a target block of the bribery attack if it is a non-adversarial block. Let $\beta_j$ denote the residual centralization factor with respect to the mining pool $p_j$, and $\texttt{min}(\beta_j)$ denote the minimum residual centralization factor among all the mining pools. The adversarial mining pool deposits two normalized bribes, $\texttt{br}_1$ and $\texttt{br}_2 = \epsilon$, in the smart contract, where $\texttt{br}_1$ can take any value in the range $[\texttt{min}(\beta_j) + \epsilon, \alpha_\mathcal{A}-2\epsilon)$. The other details are similar to the smart contract-based attack in the known miner setting.

\begin{lemma} \label{lemma:unknown_semi-rational_incentivize}
    Assume block $B_1$ which denotes the head of the canonical chain is mined by an unknown mining pool. Assume further that block $B_1$ is a target of the smart contract-based bribery attack that offers two normalized bribes $\texttt{br}_1$ and $\texttt{br}_2 = \epsilon$. In an environment with the incentivizing factor $\epsilon$, a petty-compliant mining pool $p_j$ which is not the the miner of block $B_1$ is incentivized to follow the bribery attack strategy if $\texttt{br}_1 \ge \beta_j + \epsilon$, where $\beta_j$ denotes the residual centralization factor w.r.t. the mining pool $p_j$.
\end{lemma}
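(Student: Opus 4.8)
The plan is to prove Lemma~\ref{lemma:unknown_semi-rational_incentivize} by the same route used for Theorem~\ref{theorem_bribery_smart} (the known-miner case), the only genuinely new ingredient being that $p_j$ does not know which pool produced the target block $B_1$, so it must take an expectation over that identity. Concretely, I would fix the moment at which $B_1$ becomes the head of the chain and compare, from $p_j$'s point of view, two policies: the honest one (extend $B_1$) and the bribery one (mine on the parent $B_0$ in order to create a rival block $B_\texttt{rival}$, and support whatever rival block appears first, mining on top of it afterwards). The lemma is then the statement that the semi-rational fork choice rule makes $p_j$ prefer the rival side, i.e.\ that the expected return of the bribery policy exceeds that of extending $B_1$ by more than the rule's threshold, and that this reduces to $\texttt{br}_1 \ge \beta_j + \epsilon$.

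First I would resolve the fork race that follows once some pool has published $B_\texttt{rival}$. Since $B_\texttt{rival}$ carries the second bribe $\texttt{br}_2 = \epsilon$, every pool that has no block at stake in either of the two height-equal blocks is tipped toward $B_\texttt{rival}$; the (unknown) miner of $B_1$ has the full reward $R > \epsilon R$ at stake on $B_1$ and is therefore the unique pool that keeps mining on $B_1$; and the adversary, by the attack strategy, switches to $B_\texttt{rival}$ as soon as it appears. This is exactly the configuration analysed in Lemma~\ref{lemma: selfish_mining_advantage}, so $B_\texttt{rival}$ is orphaned only if the miner of $B_1$ produces the next block.

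Second, I would compute this orphaning probability from $p_j$'s partial-information vantage point. Conditioned on $B_1$ not having been mined by $p_j$, the probability that $B_1$ was mined by a given pool $p_i$ is proportional to $\alpha_i$; given that, $p_i$ extends $B_1$ before anyone else extends $B_\texttt{rival}$ with probability $\alpha_i$. Summing over the candidate pools reproduces exactly the computation in the proof of Lemma~\ref{lemma: selfish_mining_advantage} and yields the residual centralization factor $\beta_j$ of Definition~\ref{def:residual_CF}. Hence $B_\texttt{rival}$ survives with probability $1-\beta_j$, and the expected value to $p_j$ of producing the rival block is the guaranteed contract payout $\texttt{br}_1 R$ plus $(1-\beta_j)R$ from the block itself, against $R$ for the safe block it would otherwise mine on $B_1$; after accounting for the seen-first preference that the semi-rational rule assigns to the honest chain, the condition for $p_j$ to deviate collapses to $\texttt{br}_1 \ge \beta_j + \epsilon$.

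The step I expect to be the main obstacle is the careful bookkeeping of this multi-party race under $p_j$'s incomplete information: one must argue that the only pool abandoning the rival fork is its unknown miner, that the adversary's ``switch to the rival once it appears'' behaviour leaves the orphaning probability equal to $\beta_j$ rather than something larger, and that the correct baseline for the comparison really is honest mining on $B_1$ even though, in equilibrium, the other non-miner pools are themselves following the bribery strategy — each of these must be pinned down before the $\beta_j$ identity of Lemma~\ref{lemma: selfish_mining_advantage} can legitimately be invoked. I would also note that the upper bound $\texttt{br}_1 < \alpha_\mathcal{A} - 2\epsilon$ imposed in the attack description plays no role in this lemma; it is needed only, via Theorem~\ref{theorem_bribery attack_long_term}, to keep the overall attack profitable for the adversary.
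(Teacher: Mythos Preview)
Your proposal is correct and follows essentially the same approach as the paper: compare the expected return of mining on $B_1$ versus mining a rival on $B_0$, condition on the unknown identity of the miner of $B_1$ to obtain the survival probability $1-\beta_j$ of the rival block (exactly as in the proof of Lemma~\ref{lemma: selfish_mining_advantage}), and reduce the semi-rational threshold condition to $\texttt{br}_1 \ge \beta_j + \epsilon$. The paper's proof is terser and does not flag the equilibrium subtleties you raise, but the computation is identical.
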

\begin{proof}
    % According to the proof of Lemma~\ref{lemma:bribing_attack_whale trransaction}, in the case of a fork race between the targeted block $B_1$ and a competing block, all the petty-compliant mining pools except the unknown miner of block $B_1$ are incentivized to mine on top of the competing block.
    Let $\mathcal{P}$ denote the set of all mining pools (both semi-rational and adversarial). If petty-compliant mining pool $p_j \in \mathcal{P}$ mines on top of the target block, its expected return for the next block is equal to $r_1 = \alpha_j R$. If it tries to mine a rival block, its expected return is equal to $r_2 = \alpha_j p_\texttt{success} (1+\texttt{br}_1)R + \alpha_j (1-p_\texttt{success}) \texttt{br}_1 R$, where $p_\texttt{success}$ represents the probability that the mining pool $p_j$ wins the fork race. Let $\mathcal{P}_{\overline{j}}$ denote the set of all the mining pools excluding $p_j$. The target block $B_i$ can be mined by any of the mining pools in $\mathcal{P}_{\overline{j}}$. If block $B_1$ is mined by mining pool $p_i$, $p_j$ will win the fork race with probability $1-\alpha_i$ as all the remaining mining pools are incentivized to mine on top of the rival block. Therefore, the total probability that the mining pool $p_j$ wins the fork race can be obtained as follows: 
    \begin{equation}
        p_\texttt{success} = \frac{\sum_{\mathcal{P}_{j}} {\alpha_i(1-\alpha_i)}}{1-\alpha_j} = 1 - \beta_j,
    \end{equation}
    where $\beta_j$ is the residual centralization factor w.r.t. the mining pool $p_j$. The mining pool $p_j$ is incentivized to follow the adversarial bribing strategy if $r_2 \ge r_1 + \epsilon \alpha_jR$. We have: 
   \begin{equation}
        \begin{split}
            & r_2\ge r_1 + \epsilon \alpha_j R \Longleftrightarrow \alpha_j (1-\beta_j)(1+\texttt{br}_1) R+ \alpha_j \beta_j \texttt{br}_1 R \ge \alpha_j R + \epsilon \alpha_j R \\
            & \Longleftrightarrow \texttt{br}_1 \ge \beta_j + \epsilon \enspace.
        \end{split}
    \end{equation}
\null\hfill\qedsymbol\end{proof}

\begin{lemma}
    In an environment with an unknown miner setting and an incentivizing factor $\epsilon$, an adversarial mining pool with mining share $\alpha_\mathcal{A}$ is incentivized to conduct a smart contract-based bribery attack if there exists at least one petty-compliant mining pool such as $p_j$ with residual centralization factor $\beta_j$ for which the following inequality holds:
    \begin{equation}
         \alpha_\mathcal{A} > \beta_j + 2\epsilon \enspace.
    \end{equation}
\end{lemma}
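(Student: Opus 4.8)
The plan is to combine Lemma~\ref{lemma:unknown_semi-rational_incentivize}, which settles the incentives of the petty-compliant side, with the epoch-level revenue accounting behind Theorem~\ref{theorem_bribery attack_long_term}, which settles the adversary's side; the statement is essentially the adversary-side counterpart of Lemma~\ref{lemma:unknown_semi-rational_incentivize}. Given a pool $p_j$ with $\alpha_\mathcal{A} > \beta_j + 2\epsilon$, the adversary instantiates the smart-contract attack with the two bribes $\texttt{br}_1 = \beta_j + \epsilon$ and $\texttt{br}_2 = \epsilon$; by hypothesis these sum to strictly less than $\alpha_\mathcal{A}$. By Lemma~\ref{lemma:unknown_semi-rational_incentivize}, whenever the head of the canonical chain is a non-adversarial target block not mined by $p_j$, pool $p_j$ is incentivized to attempt to mine a rival block, while the secondary bribe $\texttt{br}_2 = \epsilon$ is precisely what makes every other non-miner pool extend the rival once it is published---the behavior underlying the fork-race success probability $1-\beta_j$ used in that lemma. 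Hence the average number $k$ of target blocks per epoch for which a rival is actually mined is strictly positive, since a nonzero fraction of blocks are mined by pools other than $p_j$, and each such instance orphans exactly one non-adversarial block (the target or the rival, both being non-adversarial).

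Next I would bound the adversary's cost per orphaned block and feed it into the accounting of Theorem~\ref{theorem_bribery attack_long_term}. Each successful instance pays at most $\texttt{br}_1$ to the rival's miner and at most $\texttt{br}_2$ to the miner of a supporting block, and any deposit left unclaimed is recovered through the contract's time lock; so the normalized bribe per orphaned non-adversarial block is at most $\texttt{br}_1 + \texttt{br}_2 = \beta_j + 2\epsilon$. Reusing the bookkeeping of Theorem~\ref{theorem_bribery attack_long_term}---after the difficulty adjustment the $k$ orphaned non-adversarial blocks per epoch are replaced, so the adversary's per-epoch block count rises from $\alpha_\mathcal{A}L$ to $\alpha_\mathcal{A}(L+k)$, i.e. an extra $\alpha_\mathcal{A}R$ of revenue per orphaned block---the time-averaged profit of the attack becomes $\lambda\alpha_\mathcal{A}R + \tfrac{\lambda k R}{L}\bigl(\alpha_\mathcal{A} - (\beta_j + 2\epsilon)\bigr)$, which strictly exceeds the honest-mining profit $\lambda\alpha_\mathcal{A}R$ exactly because $\alpha_\mathcal{A} > \beta_j + 2\epsilon$ and $k > 0$. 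Therefore the adversary is incentivized to run the attack, as claimed.

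The step I expect to be the main obstacle is the cost accounting for the two-tier bribe: I must check that no realization of the fork race drains the contract for more than $(\beta_j + 2\epsilon)R$ per target block, in particular ruling out the case where the $\texttt{br}_2$-funded supporting block sits on a rival that then loses the race (so that this extra orphaned block both costs a bribe and must itself be replaced) thereby compounding the loss, and ruling out the case where several pools are simultaneously incentivized and the contract is tapped more than once per target. Once one verifies that the contract pays $\texttt{br}_1$ on the first valid rival witness only, $\texttt{br}_2$ on the first valid supporting witness only, and returns any unspent deposit, the bound $\texttt{br} \le \beta_j + 2\epsilon$ is clean and the remainder is the arithmetic already carried out in Theorem~\ref{theorem_bribery attack_long_term}.
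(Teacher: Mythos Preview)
Your proposal is correct and follows essentially the same route as the paper: instantiate the contract with $\texttt{br}_1=\beta_j+\epsilon$ and $\texttt{br}_2=\epsilon$, invoke Lemma~\ref{lemma:unknown_semi-rational_incentivize} to guarantee at least $p_j$ complies, and then apply the per-orphaned-block accounting of Theorem~\ref{theorem_bribery attack_long_term} with total bribe $\beta_j+2\epsilon<\alpha_\mathcal{A}$. The paper's own proof is a terse three-sentence version of exactly this; your additional discussion of the cost-accounting edge cases (single-witness payout, time-lock refund) is sound and more careful than what the paper spells out.
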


\begin{proof}
    According to Theorem~\ref{theorem_bribery attack_long_term}, the bribery attack can be profitable as long as the bribe that the adversarial mining pool pays out for orphaning a non-adversarial block is less than its mining share. Therefore, since $ \alpha_\mathcal{A} >  \beta_j + 2\epsilon$, the adversarial mining pool can set the total bribe for orphaning a single non-adversarial block to $\texttt{br}=\beta_j + 2\epsilon$ and still profits from the bribery attack. According to Lemma~\ref{lemma:unknown_semi-rational_incentivize}, with a bribe of value $\texttt{br}$, the petty-compliant mining pool $p_j$ is incentivized to follow the adversarial bribing strategy. Since there exists at least one petty-compliant mining pool that follows the adversarial strategy, the attack is profitable. 
\null\hfill\qedsymbol\end{proof}

\subsection{Bribery Attack Using Whale Transactions} \label{appendix:whale_based_known_miners}
In the bribery attack using whale transactions, the adversarial mining pool sets its bribe as the transaction fee of a transaction that is only valid if included in the chain mined on top of the rival block. This transaction becomes invalid if included in the chain mined on top of the target block. We first explain the attack.

\paragraph{Attack Description} 
% Let $\mathcal{P} = \{p_1, p_2, \cdots, p_N\} \cup \{p_\mathcal{A}\}$ denote the set of all semi-rational and adversarial mining pools. 
% During the bribery attack, the adversarial mining pool always mines on top of the longest chain available in its view. 
Let $\alpha_\mathcal{A}$ represent the mining share of the adversarial mining pool $p_\mathcal{A}$. Assume block $B_1$ denotes the head of the canonical chain and is mined by mining pool $p_i$ with mining share $\alpha_i$. From the perspective of the adversarial mining pool, block $B_1$ can be considered a valid target block for the bribery attack if the following condition holds: 
\begin{itemize}
    \item $B_1$ is a non-adversarial block.
    \item $\alpha_\mathcal{A} > \frac{\alpha_i}{1-\alpha_i} + \epsilon (\frac{1}{1-\alpha_i}+1)$.
\end{itemize}
If the conditions above are not satisfied, the attack is not applicable, and the adversarial mining pool waits for the next block. If block $B_1$ is a valid target block, the adversarial mining pool proceeds with the bribery attack.

We assume an adversarial transaction $tx_\mathcal{A}$ is already included in block $B_1$. The adversarial mining pool then creates two bribes, $\texttt{br}_1$ and $\texttt{br}2$, and sets them as the transaction fees for transactions $tx_1$ and $tx_2$, respectively. These transactions are only valid if $tx_\mathcal{A}$ is not included beforehand in the canonical chain. The adversarial mining pool publishes transactions $tx_1$ and $tx_2$ once block $B_1$ is published. $\texttt{br}_1$ serves as a bribe for the miner of the rival block. $\texttt{br}_2$ serves as a bribe for the miner of the supporting block, the block that is mined on top of the rival block in the case of a fork race between the rival and target blocks. Once transactions $tx_1$ and $tx_2$ are published in the network, a petty-compliant mining pool should choose one of the following two strategies: it can either continue mining on top of block $B_1$ without including transaction $tx_1$, or it can try to mine a rival block that includes transaction $tx_1$\footnote{The miner of the rival block can include both transactions $tx_1$ and $tx_2$ in its block. However, it would be a wise decision to leave $tx_2$ as a bribe for the supporting block.}. If the next block $B_2$ is mined on top of block $B_1$, the bribery attack targeting $B_1$ fails, and the adversarial mining pool repeats the same attack on block $B_2$. If the next block $B_2$ is a rival block for $B_1$, the attack is successful.

During this attack, the adversarial mining pool commits to the following strategy: if no rival block is published, it mines on top of the target block (the head of the canonical chain). Once a rival block is published, it switches to mining on top of the rival block.

\begin{lemma} \label{lemma:bribing_attack_whale_trransaction}
    Assume block $B_1$ is mined by mining pool $p_i$ with mining share $\alpha_i$. Assume further that block $B_1$ is a target of the whale transaction-based bribery attack that offers two normalized bribes $\texttt{br}_1 = \frac{\alpha_i + \epsilon}{1-\alpha_i}$ and $\texttt{br}_2 = \epsilon$. In an environment with the incentivizing factor $\epsilon$, all the petty-compliant mining pools except $p_i$ are incentivized to mine a rival block for $B_1$.
\end{lemma}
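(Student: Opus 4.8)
The plan is to adapt the argument behind the smart-contract analogue (Lemma~\ref{lemma:unknown_semi-rational_incentivize} and Theorem~\ref{theorem_bribery_smart}): fix an arbitrary petty-compliant pool $p_j$ with $j\neq i$, write down the expected return of each course of action available to it at the moment $B_1$ together with the whale transactions $tx_1,tx_2$ becomes public, and show that the semi-rational fork-choice inequality tips $p_j$ toward mining a rival block for $B_1$. The one structural difference from the smart-contract case---and the crux of the whole argument---is that here the bribe $\texttt{br}_1$ is the fee of $tx_1$, a transaction conflicting with $tx_\mathcal{A}\in B_1$; hence $\texttt{br}_1$ is collectible only on a chain that orphans $B_1$, i.e.\ only if the rival fork \emph{wins} the race. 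This conditioning is exactly why $\texttt{br}_1$ must be inflated to $\frac{\alpha_i+\epsilon}{1-\alpha_i}$ rather than the $\alpha_i+\epsilon$ that sufficed in the smart-contract attack.

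First I would pin down the coalition structure of the ensuing fork race. By its committed strategy the adversary switches to the rival block once one exists, and I claim (to be verified self-consistently below) that every petty-compliant pool other than $p_i$ also adopts the rival block $B'_1$. The latter is made incentive-compatible by $\texttt{br}_2=\epsilon$: since $B_1$ is seen first, the semi-rational fork-choice rule would by default keep a pool $p_k$ on $B_1$, but a block mined on top of $B'_1$ can carry $tx_2$ (fee $\epsilon R$) whereas one on top of $B_1$ cannot (as $tx_\mathcal{A}$ is already confirmed on that chain), and this $\epsilon R$ advantage clears the $\epsilon\alpha_k R$ threshold for $p_k$. With all hash power except $p_i$'s behind $B'_1$, the rival fork carries power $1-\alpha_i$ and the target fork only $\alpha_i$; invoking the same ``abandon a fork once it is one block behind'' behaviour used in Lemma~\ref{lemma_0.43}, the race is decided by the first block mined after $B'_1$, so the rival fork wins with probability $p_\texttt{success}=1-\alpha_i$.

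Next I would compute the two returns for $p_j$, using the ``expected return for the next block'' convention of Lemma~\ref{lemma:unknown_semi-rational_incentivize} (with $\gamma\to 1$; the post-resolution continuations are identical and cancel). Extending $B_1$ yields $r_1=\alpha_j R$, since no bribe is collectible on that chain. Attempting a rival block yields $r_2=\alpha_j(1-\alpha_i)(1+\texttt{br}_1)R$: with probability $\alpha_j$ the pool mines $B'_1$ and includes $tx_1$, after which it collects $R+\texttt{br}_1 R$ with probability $p_\texttt{success}=1-\alpha_i$ and nothing with probability $\alpha_i$ (its block $B'_1$ is orphaned and $tx_1$ permanently invalidated). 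By condition~2 of the semi-rational fork-choice rule, $p_j$ is incentivized precisely when $r_2-r_1\ge\epsilon\alpha_j R$, i.e.\ $(1-\alpha_i)(1+\texttt{br}_1)\ge 1+\epsilon$, i.e.\ $\texttt{br}_1\ge\frac{\alpha_i+\epsilon}{1-\alpha_i}$, which the prescribed bribe satisfies. Since the $\alpha_j$ factors cancel, the conclusion is uniform over $j\neq i$, so the coalition structure assumed in the previous step is indeed a consistent best response for all those pools, which closes the argument.

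The step I expect to be the main obstacle is the self-referential nature of $p_\texttt{success}$: a pool's incentive to chase the rival depends on the other pools doing the same, so the computation $p_\texttt{success}=1-\alpha_i$ and the conclusion of the lemma must be established together, as a consistent equilibrium, exactly as in the companion lemmas. Alongside this, one must be careful that the whale bribe is forfeited on a losing rival fork---the single place this proof genuinely departs from the smart-contract one---and that it is really $tx_2$'s fee that overrides the seen-first tie-break toward $B_1$. A minor point is the boundary: condition~2 is a strict inequality while the stated $\texttt{br}_1$ gives equality, which I would handle either by taking $\texttt{br}_1=\frac{\alpha_i+\epsilon}{1-\alpha_i}+\delta$ for arbitrarily small $\delta>0$ or by adopting the non-strict reading already used in Lemma~\ref{lemma:unknown_semi-rational_incentivize}.
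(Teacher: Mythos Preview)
Your proposal is correct and follows essentially the same approach as the paper: both first pin down the post-rival coalition structure (all pools except $p_i$ mine on the rival block thanks to $\texttt{br}_2=\epsilon$, giving $p_\texttt{success}=1-\alpha_i$), then compare $r_1=\alpha_j R$ against $r_2=\alpha_j(1-\alpha_i)(1+\texttt{br}_1)R$ and solve $r_2\ge r_1+\epsilon\alpha_j R$ to obtain $\texttt{br}_1\ge\frac{\alpha_i+\epsilon}{1-\alpha_i}$. Your explicit remarks on the self-consistency of the coalition and on the strict-versus-non-strict boundary are additional care the paper leaves implicit (it simply adopts the non-strict reading throughout), but the core argument is identical.
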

\begin{proof}
    Let $\mathcal{P}_{\overline{i}}$ denote the set of all mining pools excluding $p_i$. Under the bribery attack, the petty-compliant mining pools need to choose between mining on top of block $B_1$ or mining on top of its parent block to mine a rival block.

    Let $B_\texttt{next}$ denote the next block that is mined during the attack. We first determine the best strategy that a non-adversarial mining pool $p_j \in \mathcal{P}_{\overline{i}}$ should take once block $B_\texttt{next}$ is published. If block $B_\texttt{next}$ is mined on top of $B_1$, the attack fails and all mining pools, including $p_j$, will continue mining on top of $B_1$. However, if $B_\texttt{next}$ is mined on top of $B_0$, a fork race will occur between $B_1$ and $B_\texttt{next}$. During this fork race, since there is a bribe of $\texttt{br}_2 = \epsilon$ on top of block $B_\texttt{next}$, all the mining pools in $\mathcal{P}_{\overline{i}}$, including $p_j$, will mine on top of block $B_1$, and only $p_i$ will mine on top of block $B_1$.

    Knowing the best strategy for pool $p_j$ after the publishing of block $B_\texttt{next}$, we can obtain the optimal strategy for $p_j$ to follow before $B_\texttt{next}$ is published. To this end, we compare the average expected return under the following two strategies for the next block:
    \begin{itemize}
        \item Strategy $\pi_1$: $p_j$ mines on top of block $B_1$ and does not include any of the transactions $tx_1$ or $tx_2$ in the block, as these transactions are not valid in a chain mined on top of block $B_1$.
        \item Strategy $\pi_2$: $p_j$ mines on top of block $B_0$ and includes only $tx_1$ with a transaction fee of $\texttt{br}_1R$ in the block. In this strategy, the mining pool $p_j$ refrains from including $tx_2$ in its block, leaving it as a bribe on top of its potential block in the upcoming fork race.
    \end{itemize}

    Under strategy $\pi_1$, the expected return of mining pool $p_j$ for the next block is equal to $r_1 = \alpha_j R$.
    Under strategy $\pi_2$, the expected return of mining pool $p_j$ for the next block can be obtained as follows:
    \begin{equation}
        r_2 = \alpha_j \cdot R_\texttt{rival} \cdot p_\texttt{success} \enspace ,
    \end{equation}
    where $R_\texttt{rival}$ is the total block reward of the rival block, and $p_\texttt{success}$ is the probability that the rival block wins the fork race against the target block $B_1$. The total block reward $R_\texttt{rival}$ consists of both the block reward and the transaction fee of $tx_1$, implying that $R_\texttt{rival} = (1+\texttt{br}_1)R$. The probability that the rival block wins the fork race is $1-\alpha_i$, as the only mining pool mining on top of the target block is $p_i$. All the remaining petty-compliant mining pools choose to mine on top of the rival block due to the bribe $\texttt{br}_2 = \epsilon$ available on top of it. Additionally, the adversarial mining pool mines on top of the rival block since it has already committed to it. Therefore, we obtain $r_2 = \alpha_j (1-\alpha_i)(1+\texttt{br}_1)R \enspace$. In an environment with incentivizing factor $\epsilon$, mining pool $p_j$ chooses the strategy $\pi_2$ over strategy $\pi_1$ if $r_2\ge r_1 + \epsilon \alpha_j R$. We have:
    \begin{equation}
        r_2\ge r_1 + \epsilon \alpha_j R \Longleftrightarrow \alpha_j (1-\alpha_i)R(1+\texttt{br}_1) \ge \alpha_j R + \epsilon \alpha_j R \Longleftrightarrow 
        \texttt{br}_1 \ge \frac{\alpha_i + \epsilon}{1-\alpha_i} \enspace.
    \end{equation}
\null\hfill\qedsymbol\end{proof}

\begin{theorem} \label{theorem_bribery_whale}
    In an environment with an incentivizing factor $\epsilon$, an adversarial mining pool with mining share $\alpha_\mathcal{A}$ is incentivized to conduct a whale transaction-based bribery attack on any target block $B$ if block $B$ is mined by a mining pool with mining share $\alpha_i$, where $\alpha_\mathcal{A} > \frac{\alpha_i}{1-\alpha_i} + \epsilon \left(\frac{1}{1-\alpha_i} + 1\right)$.
\end{theorem}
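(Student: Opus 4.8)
The plan is to derive Theorem~\ref{theorem_bribery_whale} from Lemma~\ref{lemma:bribing_attack_whale_trransaction} and Theorem~\ref{theorem_bribery attack_long_term}, in the same way Theorem~\ref{theorem_bribery_smart} was obtained from its incentive lemma together with Theorem~\ref{theorem_bribery attack_long_term}. First I would fix the two whale transactions so that $tx_1$ carries a normalized fee $\texttt{br}_1 = \frac{\alpha_i+\epsilon}{1-\alpha_i}$ and $tx_2$ a normalized fee $\texttt{br}_2 = \epsilon$, both conflicting with the adversarial transaction $tx_\mathcal{A}$ already confirmed in the target block $B$ (which is non-adversarial, being mined by a pool $p_i$). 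By Lemma~\ref{lemma:bribing_attack_whale_trransaction}, once $tx_1$ and $tx_2$ are broadcast every petty-compliant pool other than $p_i$ strictly prefers to mine a rival block for $B$ and, in the resulting $1$-vs-$1$ race, to extend that rival; hence a rival block is mined with positive probability on every launch of the attack, so the attack succeeds infinitely often over the long run.

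Next I would bound the bribe the adversary actually disburses per successful attack. The adversary commits at most $\texttt{br}_1 + \texttt{br}_2$ of normalized value per target block, and these fees are collected only through the inclusion of $tx_1$ and $tx_2$ in the canonical chain: if the target block $B$ survives the race then $tx_\mathcal{A}$ stays confirmed, $tx_1$ and $tx_2$ are permanently invalidated, and the adversary pays nothing, whereas if the rival chain prevails then at most $\texttt{br}_1 + \texttt{br}_2$ goes to the miners of the rival and its supporting block. So the normalized bribe charged against each orphaned (non-adversarial) block is bounded by
\begin{equation}
    \texttt{br} \;\le\; \texttt{br}_1 + \texttt{br}_2 \;=\; \frac{\alpha_i+\epsilon}{1-\alpha_i} + \epsilon \;=\; \frac{\alpha_i}{1-\alpha_i} + \epsilon\Big(\frac{1}{1-\alpha_i}+1\Big) \enspace .
\end{equation}

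Finally I would invoke Theorem~\ref{theorem_bribery attack_long_term}. Under the committed strategy the adversary mines on $B$ until a rival appears and on the rival thereafter, so in the $1$-vs-$1$ race exactly one non-adversarial block is orphaned on each successful attack (the target $B$ when the rival wins, the rival block when $p_i$'s extension wins) and no adversarial block is ever orphaned; consequently the average number $k$ of non-adversarial blocks orphaned per epoch is well defined and positive. Theorem~\ref{theorem_bribery attack_long_term}, applied with this $k$ and the bound on $\texttt{br}$ above, then yields that the time-averaged profit of the whale-transaction attack exceeds that of honest mining as soon as $\texttt{br}_1 + \texttt{br}_2 < \alpha_\mathcal{A}$, which is exactly the hypothesis $\alpha_\mathcal{A} > \frac{\alpha_i}{1-\alpha_i} + \epsilon\big(\frac{1}{1-\alpha_i}+1\big)$, so the adversary is incentivized to run the attack. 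The step that calls for real care — the rest being a substitution into Theorem~\ref{theorem_bribery attack_long_term} — is this last piece of bookkeeping: verifying that the fork-switching rule never leaves an adversarial block on the losing fork and that the committed amount $\texttt{br}_1+\texttt{br}_2$ is a legitimate (if loose) upper bound on the realized outlay regardless of how the race resolves.
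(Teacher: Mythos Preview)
Your proposal is correct and follows essentially the same route as the paper: invoke Lemma~\ref{lemma:bribing_attack_whale_trransaction} with $\texttt{br}_1=\frac{\alpha_i+\epsilon}{1-\alpha_i}$ and $\texttt{br}_2=\epsilon$, sum the bribes to $\frac{\alpha_i}{1-\alpha_i}+\epsilon\big(\frac{1}{1-\alpha_i}+1\big)$, and apply Theorem~\ref{theorem_bribery attack_long_term}. Your extra bookkeeping (that no adversarial block lands on the losing fork and that $\texttt{br}_1+\texttt{br}_2$ upper-bounds the realized outlay regardless of the race outcome) is more explicit than the paper's terse version but does not change the argument.
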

\begin{proof}
    According to Lemma~\ref{lemma:bribing_attack_whale_trransaction}, in an environment with an incentivizing factor $\epsilon$, any block $B$ mined by a mining pool with mining share $\alpha_i$ is susceptible to a whale transaction-based bribery attack if normalized bribes are set as $\texttt{br}_1 = \frac{\alpha_i + \epsilon}{1-\alpha_i}$ and $\texttt{br}_2 = \epsilon$. Therefore, to conduct a bribery attack on block $B$, the adversarial mining pool should pay the normalized bribe of $\texttt{br} = \texttt{br}_1 + \texttt{br}_2 = \frac{\alpha_i}{1-\alpha_i} + \epsilon \left(\frac{1}{1-\alpha_i}+1\right)$. According to Theorem~\ref{theorem_bribery attack_long_term}, the bribery attack is profitable for the adversarial mining pool if $\alpha_\mathcal{A} > \texttt{br}$. This completes the proof.
\null\hfill\qedsymbol\end{proof}
According to Theorem~\ref{theorem_bribery_whale}, assuming $\epsilon = 0$, an adversarial mining pool with a mining share $\alpha_\mathcal{A}$ can perform a bribery attack on blocks mined by pools whose mining power is less than $\frac{\alpha_\mathcal{A}}{1+\alpha_\mathcal{A}}$. This implies that even mining pools with mediocre or low amounts of mining power can exploit weaker mining pools through the bribery attack.

\subsection{Proof of Theorem~\ref{theorem_bribery_smart}} \label{appendix:prrof_theorem_bribery_smart}
To prove Theorem~\ref{theorem_bribery_smart}, we first present the following lemma.
\begin{lemma}\label{lemma_bribery_smart_complinat}
    Assume block $B_1$ is mined by mining pool $p_i$ with mining share $\alpha_i$. Assume further that block $B_1$ is a target of the smart contract-based bribery attack that offers two normalized bribes $\texttt{br}_1 = \alpha_i + \epsilon$ and $\texttt{br}_2 = \epsilon$. In an environment with the incentivizing factor $\epsilon$, all the petty-compliant mining pools except $p_i$ are incentivized to follow the bribery attack strategy.
\end{lemma}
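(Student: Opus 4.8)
The plan is to prove incentive-compatibility by a two-stage (backward-induction) argument over the game that unfolds once the target block $B_1$, mined by $p_i$, becomes the chain head and the bribing contract with $\texttt{br}_1 = \alpha_i + \epsilon$ and $\texttt{br}_2 = \epsilon$ has been deployed. I would fix an arbitrary petty-compliant pool $p_j$ with $j \neq i$, assume every other non-adversarial pool $p_k$, $k \notin \{i,j\}$, follows the bribery strategy while $p_i$ keeps mining on its own head $B_1$, and then, since petty-compliant pools are by assumption restricted to the strategies defined for the attack, show that $p_j$ (weakly) prefers the bribery strategy --- first produce a rival block on $B_0$, then extend the rival block --- to the honest strategy of extending $B_1$.

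First I would analyze the \emph{second} stage: the fork race after some rival block $B_\texttt{rival}$ has been published on top of $B_0$. Because a normalized bribe $\texttt{br}_2 = \epsilon$ sits on $B_\texttt{rival}$ while $B_1$ carries none, this is exactly the configuration treated in the proof of Lemma~\ref{lemma: selfish_mining_advantage}: every pool other than the miner of the bribe-free block $B_1$ prefers to extend $B_\texttt{rival}$, while $p_i$, whose block $B_1$ is worth $R$ against a rival-side bribe of only $\epsilon R$, keeps reviving $B_1$ (so $p_i$ is indeed not incentivized, consistent with the statement). Since here the bribe-free block is known to be $p_i$'s, the rival side commands mining share $1-\alpha_i$, hence $B_\texttt{rival}$ ultimately wins with probability $p_\texttt{success} = 1-\alpha_i$. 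This also certifies the ``post-rival'' half of the bribery strategy as each $p_j$'s best response.

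Backing up to the \emph{first} stage, with the head still $B_1$, I would compute one-block expected returns for $p_j$ exactly as in the proofs of Lemmas~\ref{lemma:unknown_semi-rational_incentivize} and~\ref{lemma:bribing_attack_whale_trransaction}. Honest: $r_1 = \alpha_j R$. Bribery: with probability $\alpha_j$ the pool is the first to mine a rival block, which makes $\texttt{br}_1 R = (\alpha_i+\epsilon)R$ immediately collectible from the contract --- the contract pays for the proof of work irrespective of the race outcome, which is precisely what makes the attack risk-free --- and additionally yields the block reward $R$ with probability $1-\alpha_i$ by the second stage; hence $r_2 = \alpha_j\big[(\alpha_i+\epsilon) + (1-\alpha_i)\big]R = \alpha_j(1+\epsilon)R$. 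Thus $r_2 - r_1 = \epsilon\,\alpha_j R$, which meets the $\epsilon$-semi-rational deviation threshold $\epsilon\,\alpha_j R$, so $p_j$ is incentivized to attempt a rival block. Combining the two stages gives the claim; Theorem~\ref{theorem_bribery_smart} then follows by feeding the total bribe $\texttt{br}_1+\texttt{br}_2 = \alpha_i+2\epsilon$ into Theorem~\ref{theorem_bribery attack_long_term}.

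The step I expect to be the main obstacle is the second-stage fork-race analysis: making the mutual best responses consistent (each non-$p_i$ pool extends $B_\texttt{rival}$ precisely because the others do) so as to pin down $p_\texttt{success} = 1-\alpha_i$ exactly, which I would handle by invoking the reasoning in the proof of Lemma~\ref{lemma: selfish_mining_advantage} to break the circularity, together with the observation about $p_i$'s stake just noted. A secondary point is that $r_2 - r_1 = \epsilon\,\alpha_j R$ is the tight, equality case; as in the paper's other bribery lemmas this is treated as sufficient, and one may take $\texttt{br}_1$ infinitesimally above $\alpha_i+\epsilon$ to obtain a strict incentive if preferred.
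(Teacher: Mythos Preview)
Your proposal is correct and follows essentially the same approach as the paper's own proof: a backward-induction argument that first resolves the post-rival fork race (the $\texttt{br}_2=\epsilon$ bribe pulls every pool except $p_i$ onto the rival, giving $p_\texttt{success}=1-\alpha_i$), and then compares the one-block expected returns $r_1=\alpha_jR$ versus $r_2=\alpha_j\big[(1-\alpha_i)(1+\texttt{br}_1)+\alpha_i\,\texttt{br}_1\big]R=\alpha_j(1+\epsilon)R$, yielding the tight threshold $\texttt{br}_1\ge\alpha_i+\epsilon$. Your decomposition of $r_2$ as ``bribe always plus block reward with probability $1-\alpha_i$'' is algebraically identical to the paper's ``success/failure'' split, and your closing remark on the equality case matches the paper's use of a weak inequality.
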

\begin{proof}
    Let $\mathcal{P}_{\overline{i}}$ denote the set of all mining pools excluding $p_i$. Under the bribery attack, the petty-compliant mining pools need to choose between mining on top of block $B_1$ or mining on top of its parent block to mine a rival block.

    Let $B_\texttt{next}$ denote the next block that is mined during the attack. We first determine the best strategy that a non-adversarial mining pool $p_j \in \mathcal{P}_{\overline{i}}$ should take once block $B_\texttt{next}$ is published. If block $B_\texttt{next}$ is mined on top of $B_1$, the attack fails and all mining pools, including $p_j$, will continue mining on top of $B_\texttt{next}$. However, if $B_\texttt{next}$ is mined on top of $B_0$, a fork race will occur between $B_1$ and $B_\texttt{next}$. During this fork race, since there is a bribe of $\texttt{br}_2 = \epsilon$ on top of block $B_\texttt{next}$, all the mining pools in $\mathcal{P}_{\overline{i}}$, including $p_j$, will mine on top of block $B_1$, and only $p_i$ will mine on top of block $B_1$. Note that even for the adversarial pool, mining on top of $B_{\texttt{next}}$ weakly dominates mining on top of $B_1$. If the adversarial mining pool manages to mine the next block on top of $B_{\texttt{next}}$, the bribe $\texttt{br}_2 = \epsilon$ will be cleared by the adversary, and there is no need to pay that.

    Knowing the best strategy for mining pool $p_j \in \mathcal{P}_{\overline{i}}$ after the publishing of block $B_\texttt{next}$ is to mine on top of that, we can obtain the optimal strategy for $p_j$ to follow before $B_\texttt{next}$ is mined. To this end, we compare the average expected return under the following two strategies for the next block:
    \begin{itemize}
        \item Strategy $\pi_1$: $p_j$ mines on top of block $B_1$.
        \item Strategy $\pi_2$: $p_j$ mines on top of block $B_0$.
    \end{itemize}

    Under strategy $\pi_1$, the expected return of mining pool $p_j$ for the next block is equal to $r_1 = \alpha_j R$.
    Under strategy $\pi_2$, the expected return of mining pool $p_j$ for the next block can be obtained as follows:
    \begin{equation}
        r_2 = \alpha_j R_\texttt{success} p_\texttt{success} + \alpha_j R_\texttt{failure} \cdot p_\texttt{failure}\enspace .
    \end{equation}
    The expected return is composed of two parts: the return when the mining pool $p_j$ wins the fork race against target block $B_1$, and the return when the mining pool $p_j$ loses the fork race. According to the proof of Lemma~\ref{lemma:bribing_attack_whale_trransaction}, the expected return under winning the fork race is equal to $ R_\texttt{success}=\alpha_j (1-\alpha_i)(1+\texttt{br}_1)R$. However, if the mining pool $p_j$ loses the fork race, despite losing the block reward, it can collect the bribe deposited in the smart contract as it has mined a valid rival block for block $B_1$. Therefore, $R_\texttt{failure} = \texttt{br}_1 R$. As a result, the total expected return can be obtained as follows:
    \begin{equation}
        r_2 = \alpha_j (1-\alpha_i)(1+\texttt{br}_1) R+ \alpha_j \alpha_i \texttt{br}_1 R \enspace .
    \end{equation}
    In an environment with incentivizing factor $\epsilon$, mining pool $p_j$ follows the bribery attack strategy over the honest strategy if $r_2\ge r_1 + \epsilon \alpha_j R$. We have:
    \begin{equation}
        \begin{split}
            & r_2\ge r_1 + \epsilon \alpha_j R \Longleftrightarrow \alpha_j (1-\alpha_i)(1+\texttt{br}_1) R+ \alpha_j \alpha_i \texttt{br}_1 R \ge \alpha_j R + \epsilon \alpha_j R \\
            &\Longleftrightarrow \texttt{br}_1 \ge \alpha_i + \epsilon \enspace.
        \end{split}
    \end{equation}
\null\hfill\qedsymbol\end{proof}

\begin{proof}[Proof of Theorem~\ref{theorem_bribery_smart}]
    According to Lemma~\ref{lemma_bribery_smart_complinat}, in an environment with an incentivizing factor $\epsilon$, any block $B$ mined by a mining pool with mining share $\alpha_i$ is susceptible to a smart contract-based bribery attack if normalized bribes are set as $\texttt{br}_1 = \alpha_i + \epsilon$ and $\texttt{br}_2 = \epsilon$. Therefore, to conduct a bribery attack on block $B$, the adversarial mining pool should pay the normalized bribe of $\texttt{br} = \texttt{br}_1 + \texttt{br}_2 = \alpha_i + 2\epsilon$. According to Theorem~\ref{theorem_bribery attack_long_term}, the bribery attack is profitable for the adversarial mining pool if $\alpha_\mathcal{A} > \texttt{br}$. This completes the proof.
\null\hfill\qedsymbol\end{proof}
\subsection{Markov Chain Analysis of Bribery Attack}\label{appendix:markov_chain_bribery}
To analyze the bribery attack, we divide the mining pools into three groups: the adversarial mining pool, the target mining pools, and the non-target mining pools. The target mining pools are those
% whose mining share is less than or equal to a specific target share $\alpha_\texttt{target}$ and 
whose blocks are considered targets for the bribery attack. The non-target mining pools are those
% whose mining share is greater than the target share $\alpha_\texttt{target}$ and 
whose blocks are not considered targets for the bribery attack.
We denote by $\alpha_\mathcal{A}$ the mining share of the adversarial pool. Let $\mathcal{P}_\texttt{target}$ denote the set of $N$ target mining pools. For each target mining pool $p_i \in \mathcal{P}_\texttt{target}$, we denote its mining share by $b_i$. We denote by $b$ the total mining share of all the target mining pools in $\mathcal{P}_\texttt{target}$. 
We define value $\beta$ as follows:
\begin{equation}
    \beta = \sum_{p_i \in \mathcal{P}_\texttt{target}} {\frac{b_i}{1-b_i}}
\end{equation}

\noindent\textbf{States:} State $S_0$ denotes the state in which the tip of the chain is not a target block and all the pools are mining on the tip of the chain. State $S_1^i$ for $i \in [N]$\footnote{We define $[N]$ to be $\{1, 2,\ldots, N\}$.} denotes a state in which the tip of the chain is a target block mined by the mining pool $p_i$ and the adversary has placed 2 normalized bribes $\texttt{br}_1 = b_i + \epsilon$ and $\texttt{br}_2 = \epsilon$ for orphaning the chain tip. According to Theorem~\ref{theorem_bribery_smart}, these bribes incentivizes all the petty-compliant mining pools except for the target pool $p_i$ to mine a rival block. State $S_2^i$ for $i \in [N]$ denotes a state in which a rival block is mined for the target block of mining pool $p_i$. 
Let $P_0$, $P_1^i$, and $P_2^i$, for $i \in [N]$, denote the probability of being at states $S_0$, $S_1^i$, and $S_2^i$, respectively.

We assume the smart contract of the bribery attack is designed such that $\texttt{br}_1$ is payable only upon mining a rival block, and $\texttt{br}_2$ is payable only upon mining a non-adversarial block supporting the rival block. In all other cases, the adversarial mining pool is able to clear the bribes.

\noindent\textbf{State transitions:} 
\begin{itemize}
    \item Let $S_0$ be the current state. If the adversarial mining pool mines a new tip block, it immediately publishes the block, and all the mining pools accept the block. The system remains in the same state.
    \item Let $S_0$ be the current state. If a non-target mining pool mines and publishes a new tip block, all the mining pools accept the block. The system remains in the same state.
    \item Let $S_0$ be the current state. If a target mining pool $p_i$ mines a new tip block $B_1^\texttt{target}$, this block is considered a target for the bribery attack. The adversarial mining pool places two normalized bribes $\texttt{br}_1 = b_i + \epsilon$ and $\texttt{br}_2 = \epsilon$ for orphaning $B_1^\texttt{target}$. The system transitions to state $S_1^i$. At the new state $S_1^i$, mining pool $p_i$ and the adversarial mining pool mine on top of block $B_1^\texttt{target}$, while the other mining pools mine on top of the parent of block $B_1^\texttt{target}$ to mine a rival block.
    \item Let $S_1^i$ be the current state. If the adversarial mining pool mines the next block, which is a block that extends $B_1^\texttt{target}$, it immediately publishes the block and clears the previous bribes. In this case, all mining pools adopt the chain ending at the adversarial block, and the system transitions to state $S_0$.
    \item Let $S_1^i$ be the current state. If any petty-compliant pool except $p_i$ mines the next block, which is a rival block to $B_1^\texttt{target}$, it immediately publishes the block and collects the bribe $\texttt{br}_1 = b_i + \epsilon$. The system transitions to state $S_2^i$. At this new state $S_2^i$, all the mining pools except $p_i$ mine on the rival block.
    \item Let $S_1^i$ be the current state. If mining pool $p_i$ mines the next block $B_2^\texttt{target}$, which extends $B_1^\texttt{target}$, all the mining pools accept the chain ending at the previous target block $B_1^\texttt{target}$, and the adversary clears the previous bribes. The adversarial mining pool places new bribes for orphaning the new target block $B_2^\texttt{target}$. The system remains in the same state.
    \item Let $S_2^i$ be the current state. If the adversarial mining pool mines the next block, which is a block that extends the rival block, the adversary immediately publishes the block and clears the second bribe $\texttt{br}_2 = \epsilon$. All the mining pools adopt the rival fork ending at the adversarial block\footnote{According to Lemma~\ref{lemma_0.43}, if the mining share of the target mining pool $p_i$ is less than $0.4302$, it adopts the longer rival fork}, and the system transitions to state $S_0$.
    \item Let $S_2^i$ be the current state. If a non-target mining pool mines the next block, which is a block that extends the rival block, it publishes the block and collects the bribe $\texttt{br}_2 = \epsilon$. All the mining pools adopt the rival fork ending at the non-target block, and the system transitions to state $S_0$.
    \item Let $S_2^i$ be the current state. If a target mining pool $p_j$, where $p_j$ differs from $p_i$, mines the next block $B_2^\texttt{target}$, which is a block that extends the rival block, it publishes the block and collects the bribe $\texttt{br}_2 = \epsilon$. All the mining pools adopt the rival fork ending at the rival block. The adversarial mining pool places new bribes for orphaning the new target block $B_2^\texttt{target}$, and the system transitions to state $S_1^j$.
    \item Let $S_2^i$ be the current state. If the target mining pool $p_i$ mines the next block $B_2^\texttt{target}$, which is a block that extends the target block $B_1^\texttt{target}$, it publishes the block, and the adversarial mining pool clears the bribe $\texttt{br}_2 = \epsilon$. All the mining pools adopt the target fork ending at the first target block $B_1^\texttt{target}$. The adversarial mining pool places new bribes for orphaning the new target block $B_2^\texttt{target}$, and the system transitions to state $S_1^i$.
\end{itemize}

% The state transition under the bribery attack is depicted in Figure~\ref{fig:markov_bribery}.
% \begin{figure}[t]
%     \centering
%     \includegraphics[height=2.5in]{}
%     \caption{State transitions under the bribery attack.}
%     \label{fig:markov_bribery}
% \end{figure}
The state transition probabilities, the expected profit of the adversarial pool for each state transition, and the number of blocks added for each state transition are described in Table~\ref{tab:markov_bribery}.
% \begin{table}[t]
%     \centering
%     \caption{State transitions under the bribery attack.}
%     \resizebox{\textwidth}{!}{
%     \begin{tabular}{|c|c|c|c|c|}
%         \toprule
%         \textbf{Initial State} & \textbf{Next State} & \textbf{Transition Probability} & \textbf{Total Blocks Added} & \textbf{Adversary Expected Profit} \\
%         \midrule
%         \multirow{2}{*}{$S_0$} & $S_0$ & $\alpha_\mathcal{A}$ & $1$ & $1$ \\ 
%         \cline{2-5} 
%         & $S_0$ & $1-b-\alpha_\mathcal{A}$ & $1$ & $0$ \\
%         \cline{2-5} 
%         & $S_1^i$ & $b_i$ & $0$ & $0$ \\ 
%         \midrule
%         \multirow{3}{*}{$S_1^i$} & $S_0$ & $\alpha_\mathcal{A}$ & $2$ & $1$ \\ 
%         \cline{2-5} 
%         & $S_1^i$ & $b_i$ & $1$ & $0$ \\ 
%         \cline{2-5} 
%         & $S_2^i$ & $1-\alpha_\mathcal{A}-b_i$ & $0$ & $-b_i-\epsilon$ \\ 
%         \midrule
%         \multirow{2}{*}{$S_2^i$} & $S_0$ & $\alpha_\mathcal{A}$ & $2$ & $1$ \\
%         \cline{2-5} 
%         & $S_0$ & $1-b-\alpha_\mathcal{A}$ & $2$ & $-\epsilon$ \\
%         \cline{2-5} 
%         & $S_1^i$ & $b_i$ & $1$ & $0$ \\ 
%         \cline{2-5} 
%         & $S_1^j \enspace (j\ne i)$ & $b_j$ & $1$ & $-\epsilon$ \\
%         \bottomrule
%     \end{tabular}
%     }
%     \label{tab:markov_bribery}
% \end{table}
\begin{table}[t]
    \centering
    \caption{State transitions under the bribery attack.}
    \renewcommand{\arraystretch}{1.5} % Adjust cell height
    \resizebox{\textwidth}{!}{
    \begin{tabular}{|c|c|c|c|c|}
        \toprule
        \textbf{Initial State} & \textbf{Next State} & \textbf{Transition Probability} & \textbf{Total Blocks Added} & \textbf{Adversary Expected Profit} \\
        \midrule
        \multirow{2}{*}{$S_0$} & $S_0$ & $\alpha_\mathcal{A}$ & $1$ & $1$ \\ 
        \cline{2-5} 
        & $S_0$ & $1-b-\alpha_\mathcal{A}$ & $1$ & $0$ \\
        \cline{2-5} 
        & $S_1^i$ & $b_i$ & $0$ & $0$ \\ 
        \midrule
        \multirow{3}{*}{$S_1^i$} & $S_0$ & $\alpha_\mathcal{A}$ & $2$ & $1$ \\ 
        \cline{2-5} 
        & $S_1^i$ & $b_i$ & $1$ & $0$ \\ 
        \cline{2-5} 
        & $S_2^i$ & $1-\alpha_\mathcal{A}-b_i$ & $0$ & $-b_i-\epsilon$ \\ 
        \midrule
        \multirow{2}{*}{$S_2^i$} & $S_0$ & $\alpha_\mathcal{A}$ & $2$ & $1$ \\
        \cline{2-5} 
        & $S_0$ & $1-b-\alpha_\mathcal{A}$ & $2$ & $-\epsilon$ \\
        \cline{2-5} 
        & $S_1^i$ & $b_i$ & $1$ & $0$ \\ 
        \cline{2-5} 
        & $S_1^j \enspace (j\ne i)$ & $b_j$ & $1$ & $-\epsilon$ \\
        \bottomrule
    \end{tabular}
    }
    \label{tab:markov_bribery}
\end{table}

Based on the transition probabilities mentioned in Table~\ref{tab:markov_bribery}, we can obtain the state probabilities as follows:
\begin{equation}
    \begin{split}
        & P_0 = \frac{1-b+\alpha_\mathbf{A}\beta}{1+\beta}\\
        & P_1^i = \frac{b_i}{(1-b_i)(1+\beta)} \enspace, \\
        & P_2^i = \frac{b_i(1-\alpha_\mathbf{A}-b_i)}{(1-b_i)(1+\beta)} \enspace .
    \end{split}
\end{equation}
According to Table~\ref{tab:markov_bribery}, the reward share $R_\mathcal{A}$ of the adversarial pool under the bribery attack can be obtained as follows:
\begin{align}
    R_\mathcal{A} = \frac{\alpha_\mathcal{A} - \sum_{i=1}^{N}{\Big(P_1^i (1-\alpha_\mathcal{A}-b_i)(b_i+\epsilon) + P_2^i (1-\alpha_\mathcal{A}-b_i)\epsilon\Big)}}{P_0 (1-b) + \sum_{i=1}^{N}{\Big(P_1^i (b_i + 2\alpha_\mathcal{A}) + P_2^i (2-b)\Big)}} \enspace .
\end{align}

\subsection{Markov Chain Analysis of Undercutting Attack}\label{appendix:markov_chain_undercut}
To analyze the undercutting attack, we divide the mining pools into three groups: the adversarial mining pool, the target mining pools, and the non-target mining pools. The target mining pools are those
% whose mining share is less than or equal to a specific target share $\alpha_\texttt{target}$ and 
whose blocks are considered targets for the undercutting attack. The non-target mining pools are those
% whose mining share is greater than the target share $\alpha_\texttt{target}$ and 
whose blocks are not considered targets for the undercutting attack. We use the same notations as in Appendix~\ref{appendix:markov_chain_bribery}. Whenever a target block is mined, the adversarial mining pool attempts to undercut the target block. If the adversary successfully mines a rival block, it places a bribe of $\epsilon$ on top of the rival block to incentivize all mining pools, except the miner of the target block, to mine on top of the rival block.

% The state transition under the undercutting attack is depicted in Figure~\ref{fig:markov_undercut}.
% \begin{figure}[t]
%     \centering
%     \includegraphics[height=2.5in]{}
%     \caption{State transitions under the undercutting attack.}
%     \label{fig:markov_undercut}
% \end{figure}

\noindent\textbf{States:} State $S_0$ denotes the state in which the tip of the chain is not a target block and all the pools are mining on the tip of the chain. State $S_1^i$ for $i \in [N]$ denotes a state in which the tip of the chain is a target block mined by the mining pool $p_i$, and the adversarial mining pool is attempting to undercut the chain tip. State $S_2^i$ for $i \in [N]$ denotes a state in which the adversary has mined a rival block for the target block of mining pool $p_i$ and placed a bribe of $\texttt{br} = \epsilon$ on top of the rival block.
Let $P_0$, $P_1^i$, and $P_2^i$, for $i \in [N]$, denote the probability of being at states $S_0$, $S_1^i$, and $S_2^i$, respectively.

\noindent\textbf{State Transitions:}
\begin{itemize}
    \item Let $S_0$ be the current state. If the adversarial mining pool mines a new tip block, it immediately publishes the block, and all mining pools accept it. The system remains in the same state.
    \item Let $S_0$ be the current state. If a non-target mining pool mines and publishes a new tip block, all mining pools accept it. The system remains in the same state.
    \item Let $S_0$ be the current state. If a target mining pool $p_i$ mines a new tip block $B_1^\texttt{target}$, this block becomes the target for the undercutting attack. The system transitions to state $S_1^i$. At $S_1^i$, all non-adversarial mining pools mine on top of $B_1^\texttt{target}$, while the adversarial mining pool mines on the parent block of $B_1^\texttt{target}$ to create a rival block.
    \item Let $S_1^i$ be the current state. If the adversarial mining pool mines a rival block for $B_1^\texttt{target}$, it immediately publishes the block and places a bribe $\texttt{br} = \epsilon$ on top of it. The system transitions to state $S_2^i$. At $S_2^i$, all mining pools except $p_i$ mine on top of the rival block.
    \item Let $S_1^i$ be the current state. If a non-target mining pool mines a block extending $B_1^\texttt{target}$, it publishes the block, and all mining pools adopt the chain ending at this non-target block. The system transitions to state $S_0$.
    \item Let $S_1^i$ be the current state. If a target mining pool $p_j$, where $p_j \neq p_i$, mines a block $B_2^\texttt{target}$ extending $B_1^\texttt{target}$, it publishes the block. The adversarial mining pool then adopts $B_1^\texttt{target}$ and attempts to undercut $B_2^\texttt{target}$. The system transitions to state $S_1^j$.
    \item Let $S_1^i$ be the current state. If the target mining pool $p_i$ mines a block $B_2^\texttt{target}$ extending $B_1^\texttt{target}$, it publishes the block. The adversarial mining pool continues to target $B_2^\texttt{target}$ while attempting to undercut it. The system remains in state $S_1^i$.
    \item Let $S_2^i$ be the current state. If the adversarial mining pool mines a block extending the rival block, it immediately publishes the block and clears the bribe $\texttt{br} = \epsilon$. All mining pools adopt the rival fork, and the system transitions to state $S_0$.
    \item Let $S_2^i$ be the current state. If a non-target mining pool mines a block extending the rival block, it publishes the block and collects the bribe $\texttt{br} = \epsilon$. All mining pools adopt the rival fork, and the system transitions to state $S_0$.
    \item Let $S_2^i$ be the current state. If a target mining pool $p_j$, where $p_j \neq p_i$, mines a block $B_2^\texttt{target}$ extending the rival block, it publishes the block and collects the bribe $\texttt{br} = \epsilon$. The adversarial mining pool adopts $B_1^\texttt{target}$ and attempts to undercut $B_2^\texttt{target}$. The system transitions to state $S_1^j$.
    \item Let $S_2^i$ be the current state. If the target mining pool $p_i$ mines a block $B_2^\texttt{target}$ extending $B_1^\texttt{target}$, it publishes the block. The adversarial mining pool clears the bribe $\texttt{br} = \epsilon$, adopts $B_1^\texttt{target}$, and attempts to undercut $B_2^\texttt{target}$. The system transitions to state $S_1^i$.
\end{itemize}

The state transition probabilities, the expected profit of the adversarial pool for each state transition, and the number of blocks added for each state transition are described in Table~\ref{tab:markov_undercut}.
% \begin{table}[b]
%     \centering
%     \caption{State transitions under the undercutting attack.}
%     \resizebox{\textwidth}{!}{
%     \begin{tabular}{|c|c|c|c|c|}
%         \toprule
%         \textbf{Initial State} & \textbf{Next State} & \textbf{Transition Probability} & \textbf{Total Blocks Added} & \textbf{Adversary Expected Profit} \\
%         \midrule
%         \multirow{2}{*}{$S_0$} & $S_0$ & $1-b$ & $1$ & $\frac{\alpha_\mathcal{A}}{1-b}$ \\ 
%         \cline{2-5} 
%         & $S_1$ & $b$ & $0$ & $0$ \\ 
%         \midrule
%         \multirow{3}{*}{$S_1$} & $S_0$ & $1-\alpha_\mathcal{A}-b$ & $2$ & $0$ \\ 
%         \cline{2-5} 
%         & $S_1$ & $b$ & $1$ & $0$ \\ 
%         \cline{2-5} 
%         & $S_2$ & $\alpha_\mathcal{A}$ & $0$ & $-\epsilon$ \\ 
%         \midrule
%         \multirow{2}{*}{$S_2$} & $S_0$ & $1-b$ & $2$ & $1+\frac{\alpha_\mathcal{A}}{1-b}$ \\ 
%         \cline{2-5} 
%         & $S_1$ & $b$ & $1$ & $\frac{b-\beta_\texttt{target}}{b}$ \\ 
%         \bottomrule
%     \end{tabular}
%     }
%     \label{tab:markov_undercut}
% \end{table}
\begin{table}[t]
    \centering
    \caption{State transitions under the undercutting attack.}
    \renewcommand{\arraystretch}{1.5} % Adjust cell height
    \resizebox{\textwidth}{!}{
    \begin{tabular}{|c|c|c|c|c|}
        \toprule
        \textbf{Initial State} & \textbf{Next State} & \textbf{Transition Probability} & \textbf{Total Blocks Added} & \textbf{Adversary Expected Profit} \\
        \midrule
        \multirow{2}{*}{$S_0$} & $S_0$ & $\alpha_\mathcal{A}$ & $1$ & $1$ \\ 
        \cline{2-5} 
        & $S_0$ & $1-b-\alpha_\mathcal{A}$ & $1$ & $0$ \\
        \cline{2-5} 
        & $S_1^i$ & $b_i$ & $0$ & $0$ \\ 
        \midrule
        \multirow{3}{*}{$S_1^i$} & $S_2^i$ & $\alpha_\mathcal{A}$ & $0$ & $0$ \\ 
        \cline{2-5} 
        & $S_0$ & $1-b-\alpha_\mathcal{A}$ & $2$ & $0$ \\ 
        \cline{2-5} 
        & $S_1^j (j \in [N])$ & $b_j$ & $1$ & $0$ \\ 
        \midrule
        \multirow{2}{*}{$S_2^i$} & $S_0$ & $\alpha_\mathcal{A}$ & $2$ & $2$ \\
        \cline{2-5} 
        & $S_0$ & $1-b-\alpha_\mathcal{A}$ & $2$ & $1-\epsilon$ \\
        \cline{2-5} 
        & $S_1^i$ & $b_i$ & $1$ & $0$ \\ 
        \cline{2-5} 
        & $S_1^j \enspace (j\ne i)$ & $b_j$ & $1$ & $1-\epsilon$ \\
        \bottomrule
    \end{tabular}
    }
    \label{tab:markov_undercut}
\end{table}
Based on the transition probabilities mentioned in Table~\ref{tab:markov_undercut}, we can obtain the state probabilities as follows:
\begin{equation}
    \begin{split}
        & P_0 = 1 -b(1+\alpha_\mathcal{A}) \\
        & P_1^i = b_i \enspace, \\
        & P_2 = \alpha_\mathcal{A} b_i \enspace .
    \end{split}
\end{equation}
According to Table~\ref{tab:markov_undercut}, the reward share $R_\mathcal{A}$ of the adversarial pool under the undercutting attack can be obtained as follows:
\begin{align}
    R_\mathcal{A} = \frac{P_0 \alpha_\mathcal{A} + \sum_{i=1}^{N} {\Big(P_2^i(2\alpha_\mathcal{A}+(1-\alpha_\mathcal{A}-b_i)(1-\epsilon))\Big)}}{P_0 (1-b) + \sum_{i=1}^{N} {\Big(P_1^i (2-2\alpha_\mathcal{A}-b) + P_2^i (2-b)\Big)}} \enspace .
\end{align}
\subsection{Naive Solutions to Mitigate the Bribery Attack}\label{appendix:naive_solution}
The introduction of the bribery attack raises the question of how a petty-compliant mining pool can defend against this attack. As discussed in the previous section, conducting a bribery attack requires a higher budget in a setting with unknown miners. This suggests that if a petty-compliant mining pool does not reveal its identity in the block, the probability of being targeted by a bribery attack decreases. This strategy is particularly useful for smaller mining pools, which are more susceptible to bribery attacks. However, not disclosing its identity in the coinbase transaction does not guarantee complete concealment for the mining pool. For instance, petty-compliant mining pools may conduct network analyses to identify the miner of a targeted block. Moreover, if the remaining petty-compliant pools, especially the largest ones, continue to reveal their identity in their blocks, it becomes easier to detect whether a block is mined by one of the largest pools or by a smaller pool prone to the bribery attack.

Another potential solution for petty-compliant mining pools is to engage in commitment games. To protect their blocks, they can submit commitments offering incentives to mining pools that support their block in the event of a fork race. However, this approach ties Bitcoin's security to commitments that typically occur outside the Bitcoin framework. Furthermore, the presence of such commitments may encourage other petty-compliant mining pools to initiate fork races to access the funds deposited in those commitments. For instance, other petty-compliant mining pools may intentionally disrupt block propagation in the network to increase the likelihood of a fork race occurring.

\subsection{Smart Contract-Based vs. Whale Transaction-Based Bribery Attacks}
One of the advantages of the smart contract-based bribery attack is that there is no need for the adversarial mining pool to commit to its strategy to incentivize the petty-compliant mining pools to follow the attack strategy.

In the whale transaction-based attack, the mining pool $p_j$ must trust the assumption that the adversarial mining pool will mine on top of the rival block rather than the target block in the case of a fork race. This assumption is necessary as the dominant strategy for the adversarial mining pool is to mine on top of the target block. This is because once a rival block is mined and a fork race is created, the attack is considered successful from the perspective of the adversarial mining pool regardless of the fork race outcome, as one of the blocks will be orphaned. In this case, mining on top of the target block is the dominant strategy for the adversarial mining pool because it can invalidate the bribe transaction in the rival block and avoid paying the bribe.
This illustrates that if the adversarial mining pool does not commit to the strategy of mining on top of the rival block in the whale transaction-based attack, the total mining share that mines on top of the targeted block becomes equal to $\alpha_i + \alpha_\mathcal{A}$, resulting in unprofitability of the attack from the perspective of the petty-compliant mining pool $p_j$.

However, in the smart contract-based attack, the adversarial mining pool is not incentivized to mine on top of the target block in the case of a fork race. This is because the bribe $\texttt{br}_1$ will be paid to the miner of the rival block regardless of the fork race result. Therefore, in contrast to the whale transaction-based attack, there is no hope for the adversarial mining pool to reclaim the bribe $\texttt{br}_1$ by mining on top of the target block. Also, regarding the bribe $\texttt{br}_2$, the adversarial mining pool has no preference between the rival block and the target block to mine on top of. If the adversarial mining pool mines on top of the target block, no petty-compliant mining pool can withdraw the bribe $\texttt{br}_2$ from the smart contract. If it mines on top of the rival block, the adversarial mining pool will be eligible to withdraw $\texttt{br}_2$ from the smart contract. Therefore, in both scenarios, the adversarial mining pool can reclaim $\texttt{br}_2$. This indicates that in the smart contract-based attack, mining on top of the rival block dominates mining on top of the target block for the adversarial mining pool. 

To make mining on top of the rival block strongly dominate mining on top of the target block, the adversarial mining pool can increase the bribe for the rival block from $\texttt{br}_1 = \alpha_i + \epsilon$ to $\texttt{br}_1 = \alpha_i + 2\epsilon$, where the petty-compliant miner of the rival block can use the extra bribe of $\epsilon$ to incentivize the adversarial mining pool to mine on top of the rival block. To be more precise, the adversarial mining pool deposits $\texttt{br}_1 = \alpha_i + 2\epsilon$ and $\texttt{br}_2 = \epsilon$ in the smart contract, where both bribes $\texttt{br}_1$ and $\texttt{br}_2$ can be withdrawn by the adversarial mining pool after the expiration of a time lock. During this time lock, if a petty-compliant mining pool mines a valid rival block, it withdraws $\alpha_i + \epsilon$ of bribe $\texttt{br}_1$, and the rest of bribe $\texttt{br}_1$ gets deposited as $\texttt{br}_3 = \epsilon$ in the smart contract without any time lock. The mining pool that mines on top of the rival block is eligible to withdraw both $\texttt{br}_2$ and $\texttt{br}_3$, resulting in a total bribe of $2\epsilon$. In this case, if the adversarial mining pool mines on top of the rival block, it receives $2\epsilon$. But, if it mines on top of the target block, it would only be able to withdraw $\texttt{br}_2 = \epsilon$ after the time lock, indicating that mining on top of the rival block strongly dominates mining on top of the target block.

% A detailed analysis of the bribery attack under the assumption of unknown miners is presented in Appendix~\ref{appendix:bribery_unknown}. 
% A short discussion regarding some basic solutions to mitigate the bribery attack is presented in Appendix~\ref{appendix:naive_solution}.

\subsection{The Impact of DAM on the Profitability of Selfish Mining and Bribery Attacks}\label{appendix:mitigation_DAM_selfish_bribery}
% Selfish mining and bribery attacks, although profitable, generate orphan blocks that serve as evidence of mining power destruction attacks. By incorporating this power destruction evidence into the difficulty adjustment mechanism (DAM), we can demotivate payoff-maximizing miners from engaging in such attacks.
The profitability of an adversarial mining pool that engages in selfish mining or bribery attacks is mainly due to the destruction of a portion of the network's mining power. This mining power destruction leads to the reduced total effective mining power of the network. To balance the network's throughput, the difficulty adjustment mechanism lowers the mining difficulty to align with the reduced active mining power. As a result of this reduction in mining difficulty, the block generation rate of the adversarial poll increases, thereby enhancing its profitability.

Each Bitcoin epoch is defined as the duration in which a total of $L=2016$ blocks are added to the canonical chain. At the end of each epoch, the difficulty adjustment mechanism sets the mining difficulty for the upcoming epoch based on the duration of the past epoch. If a bribing or selfish mining attack occurs during an epoch, some blocks become orphaned, leading to an increase in the epoch's duration and a decrease in the canonical block generation rate. As a result, the difficulty adjustment mechanism at the end of the epoch reduces the mining difficulty for the next epoch to restore the canonical block generation rate. Selfish mining and bribery attacks exploit a bottleneck in the difficulty adjustment mechanism, which fails to distinguish between a portion of the mining power being offline or being wasted due to an attack. The current Bitcoin difficulty adjustment mechanism balances the mining difficulty based on the effective mining power, defined as the mining power that extends the canonical chain and does not contribute to any other blocks outside the canonical chain. The advantage of this approach is that it maintains a nearly constant block generation rate and transaction throughput.

However, it is possible to design a difficulty adjustment mechanism that aligns the difficulty based on active mining power, which is defined as all the mining power used to mine both canonical and non-canonical blocks. A difficulty adjustment mechanism that precisely estimates the active mining power can disincentivize adversarial mining pools from conducting selfish or bribery attacks. This is because, no matter how many blocks the adversarial mining pool manages to orphan, the difficulty adjustment mechanism would account for the total active mining power and would not reduce the mining difficulty. This lack of difficulty reduction deters petty-compliant mining pools from engaging in adversarial behaviors such as selfish mining or bribery attacks.

Although the second approach to designing difficulty adjustment mechanisms can be beneficial to a network of petty-compliant miners, it cannot completely mitigate the mining power destruction attacks. First, this type of DAM is ineffective against a Byzantine mining pool that disregards profitability and acts maliciously only to disrupt blockchain progress. For example, a Byzantine mining pool can exploit the active mining power-based DAM to reduce the transaction throughput. Additionally, there is the critical question of how to design a DAM that accurately estimates active mining power. Papers~\cite{sapirshtein2017optimal} and~\cite{sarenche2024time} propose solutions for calculating active mining power by incorporating the counts of both canonical and orphan blocks into the difficulty adjustment mechanism. In these solutions, the difficulty for the next epoch is calculated based on the total number of orphaned and canonical blocks in the previous epoch. Orphan blocks serve as valid proof of mining power loss. However, it is not always possible to provide the DAM with such proof. For instance, if a portion of the mining power is directed toward a platform other than Bitcoin, valid proof of mining power loss cannot be provided unless the Bitcoin blockchain decides to trust external platforms.
%%%%%%%%%%%%%%%%%%%%%%%%%%%%%%%%%%%%%%%%%%%%%%%%%%%%%%%%%%%%%%%%%%%%%%%%%%%%%%%%%%%%%%%%%%%%%%%%%%%%%%%%%%%%%%%%%%%%%%%%%%%%%%%%%%%%%%%%%%%%%%%%%%%%%%%%%%%%%%%%%%%%%%%%%%%%%%%%%%%%%%%%%%%%
\section{Mining Power Distraction Attack} \label{sec:distraction}
In this section, we present an attack referred to as the \emph{mining power distraction attack}, which is capable of destroying a portion of the network's mining power without producing any evidence of mining power destruction.

In the distraction attack, the adversarial mining pool aims to distract a portion of mining power from mining on top of the canonical chain by incentivizing them to mine on another platform. Since the distracted mining power contributes to a work that is only valid outside the Bitcoin framework, one cannot prove the validity of work to the difficulty adjustment mechanism of Bitcoin. 
Therefore, even a DAM that balances the difficulty with the active mining power cannot disincentivize the payoff-maximizing mining pools from performing the mining power distraction attack.

The general idea behind the attack is that once the adversarial mining pool mines a block under Bitcoin difficulty, denoted by \(D_1\), it does not publish the block to the other mining pools. Instead, it conveys the following message to them: "I have already mined the next block of the canonical chain. If you want to know the contents of this block, you must submit a valid proof of work under difficulty \(D_2\), where \(D_2 < D_1\)." We will later discuss the details of how to convey this message and why mining pools should trust it. For now, let us assume that the petty-compliant mining pools accept the existence of a hidden block that extends the canonical chain. This situation puts them in a dilemma: on one hand, if they continue mining on top of the public chain and successfully mine the next block, it will enter a fork race with the adversarial block. On the other hand, if they wish to access the adversarial block, they need to waste a portion of their mining power to solve a puzzle with difficulty \(D_2\). We will demonstrate that by selecting an appropriate value for \(D_2\), the adversarial mining pool can incentivize the petty-compliant mining pools to choose the latter option, thereby wasting some of their mining power on a puzzle defined outside the Bitcoin framework.
Our distraction attack can be considered an extension of the blockchain denial-of-service (BDoS) attack introduced in~\cite{mirkin2020bdos}. The description of the BDoS attack and its differences from our distraction attack are presented in~\ref{Appendix:BDoS}. The BDoS attack primarily affects mining pools whose revenue is only marginally higher than their mining costs and is ineffective against miners with a higher profit-to-loss ratio. In contrast, our distraction attack can disincentivize mining on Bitcoin, even under the assumption that mining costs are negligible.

\noindent\textbf{Attack Description:} Let $D_1$ denote the Bitcoin mining difficulty. Once the adversarial mining pool mines a block, it deploys a smart contract on an external blockchain platform. This smart contract defines a mining puzzle that is similar to the Bitcoin puzzle, with the difference that the puzzle difficulty is $D_2$, which is less than $D_1$. We refer to the solution to this easier puzzle as \textbf{mini-PoW}. The smart contract is designed so that if anyone submits a valid mini-PoW, the adversarial mining pool should publish its hidden Bitcoin block; otherwise, it will lose a significant deposit. The adversarial mining pool deposits three normalized bribes of values $\texttt{br}_1$, $\texttt{br}_2$, and $\texttt{br}_3$ in the smart contract. If a user submits a mini-PoW and the adversarial mining pool does not reveal the hidden block within a short time (indicating dishonesty of the claim of having a Bitcoin block), the user is eligible to withdraw $\texttt{br}_1$. If a user submits a mini-PoW and the adversarial mining pool immediately reveals its hidden Bitcoin block (indicating honesty), the user is eligible to withdraw $\texttt{br}_2$. If a user mines a supporting block on top of the adversarial block during the fork race, the user is eligible to withdraw $\texttt{br}_3$.

The smart contract stores two sets of parameters. The first set relates to the Bitcoin chain and includes: the hash of the head of the canonical chain from the perspective of petty-compliant mining pools, denoted by \(H_1\); the Bitcoin mining difficulty \(D_1\); and a commitment to the hash of the adversarial hidden block, denoted as \(\texttt{Hash}(H_\mathcal{A})\). The second set of parameters is used to construct a mini-PoW puzzle and includes: a random value serving as a parent block hash \(H_2\); a random value serving as a block Merkle root, denoted as \(\texttt{MR}_2\); and a mining difficulty \(D_2\). The values \(H_2\), \(\texttt{MR}_2\), and \(D_2\) are selected randomly to prevent mining pools from using Bitcoin block PoW solutions as a valid mini-PoW solution.
The smart contract satisfies the following conditions:
\begin{itemize}
    \item If a user submits $\texttt{nonce}_2$ that satisfies the inequality $\texttt{Hash}\big( H_2, \texttt{MR}_2, \texttt{nonce}_2\big) \leq \frac{1}{D_2}$ (successfully mines a mini-PoW), a time lock is activated. If, before the expiration of the time lock, the smart contract deployer submits $\texttt{nonce}_1$ and a Merkle root $\texttt{MR}_1$ that satisfies the inequality $\texttt{Hash}\big( H_1, \texttt{MR}_1, \texttt{nonce}_1\big) \leq \frac{1}{D_1}$ (reveals the hidden block\footnote{In Appendix~\ref{appendix_reveal_transaction}, we discuss a smart contract that also forces revealing transactions.}), the user can withdraw $\texttt{br}_2$; otherwise, it can withdraw a larger bribe of $\texttt{br}_1$. 
    \item If a user submits a proof of mining a valid supporting block on top of the adversarial block during the fork race, they can withdraw $\texttt{br}_3$. The user should submit $H_3$, $\texttt{nonce}_3$, and $\texttt{MR}_3$ that satisfy the equality $\texttt{Hash}(H_3) = \texttt{Hash}(H_\mathcal{A})$ and the inequality $\texttt{Hash}\big( H_3, \texttt{MR}_3, \texttt{nonce}_3\big) \leq \frac{1}{D_1}$.
\end{itemize}
% \footnote{This version of the smart contract only forces the adversary to publish the block header. By adding more conditions, we can extend the contract to also compel the adversary to publish the full contents of the block.}
\begin{lemma} \label{lemma_lying}
    Assume the adversarial mining pool has no hidden block to reveal upon the submission of a valid mini-PoW. In an environment with incentivizing factor $\epsilon$, the petty-compliant mining pools are incentivized to mine for a mini-PoW rather than a Bitcoin block if $\texttt{br}_1 \ge \frac{D_2}{D_1} (1+\epsilon)$.
\end{lemma}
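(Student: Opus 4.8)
The plan is to reduce the statement to a single marginal-revenue comparison for an arbitrary petty-compliant pool $p_j$ of mining share $\alpha_j$: over one natural time unit---the expected interval between two consecutive Bitcoin blocks---I compare the expected return of devoting $p_j$'s hash power to the Bitcoin puzzle of difficulty $D_1$ against the expected return of devoting it to the mini-PoW puzzle of difficulty $D_2 < D_1$ announced in the smart contract. By the $(\epsilon,D)$-semi-rational fork choice rule, $p_j$ prefers the mini-PoW precisely when the latter return exceeds the former by at least $\epsilon\alpha_j R$, so the whole argument is: compute the two returns, then solve that inequality for $\texttt{br}_1$.

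First I would evaluate the Bitcoin option. Under the hypothesis of the lemma the adversary holds no block, so whatever $p_j$ might be told, any Bitcoin block $p_j$ actually mines has no genuine competitor (the adversary cannot produce a valid $(\texttt{nonce}_1,\texttt{MR}_1)$ with $\texttt{Hash}(H_1,\texttt{MR}_1,\texttt{nonce}_1)\le 1/D_1$ within the contract's short time lock) and therefore enters the canonical chain with full reward $R$. Over one block interval the network mines one block in expectation and $p_j$ is its miner with probability $\alpha_j$, so the expected return of mining Bitcoin is $r_{\mathrm{BTC}} = \alpha_j R$, exactly mirroring the $r_1=\alpha_j R$ computations of Lemmas~\ref{lemma:bribing_attack_whale_trransaction} and~\ref{lemma_bribery_smart_complinat}. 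Charging $p_j$ the full $R$ (rather than a fork-race--discounted value) is the conservative choice: it only makes the Bitcoin option more attractive, hence the threshold on $\texttt{br}_1$ harder to meet, so the resulting condition remains sufficient even when $p_j$ anticipates a fork race.

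Next I would evaluate the mini-PoW option. Since the mini-PoW acceptance test is $\texttt{Hash}(\cdot)\le 1/D_2$ versus $\texttt{Hash}(\cdot)\le 1/D_1$ for Bitcoin, each hash solves a mini-PoW $D_1/D_2$ times as often as it solves a Bitcoin block, so over the same one-block interval $p_j$ submits $\alpha_j D_1/D_2$ mini-PoWs in expectation. By the lemma's hypothesis each submitted mini-PoW triggers the $\texttt{br}_1$ branch of the contract---the adversary, having no block, cannot reveal before the time lock expires---and after each payout the adversary re-announces, so $p_j$ keeps collecting at this rate. Hence the expected return of the mini-PoW option over one block interval is $r_{\mathrm{mini}} = \alpha_j \frac{D_1}{D_2}\,\texttt{br}_1 R$.

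Finally I would impose $r_{\mathrm{mini}} \ge r_{\mathrm{BTC}} + \epsilon\alpha_j R$, i.e. $\alpha_j\frac{D_1}{D_2}\texttt{br}_1 R \ge \alpha_j R + \epsilon\alpha_j R$; dividing by $\alpha_j R$ and rearranging yields $\texttt{br}_1 \ge \frac{D_2}{D_1}(1+\epsilon)$, which is the claim. The main obstacle is not the algebra but pinning down two modelling facts cleanly: (a) that $p_j$'s mini-PoW solution rate scales exactly by the difficulty ratio $D_1/D_2$ relative to its Bitcoin block rate, and (b) that a lying adversary is forced into the $\texttt{br}_1$ branch, i.e. it cannot both advertise a hidden block and, within the short time lock, mine and reveal a real one, so that every mini-PoW $p_j$ produces is worth the full $\texttt{br}_1 R$. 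Granting these, the comparison collapses to the same per-hash revenue comparison used throughout the paper.
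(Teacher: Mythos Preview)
Your proposal is correct and follows essentially the same approach as the paper: compute the per-unit-time expected return of Bitcoin mining ($\alpha_j R$) versus mini-PoW mining ($\alpha_j\tfrac{D_1}{D_2}\texttt{br}_1 R$), apply the $\epsilon$-threshold inequality $r_{\mathrm{mini}}\ge r_{\mathrm{BTC}}+\epsilon\alpha_j R$, and solve for $\texttt{br}_1$. The paper's proof is slightly terser (it simply introduces the two rates $\lambda_1$ and $\lambda_2=\tfrac{D_1}{D_2}\lambda_1$ without the extra justification you give for points (a) and (b)), but the argument is identical.
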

\begin{proof}
    Let $\lambda_1$ denote the Bitcoin mining rate. In this case, the mini-PoW mining rate is equal to $\lambda_2=\frac{D_1}{D_2}$. The expected return of mining a Bitcoin block for mining pool $p_j$ is equal to $r_1 = \alpha_j \lambda_1 R$. The expected return of mining a mini-PoW for mining pool $p_j$ is equal to $r_2 = \alpha_j \lambda_2 \texttt{br}_1 R$. The mining pool $p_j$ is incentivized to mine a mini-PoW if $r_2 \ge r_1 + \epsilon \alpha_j \lambda_1 R$. We have:
    \begin{equation}
        r_2 \ge r_1 + \epsilon \alpha_j \lambda_1 R \Longleftrightarrow \frac{D_1}{D_2} \texttt{br}_1 \ge 1 + \epsilon \Longleftrightarrow 
        \texttt{br}_1 \ge \frac{D_2}{D_1} (1+\epsilon) \enspace .
    \end{equation}
\null\hfill\qedsymbol\end{proof}
% The proof of  is presented in Appendix~\ref{appendix:proof_lemma_lying}.
% Lemma~\ref{lemma_lying} demonstrates that by setting $\texttt{br}_1 = \frac{D_2}{D_1} (1+\epsilon)$, if the adversarial mining pool is dishonest about its claim, it becomes profitable for a petty-compliant mining pool to mine a mini-PoW. 

\begin{lemma} \label{lemma_distraction_wasting_power}
    Let \(D_1\) denote the Bitcoin mining difficulty. Consider an adversarial mining pool with a mining share \(\alpha_\mathcal{A}\), which offers a normalized bribe \(\texttt{br}\) for mining each mini-PoW of difficulty \(D_2\). Assume \(k\) represents the average number of valid mini-PoW solutions mined per epoch during the distraction attack. If no adversarial block is orphaned, the time-averaged profit of the adversarial mining pool under the distraction attack exceeds its profit under the honest strategy for any value of \(k\), as long as \(\texttt{br} < \frac{D_2}{D_1} \alpha_\mathcal{A}\).
\end{lemma}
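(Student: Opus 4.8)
The plan is to mirror, almost line for line, the epoch-level accounting used in the proof of Theorem~\ref{theorem_bribery attack_long_term}, with the hashing work diverted onto the mini-PoW puzzle playing the role that orphaned blocks play there. Fix an epoch of $L = 2016$ canonical blocks, write $R$ for the block reward and $\lambda$ for the honest block mining rate, so that with no attack an epoch lasts $L/\lambda$ time units and the adversary collects $\alpha_\mathcal{A} L$ of the $L$ canonical blocks. Since by hypothesis an average of $k$ valid mini-PoW solutions are produced per epoch, the diverted work per epoch is (on average) constant, so the difficulty adjustment mechanism reaches a steady state in which the Bitcoin block rate is restored to $\lambda$ and each attack epoch again lasts $L/\lambda$; this is the exact analogue of the ``well-adjusted difficulty'' step in Theorem~\ref{theorem_bribery attack_long_term}.

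Next I would quantify the effective mining power destroyed. A mini-PoW of difficulty $D_2$ requires, in expectation, the same hashing effort as a $\tfrac{D_2}{D_1}$-fraction of a Bitcoin block of difficulty $D_1$, so the $k$ mini-PoW solutions mined per epoch consume hashing effort equal to $\tfrac{D_2}{D_1}k$ Bitcoin-block-equivalents that would otherwise have extended the Bitcoin chain. I would then argue by conservation of hash work over a whole epoch: the total network hash rate is unchanged; the adversarial pool never mines mini-PoW (it always extends the Bitcoin chain, either its hidden block or the block after it); and, using the ``no adversarial block is orphaned'' hypothesis, none of the adversary's hashing effort is lost in a fork race. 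Hence the adversary's share of the epoch's total hash work (Bitcoin plus mini-PoW) is exactly $\alpha_\mathcal{A}$, and since that total equals $L + \tfrac{D_2}{D_1}k$ Bitcoin-block-equivalents, the adversary ends an attack epoch with $\alpha_\mathcal{A}\!\left(L + \tfrac{D_2}{D_1}k\right)$ canonical Bitcoin blocks — precisely the analogue of the count $\alpha_\mathcal{A}(L+k)$ in Theorem~\ref{theorem_bribery attack_long_term}.

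It then remains to assemble revenue, cost and duration. Per epoch the adversary earns $\alpha_\mathcal{A}\!\left(L + \tfrac{D_2}{D_1}k\right)\! R$ in block rewards and pays $k\,\texttt{br}\,R$ in bribes (one normalized bribe $\texttt{br}$ per mini-PoW; the ordinary mining cost is ignored, as it is identical to the honest one), over a duration $L/\lambda$. This gives
\begin{equation}
    \texttt{profit}_\texttt{dist} = \frac{\lambda\Bigl(\alpha_\mathcal{A}(L + \tfrac{D_2}{D_1}k)R - k\,\texttt{br}\,R\Bigr)}{L} = \lambda\alpha_\mathcal{A} R + \frac{\lambda k R}{L}\Bigl(\tfrac{D_2}{D_1}\alpha_\mathcal{A} - \texttt{br}\Bigr),
\end{equation}
whereas honest mining yields $\texttt{profit}_\mathcal{H} = \lambda\alpha_\mathcal{A} R$. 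Comparing the two, $\texttt{profit}_\texttt{dist} > \texttt{profit}_\mathcal{H}$ holds for every $k \ge 1$ exactly when $\tfrac{D_2}{D_1}\alpha_\mathcal{A} > \texttt{br}$, i.e.\ when $\texttt{br} < \tfrac{D_2}{D_1}\alpha_\mathcal{A}$, which is the claimed bound.

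The step I expect to be the main obstacle is the second one: justifying rigorously that the mini-PoW effort translates into exactly $\tfrac{D_2}{D_1}k$ Bitcoin-block-equivalents of lost effective Bitcoin mining power per epoch, hence into $\alpha_\mathcal{A}\tfrac{D_2}{D_1}k$ extra canonical blocks for the adversary. The delicate points are the interleaving of ``puzzle active'' and ``puzzle idle'' phases within an epoch, the fact that the per-epoch Bitcoin difficulty is itself reduced (so one must reason with hash-work totals rather than block counts, invoking the steady-state epoch length only at the end), and the essential use of the no-orphan hypothesis to ensure the adversary wastes no hashing effort. Carrying this out through a conservation-of-work argument over a full epoch, rather than tracking the phases explicitly, is what keeps the computation as short as that of Theorem~\ref{theorem_bribery attack_long_term}.
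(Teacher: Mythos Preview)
Your proposal is correct and follows essentially the same approach as the paper: both mirror the epoch-level accounting of Theorem~\ref{theorem_bribery attack_long_term}, convert the $k$ mini-PoW solutions into $\tfrac{D_2}{D_1}k$ wasted non-adversarial Bitcoin-block-equivalents, obtain $\alpha_\mathcal{A}(L+\tfrac{D_2}{D_1}k)$ adversarial blocks per epoch, and arrive at the identical profit expression and threshold $\texttt{br}<\tfrac{D_2}{D_1}\alpha_\mathcal{A}$. If anything, your discussion of the conservation-of-work step and the role of the ``no adversarial block orphaned'' hypothesis is more careful than the paper's own proof, which simply asserts the replacement count.
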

\begin{proof}
    The proof is similar to the proof of Theorem~\ref{theorem_bribery attack_long_term} presented in~\ref{appendix:bribery_theorem_proof}. Since \(k\) mini-PoW solutions are mined per epoch, an expected \(\frac{D_2}{D_1}k\) non-adversarial blocks are wasted per epoch, which must be replaced by new blocks. Of these new \(\frac{D_2}{D_1}k\) blocks, \(\alpha_\mathcal{A}\frac{D_2}{D_1}k\) are adversarial.
    Therefore, the time-averaged profit of the distraction attack can be obtained as follows:
    \begin{equation}
        \begin{split}
            & \texttt{profit}_\texttt{Distraction} = \frac{\texttt{Rev}-\texttt{Cost}}{\texttt{duration}} = \frac{\lambda\big(\alpha_\mathcal{A}(L+\frac{D_2}{D_1}k)R - k\texttt{br}R\big)}{L} =  \\
            & \lambda \alpha_\mathcal{A}R + \frac{\lambda k R}{L} (\frac{D_2}{D_1}\alpha_\mathcal{A}-\texttt{br}) \enspace.
        \end{split}
    \end{equation}
    The time-averaged profit of honest mining is equal to $\texttt{profit}_\mathcal{H}=\lambda \alpha_\mathcal{A}R$. The distraction attack can strongly dominate the honest strategy if the following inequalities hold: 
    \begin{equation}
        \begin{split}
            &\texttt{profit}_\texttt{Distraction} > \texttt{profit}_\mathcal{H} \Longleftrightarrow \lambda \alpha_\mathcal{A}R + \frac{\lambda k R}{L} (\frac{D_2}{D_1}\alpha_\mathcal{A}-\texttt{br}) > \lambda \alpha_\mathcal{A}R \\
            &\Longleftrightarrow \frac{D_2}{D_1}\alpha_\mathcal{A} > \texttt{br} \enspace.
        \end{split}
    \end{equation}
\null\hfill\qedsymbol\end{proof}
% Lemmas~\ref{lemma_lying} and~\ref{lemma_distraction_wasting_power} are proved in Appendices~\ref{appendix:proof_lemma_lying} and~\ref{appendix:prrof_lemma_distraction_wasting_power}, respectively. 
According to Lemma~\ref{lemma_lying}, if the adversary has no hidden Bitcoin block to reveal, it must set the bribe for each mini-PoW to \(\texttt{br}_1 = \frac{D_2}{D_1} (1+\epsilon)\) to incentivize miners to comply with the distraction attack. However, Lemma~\ref{lemma_distraction_wasting_power} shows that for the distraction attack to be profitable for the adversary, the maximum bribe the adversary can offer per mini-PoW is \(\frac{D_2}{D_1} \alpha_\mathcal{A}\), which is less than \(\texttt{br}_1\). Therefore, as expected, if the adversary has no hidden block, the distraction attack is not profitable for him. For the distraction attack to be profitable, upon submission of a valid mini-PoW, the adversary must be able to reveal a valid Bitcoin block. In this case, rather than offering a large bribe of \(\texttt{br}_1 = \frac{D_2}{D_1} (1+\epsilon)\), each mini-PoW receives a reward \(\texttt{br}_2\), which, according to Lemma~\ref{lemma_distraction_wasting_power}, should be set to \(\texttt{br}_2 < \frac{D_2}{D_1} \alpha_\mathcal{A}\).

To demonstrate the feasibility of the distraction attack, we need to show that even if the adversary has a hidden Bitcoin block, petty-compliant pools are still incentivized to mine mini-PoWs with a reward \(\texttt{br}_2 < \frac{D_2}{D_1} \alpha_\mathcal{A}\) instead of Bitcoin mining. To do this, we must compare the expected returns of Bitcoin mining and mini-PoW mining for a petty-compliant mining pool \(p_i\). 

\subsection{Expected Return Calculations}\label{appendix: expected_return_distraction}
Let \(r_\texttt{mini-PoW}\) and \(r_\texttt{Bitcoin}\) denote the expected returns of petty-compliant pool \(p_i\) for mini-PoW mining and Bitcoin mining, respectively, under the assumption that the adversary has a hidden block to reveal.
% A comprehensive analysis to obtain \(r_\texttt{mini-PoW}\) and \(r_\texttt{Bitcoin}\) is presented in Appendix~\ref{appendix: expected_return_distraction}. 
% Knowing that the petty-complinat mining pools follow the attack strategy if the adversary is lying about its cliam beacuase of the large bribe of \(\texttt{br}_1 = \frac{D_2}{D_1} (1+\epsilon)\), to establish that petty-compliant mining pools will adopt the attack strategy, we must demonstrate that they are incentivized to mine a mini-PoW even if the adversary is truthful about its claim. Under the truthfulness of the adversary, a mini-PoW solution can collect a bribe of $\texttt{br}_2$. To this end, we need to compare the expected returns of Bitcoin mining and mini-PoW mining for a petty-compliant mining pool \(p_i\), under the assumption that the adversary has a hidden block to reveal. We denote by $r_\texttt{mini-PoW}$ and $r_\texttt{Bitcoin}$ the expected return of the petty-compliant mining pool \(p_i\) for mini-PoW mining and Bitcoin mining, respectively. 
In an $\epsilon$-semi-rational environment, the petty-compliant mining pool $p_i$ is incentivized to mine a mini-PoW if the following inequality holds:
$r_\texttt{mini-PoW} \ge r_\texttt{Bitcoin} + \epsilon \alpha_i \lambda R$.
We define the normalized expected return difference as 
\begin{equation}
    \Delta = \frac{r_\texttt{mini-PoW}-r_\texttt{Bitcoin}}{\alpha_i \lambda R} \enspace .
\end{equation}
If $\Delta \ge \epsilon$, mining a mini-PoW is the dominant strategy for mining pool $p_i$. We also define the difficulty ratio \(d\) as the ratio of Bitcoin difficulty to mini-PoW difficulty, i.e., \(d = \frac{D_1}{D_2}\), where by construction $d\ge1$.
% The minimum of $\Delta$ occurs when there is no non-compliant mining pool in the network. 
We consider $4$ different mining pools: the adversarial mining pool with mining share $\alpha_\mathcal{A}$, the petty-compliant mining pool $p_i$ with mining power $\alpha_i$ that wants to decide which puzzle is more profitable, the set of compliant mining pools with total mining share of $\alpha_C$ that always mines a mini-PoW (if available), and the set of non-compliant mining pools with total mining share of $\alpha_{NC}$ that always mines a Bitcoin block. Note that, under this attack, there is a possibility that some mining pools may decide to stop mining. We do not consider this scenario, as their decision to stop mining would signify the attack success.

There are $3$ different states that can take place: state $s_0$ where all the mining pools mine on top of the Bitcoin longest chain, state $s_1$ where the adversarial mining pool has a hidden Bitcoin block and the hash distraction contract is published, and state $s_2$ where there is a fork race between the adversarial block and a semi-rational block. According to the attack description, there is a bribe of $\texttt{br}_3 = \epsilon$ on top of the adversarial fork in state $s_2$. The state transitions are depicted in Figures~\ref{fig:first_scenario} and~\ref{fig:second_scenario}. For the sake of calculation simplicity, we assume if the adversarial mining pool mines a block in state $s_1$ (mines $2$ consecutive blocks), it will publish its first hidden block and repeat the attack with its second block. The petty-compliant pool $p_i$ needs to decide on its action in state $s_1$: mine a Bitcoin PoW or mine a mini-PoW.
\begin{figure*}[b]
    \centering
    \begin{subfigure}[!hb]{0.49\textwidth}
        \centering
        \includegraphics[height=1.3in]{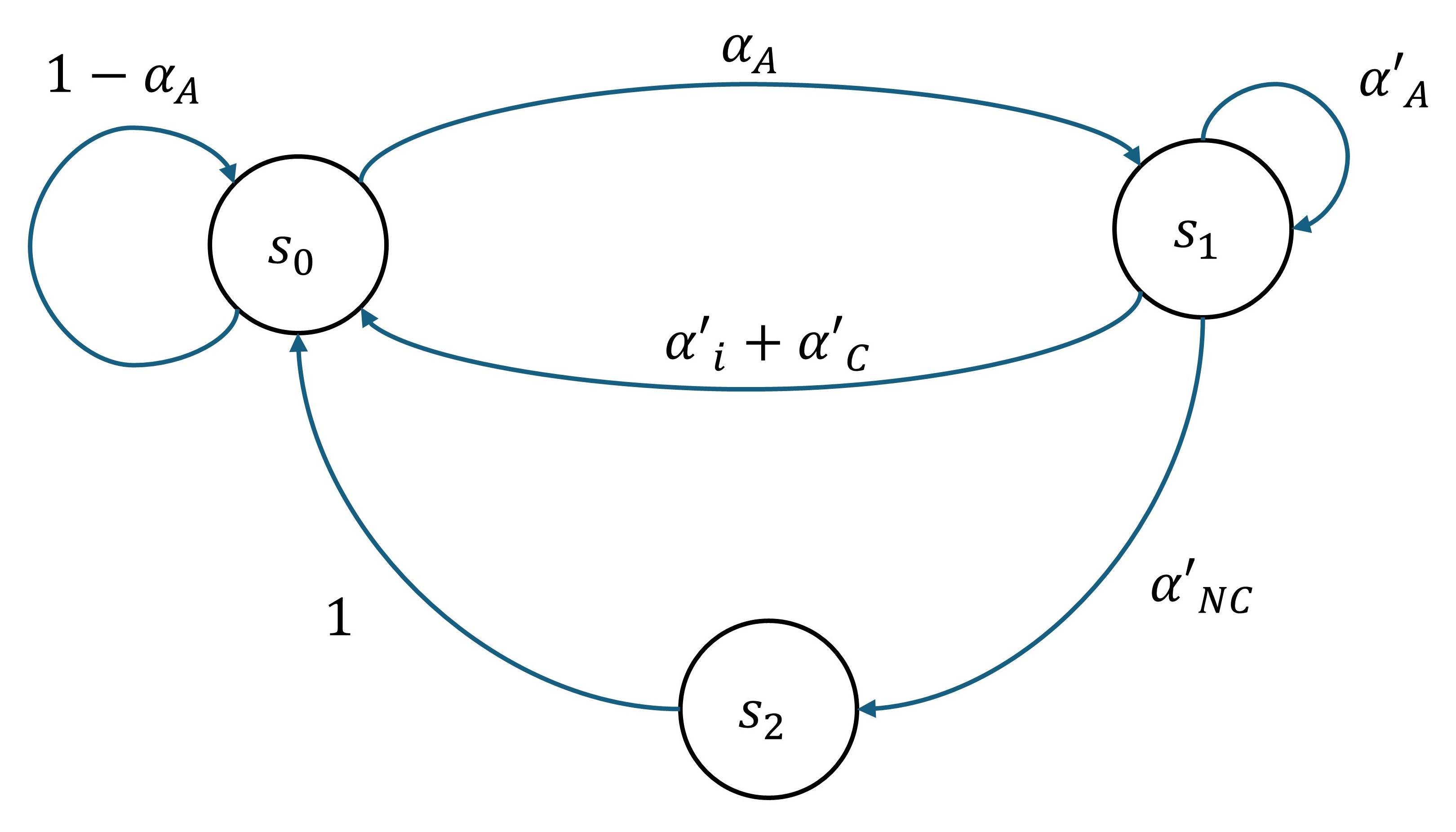}
        \caption{Mining a mini-PoW}
        \label{fig:first_scenario}
    \end{subfigure}%
    ~ 
    \begin{subfigure}[!hb]{0.5\textwidth}
        \centering
        \includegraphics[height=1.3in]{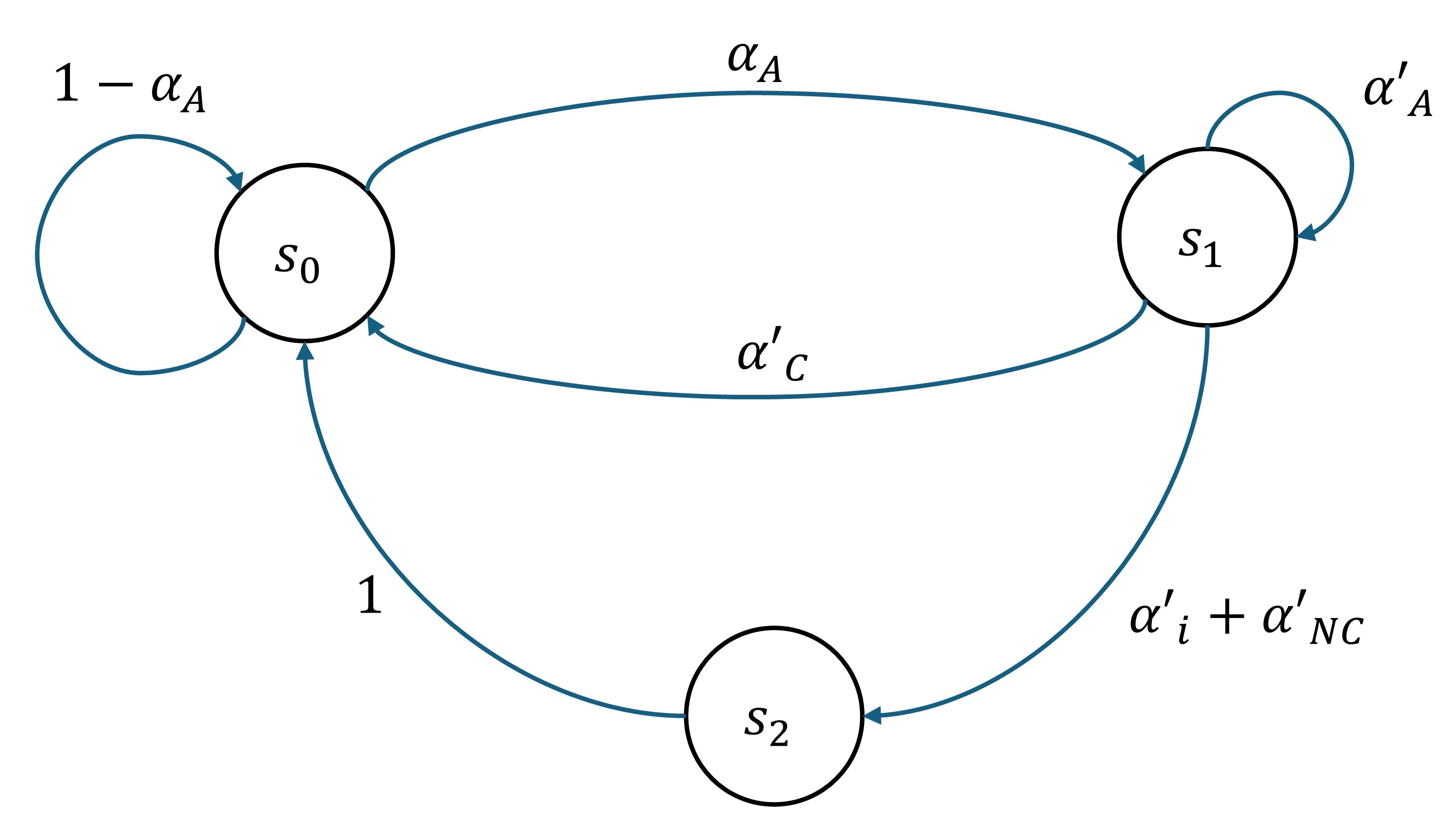}
        \caption{Mining a Bitcoin block}
        \label{fig:second_scenario}
    \end{subfigure}
    % \vspace{-10pt}
\end{figure*}

% Let $d = \frac{D_1}{D_2}$. 
In the first scenario, we consider the case that the petty-compliant pool $p_i$ mines a mini-PoW. The puzzle mining rate at states $s_0$ and $s_2$ are the same as $\lambda$. However, the puzzle mining rate at state $s_1$ is equal to $\lambda_1 = \big(d(\alpha_C+\alpha_i) + \alpha_{NC}+\alpha_\mathcal{A}\big)\lambda$. The transition probabilities depicted in Figure~\ref{fig:first_scenario} can be obtained as follows:
\begin{equation}
    \begin{split}
        & \alpha'_i = \frac{d\alpha_i}{d(\alpha_C+\alpha_i) + \alpha_{NC}+\alpha_\mathcal{A}} \enspace,\enspace  \alpha'_C = \frac{d\alpha_C}{d(\alpha_C+\alpha_i) + \alpha_{NC}+\alpha_\mathcal{A}} \enspace, \\
        & \alpha'_\mathcal{A} = \frac{\alpha_\mathcal{A}}{d(\alpha_C+\alpha_i) + \alpha_{NC}+\alpha_\mathcal{A}} \enspace,\enspace  \alpha'_{NC} = \frac{\alpha_{NC}}{d(\alpha_C+\alpha_i) + \alpha_{NC}+\alpha_\mathcal{A}} \enspace.
    \end{split}
\end{equation}
The state probabilities can be calculated as follows: 
\begin{equation}
    \begin{split}
        & p_0 = \frac{1-\alpha'_\mathcal{A}}{1-\alpha'_\mathcal{A}+\alpha_\mathcal{A}(1+ \alpha'_{NC})} \enspace, \enspace
          p_1 = \frac{\alpha_\mathcal{A}}{1-\alpha'_\mathcal{A}+\alpha_\mathcal{A}(1+ \alpha'_{NC})} \enspace, \\
        & p_2 = \frac{\alpha_\mathcal{A} \alpha'_{NC} }{1-\alpha'_\mathcal{A}+\alpha_\mathcal{A}(1+ \alpha'_{NC})} \enspace. 
    \end{split}
\end{equation}
The expected return of mining a min-PoW for pool $p_i$ is equal to
\begin{equation}
    r_1 = \alpha_i p_0 \lambda R + \alpha_i p_2 (1+\texttt{br}_3) \lambda R + \alpha'_i p_1 \texttt{br}_2 \lambda_1 R \enspace .
\end{equation}

In the second scenario, we consider the case that the petty-compliant pool $p_i$ mines a Bitcoin block. The puzzle mining rate at states $s_0$ and $s_2$ are the same as $\lambda$. However, the puzzle mining rate at state $s_1$ is equal to $\lambda_1 = \big(d\alpha_C+\alpha_i + \alpha_{NC}+\alpha_\mathcal{A}\big)\lambda$. The transition probabilities depicted in Figure~\ref{fig:second_scenario} can be obtained as follows:
\begin{equation}
    \begin{split}
        & \alpha'_i = \frac{\alpha_i}{d\alpha_C+\alpha_i + \alpha_{NC}+\alpha_\mathcal{A}} \enspace,\enspace  \alpha'_C = \frac{d\alpha_C}{d\alpha_C+\alpha_i + \alpha_{NC}+\alpha_\mathcal{A}} \enspace, \\
        & \alpha'_\mathcal{A} = \frac{\alpha_\mathcal{A}}{d\alpha_C+\alpha_i + \alpha_{NC}+\alpha_\mathcal{A}} \enspace,\enspace  \alpha'_{NC} = \frac{\alpha_{NC}}{d\alpha_C+\alpha_i + \alpha_{NC}+\alpha_\mathcal{A}} \enspace.
    \end{split}
\end{equation}
The state probabilities can be calculated as follows: 
\begin{equation}
    \begin{split}
        & p_0 = \frac{1-\alpha'_\mathcal{A}}{1-\alpha'_\mathcal{A}+\alpha_\mathcal{A}(1+ \alpha'_{NC}+\alpha'_i)} \enspace, \enspace
          p_1 = \frac{\alpha_\mathcal{A}}{1-\alpha'_\mathcal{A}+\alpha_\mathcal{A}(1+ \alpha'_{NC}+\alpha'_i)} \enspace, \\
        & p_2 = \frac{\alpha_\mathcal{A} (\alpha'_{NC}+\alpha'_i) }{1-\alpha'_\mathcal{A}+\alpha_\mathcal{A}(1+ \alpha'_{NC}+\alpha'_i)} \enspace. 
    \end{split}
\end{equation}
The expected return of mining a Bitcoin block for pool $p_i$ is equal to
\begin{equation}
    r_2 = \alpha_i p_0 \lambda R + \alpha_i p_2 (\frac{\alpha'_{NC}}{\alpha'_{NC}+\alpha'_i})(1+\texttt{br}_3)\lambda R + 2 \alpha_i p_2 (\frac{\alpha'_i}{\alpha'_{NC}+\alpha'_i})\lambda R \enspace .
\end{equation}

\subsection{Quantitative Analysis of the Distraction Attack}
For an adversarial mining power share of $\alpha_\mathcal{A}=0.4$ and the mini-PoW mining reward of $\texttt{br}_2 = 0.1 \times \alpha_\mathcal{A}$, Figure~\ref{fig:delta_distraction} represents the normalized expected return difference $\Delta$ of a petty-compliant pool based on its mining share, for different values of the difficulty ratio $d$. As can be seen, for an incentivizing factor of $\epsilon=0.02$, all petty-compliant mining pools prefer mini-PoW mining over Bitcoin mining if the difficulty ratio is set to $d \geq 5$. Note that, as the mini-PoW mining reward is set to $\frac{1}{10}$ of the adversarial mining share, according to Lemma~\ref{lemma_distraction_wasting_power}, the maximum difficulty ratio that can result in a profitable distraction attack for the adversary is $d=10$. For the same mini-PoW reward, the smaller the difficulty ratio, the more profitable the distraction attack would be for the adversary. However, setting the difficulty ratio to a low value limits the portion of petty-compliant miners who follow mini-PoW mining.

Figure~\ref{fig:percentage_increase_distraction} depicts the percentage increase in the adversarial reward share as a function of the mining share of the largest non-adversarial mining pool in the network, for a mini-PoW mining reward of $\texttt{br}_2 = 0.1 \times \alpha_\mathcal{A}$ and $\epsilon=0$. To generate Figure~\ref{fig:percentage_increase_distraction}, we calculated the minimum difficulty ratio that makes mini-PoW mining the dominant strategy for all petty-compliant mining pools. Using this difficulty ratio, we then determined the corresponding adversarial reward share.
% In Figure~\ref{fig:distraction}, for different values of $d$, $\Delta$ is depicted as the function of mining share $\alpha_i$ for the case where $\alpha_\mathcal{A}=0.4$, $\alpha_{NC}=0$, $\texttt{br}_2 = 0.1 * \alpha_\mathcal{A}$, and $\texttt{br}_3 = \epsilon$. Note that for an adversarial mining pool with mining share $\alpha_\mathcal{A}$, the distraction attack is profitable as long as the normalized amount of bribe is less than $\frac{\alpha_\mathcal{A}}{d}$. To this end, $\texttt{br}_2$ is set to $ 0.1 * \alpha_\mathcal{A}$ to guarantee the profitability of the attack under different values of $d$ in Figure~\ref{fig:distraction}. As can be seen, for small values of $\epsilon$, almost all the adversarial mining pools are incentivized to mine a mini-PoW.
\begin{figure*}[t] % Use [htbp] for better placement
    \centering
    \begin{subfigure}[t]{0.49\textwidth} 
        \centering
        \includegraphics[width=\linewidth]{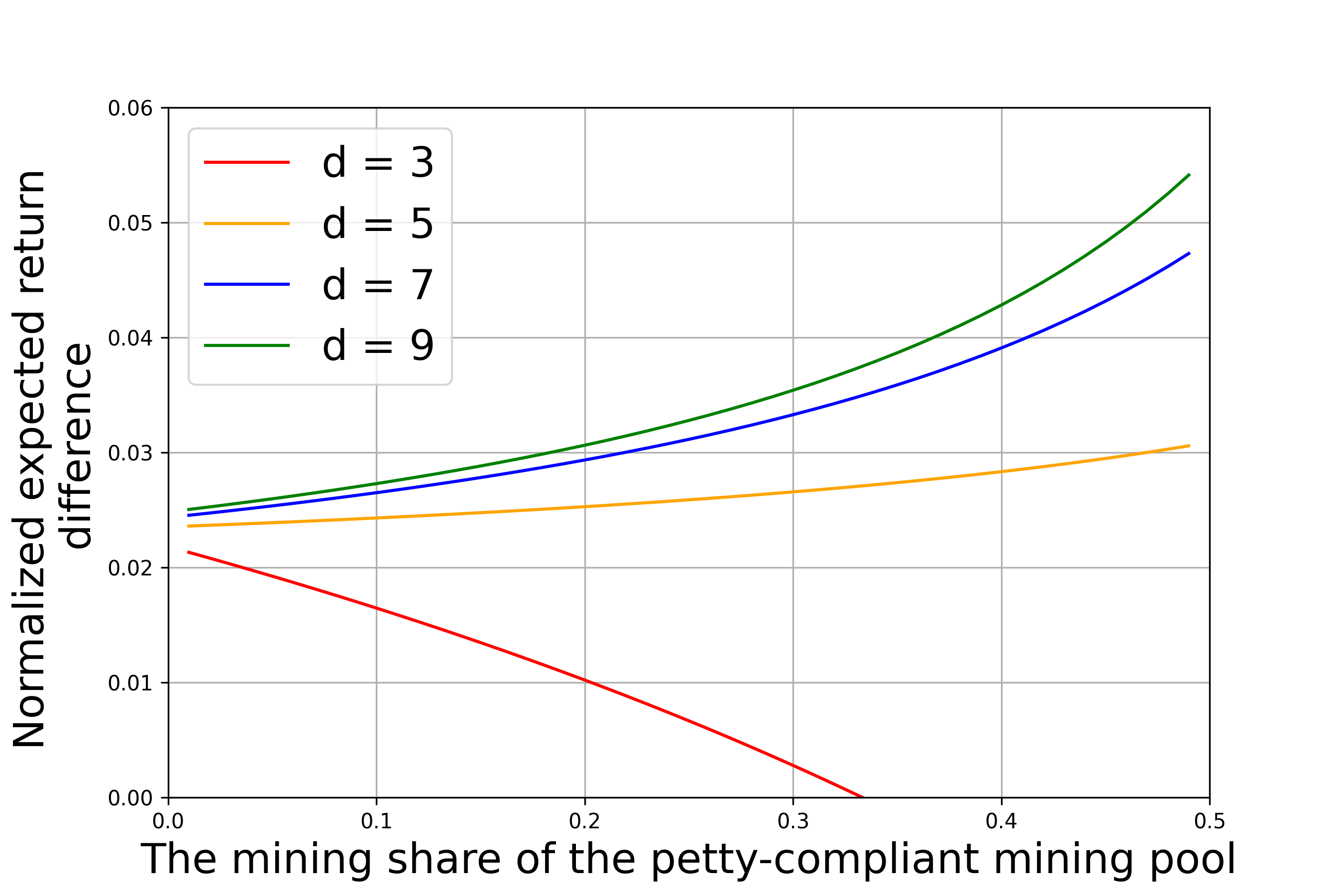}
        \caption{Mini-PoW mining vs Bitcoin mining.}
        \label{fig:delta_distraction}
    \end{subfigure}%
    \hfill 
    \begin{subfigure}[t]{0.49\textwidth}
        \centering
        \includegraphics[width=\linewidth]{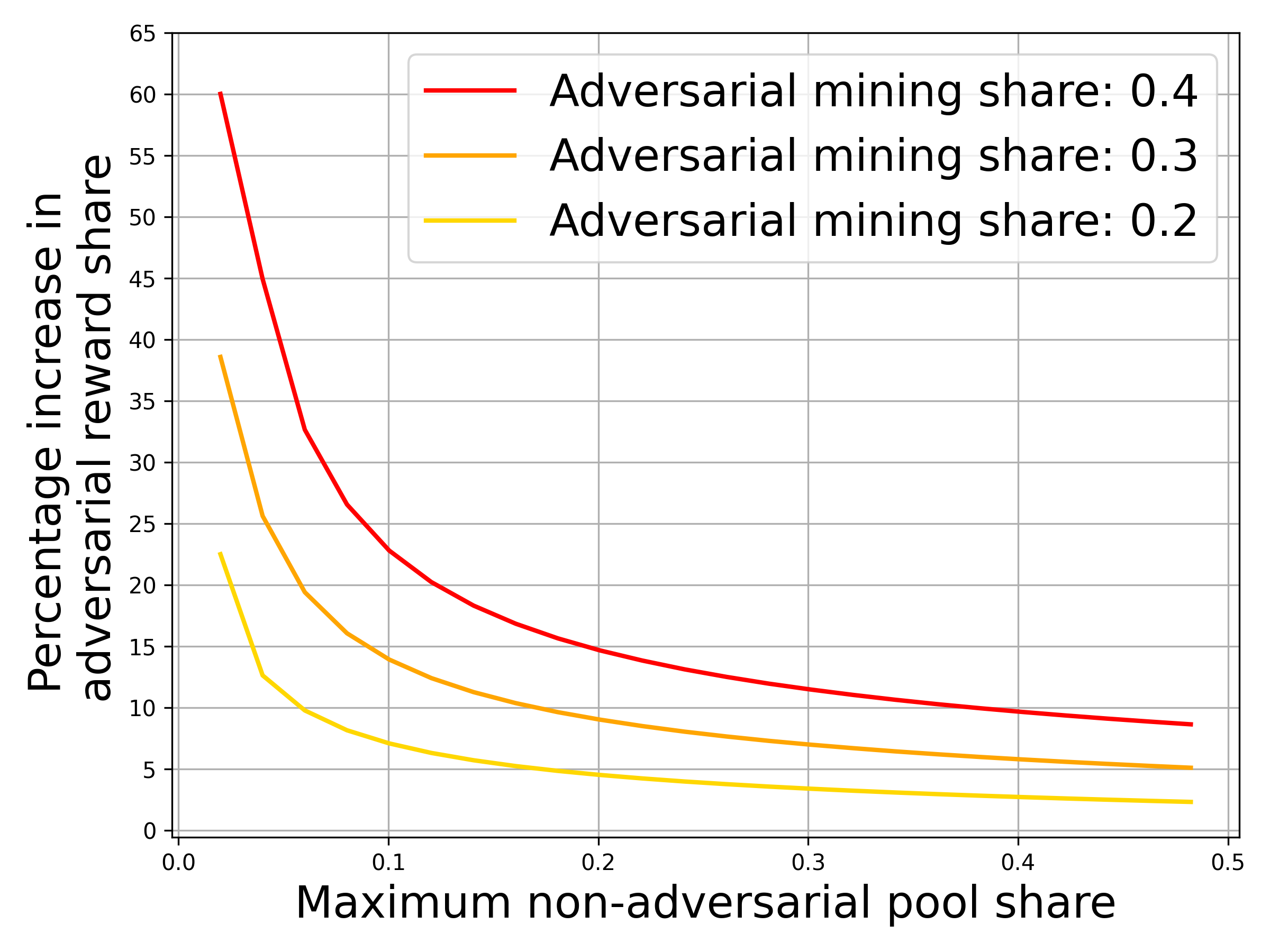} 
        \caption{Percentage increase in reward share.}
        \label{fig:percentage_increase_distraction}
    \end{subfigure}%
    \vspace{-10 pt}
    \caption{Quantitative analysis of the distraction attack.}
\end{figure*}
% \begin{figure}[t]
%     \centering
%     \includegraphics[height=1.8in]{}
%     \caption{The distraction attack}
%     \label{fig:distraction}
%     % \vspace{-5pt}
% \end{figure}
In Appendix~\ref{appnedix:extended_distraction attack}, we discuss how the distraction attack can be further extended to be applicable using non-adversarial blocks.

\subsection{BDoS attack Description}\label{Appendix:BDoS}
Let $B_0$ denote the tip of the canonical chain. In the BDoS attack, when the adversary mines a new block $B_1$ on top of $B_0$, it only publishes the header of block $B_1$. As a result, other miners are unable to mine on top of $B_1$ since they cannot confirm the correctness of the transactions included in it. This situation places miners in a dilemma: whether to stop mining or continue mining on top of block $B_0$. Mining on top of $B_0$ is risky; if they successfully mine another block on top of it, that block will enter a fork race with block $B_1$, potentially leading to it being orphaned. Under a BDoS attack, a victim pool may decide to stop mining, depending on the mining costs incurred. If the mining reward is significantly higher than the mining cost, the miner will likely continue mining on top of block $B_0$. The BDoS attack primarily affects mining pools whose mining revenue is only slightly above their mining costs, making it ineffective against miners with a higher profit-to-loss ratio. In contrast, the introduced distraction attack can disincentivize mining on Bitcoin, even when miners assume that there are no mining costs. In our distraction attack, once the adversarial mining pool announces mining of the next block, other mining pools face three possibilities: 1) continue mining on top of the Bitcoin chain, 2) completely stop mining, or 3) engage in a mini-PoW mining. The first two possibilities are those covered in the BDoS attack. The third possibility, which is new to the distraction attack, expands the effect of the attack to a broader range of mining pools rather than only those suffering from high mining costs.

Another point regarding the BDoS attack is that the adversarial mining pool always faces a fork race, which can result in the orphaning of its block. However, in our distraction attack, by publishing the block after solving a mini-PoW puzzle, the adversary can avoid the fork race and save its block while still conducting a successful mining power destruction attack.

\subsection{Revealing Transactions}\label{appendix_reveal_transaction}
To ensure the adversary reveals block transactions, the smart contract must provide a URL to all transactions (excluding the coinbase transaction), making them visible to all pools. The adversarial pool must reveal the coinbase transaction, which includes an extra nonce, upon submitting a valid mini-PoW. The contract should also include parameters to verify that the Merkle root of the transactions and the coinbase transaction matches the stored Merkle root. This smart contract allows petty-compliant mining pools to verify the correctness of block transactions before starting mini-PoW mining.

% \subsection{Proof of Lemma~\ref{lemma_lying}}\label{appendix:proof_lemma_lying}

% \subsection{Proof of Lemma~\ref{lemma_distraction_wasting_power}}\label{appendix:prrof_lemma_distraction_wasting_power}

\subsection{Distraction Attack Using the Non-Adversarial Blocks}\label{appnedix:extended_distraction attack}
In Section~\ref{sec:distraction}, we analyzed the mining power distraction attack conducted by an adversarial mining pool using its own blocks. 
In the extended attack, the adversarial mining pool incentivizes a petty-compliant mining pool to only publish its block to the adversarial mining pool while hiding it from the other petty-compliant mining pools. The block gets revealed to other petty-compliant mining pools only upon receiving a valid mini-PoW.

\paragraph{Attack Description} The adversarial mining pool requests a petty-compliant mining pool to only share its block $B$ with the adversarial mining pool. Once the adversarial pool verifies the validity of block $B$, it deploys the same contract as mentioned in Section~\ref{sec:distraction} for block $B$, where block $B$ is hidden from all petty-compliant mining pools except for its miner. In addition to the $3$ bribes discussed in Section~\ref{sec:distraction}, the adversarial mining pool deposits another normalized bribe $\texttt{br}_4 = \epsilon$ in the smart contract. The miner of block $B$ is eligible to withdraw $\texttt{br}_4$ if a user submits a valid mini-PoW.

Note that in this attack, there is no need for the miner of block $B$ and the adversarial mining pool to trust each other as the attack is risk-free for both of them. If the petty-compliant mining pool shares its block only with the adversarial mining pool and observes that the adversarial pool does not deploy the contract, it can promptly publish the block to all other mining pools. Similarly, if the adversarial mining pool deploys the contract for the block but then the block gets published by its miner, the adversarial mining pool does not lose anything as the bribe is payable only upon the submission of a mini-PoW.
% \begin{table}[ht]
%     \centering
%     \begin{tabular}{|c|c|c|c|c|c|c|c|c|c|c|c|c|c|c|c|}
%         \hline
%         \textbf{Foundry USA} & \textbf{AntPool} & \textbf{ViaBTC} & \textbf{F2Pool} & \textbf{Unknown} & \textbf{Binance Pool} & \textbf{Mara Pool} & \textbf{SBI Crypto} & \textbf{Braiins Pool} & \textbf{BTC.com} & \textbf{Poolin} & \textbf{BTC M4} & \textbf{Ultimus} & \textbf{1THash} & \textbf{Solo CKPool} & \textbf{KanoPool} \\
%         \hline
%         28.943 & 25.052 & 12.200 & 12.025 & 5.868 & 4.881 & 3.314 & 2.156 & 1.647 & 1.549 & 1.025 & 0.881 & 0.361 & 0.057 & 0.034 & 0.007 \\
%         \hline
%     \end{tabular}
%     \caption{Mining Power Distribution of Bitcoin Mining Pools}
%     \label{tab:MiningPowerDistribution}
% \end{table}
% \begin{table}[ht]
%     \centering
%     \begin{tabularx}{\textwidth}{|X|X|X|X|X|X|X|X|X|X|X|X|X|X|X|X|}
%         \hline
%         \textbf{Foundry USA} & \textbf{AntPool} & \textbf{ViaBTC} & \textbf{F2Pool} & \textbf{Unknown} & \textbf{Binance Pool} & \textbf{Mara Pool} & \textbf{SBI Crypto} & \textbf{Braiins Pool} & \textbf{BTC.com} & \textbf{Poolin} & \textbf{BTC M4} & \textbf{Ultimus} & \textbf{1THash} & \textbf{Solo CKPool} & \textbf{KanoPool} \\
%         \hline
%         28.943 & 25.052 & 12.200 & 12.025 & 5.868 & 4.881 & 3.314 & 2.156 & 1.647 & 1.549 & 1.025 & 0.881 & 0.361 & 0.057 & 0.034 & 0.007 \\
%         \hline
%     \end{tabularx}
%     \caption{Mining Power Distribution of Bitcoin Mining Pools}
%     \label{tab:MiningPowerDistribution}
% \end{table}

% \section{Mining Power Distraction Attack}\label{sec:distraction}

\end{document}